\newtheorem{theorem}{Theorem}
\newtheorem{lemma}[theorem]{Lemma}
\newtheorem{corollary}[theorem]{Corollary}
\newtheorem{definition}[theorem]{Definition}
\newcommand{\on}{\rm on}
\newcommand{\off}{\rm off}
\newcommand{\haar}{\text{Haar}}
\newcommand{\ket}[1]{|#1\rangle}
\newcommand{\bra}[1]{\langle #1|}
\newcommand{\braket}[2]{\langle #1|#2\rangle}
\newcommand{\ketbra}[2]{| #1\rangle\! \langle #2|}
\newcommand{\parens}[1]{\left( #1 \right)}
\newcommand{\brackets}[1]{\left[ #1 \right]}
\newcommand{\abs}[1]{\left| #1 \right|}
\newcommand{\norm}[1]{\left|\left| #1 \right|\right|}
\newcommand{\set}[1]{\left\{ #1 \right\}}
\newcommand{\EE}{\mathbb{E}}
\newcommand{\TT}{\mathcal{T}}
\newcommand{\prob}[1]{{\rm Prob}\left[ #1 \right]}
\newcommand{\bigo}[1]{O\left(#1\right)}
\newcommand{\bigotilde}[1]{\widetilde{O} \left( #1 \right)}
\newcommand{\ts}{\textsuperscript}
\DeclareMathOperator{\Tr}{Tr}
\newcommand{\trace}[1]{\Tr \brackets{ #1 }}
\newcommand{\partrace}[2]{\Tr_{#1} \brackets{ #2 }}
\newcommand{\hilb}{\mathcal{H}}
\newcommand{\partfun}{\mathcal{Z}}
\newcommand{\identity}{\mathds{1}}
\DeclareMathOperator{\sinc}{sinc}
\begin{document}

\title{ {The Thermodynamic Cost of Ignorance:} \\
Thermal State Preparation with One Ancilla Qubit}

\author{Matthew Hagan}
\affiliation{
University of Toronto, Department of Physics, Toronto ON, Canada
}
\email{matt.hagan@mail.utoronto.ca}

\author{Nathan Wiebe}
\affiliation{
University of Toronto, Department of Computer Science, Toronto ON, Canada
}%
\affiliation{
Pacific Northwest National Laboratory, Richland WA, USA
}%
\affiliation{
Canadian Institute for Advanced Research, Toronto ON, Canada
}%

\begin{abstract}
In this work we investigate a model of thermalization wherein a single ancillary qubit randomly interacts with the system to be thermalized. This not only sheds light on the emergence of Gibbs states in nature, but also provides a routine for preparing arbitrary thermal states on a digital quantum computer. For desired $\beta$ and random interaction $G$, which has Haar random eigenvalues and I.I.D Gaussian eigenvalues, the routine boils down to time independent Hamiltonian simulation and is represented by the channel $\Phi : \rho \mapsto \EE_G \partrace{{\rm Env}}{e^{-i(H + \alpha G)t} \left(\rho \otimes \frac{e^{-\beta H_E}}{\partfun}\right) e^{i (H + \alpha G)t}}$. We rigorously prove that these dynamics reduce to a Markov chain process in the weak-coupling regime with the thermal state as the approximate fixed point. This allows us to upper bound the total simulation time required as a function of the Markov chain spectral gap $\lambda_\star$, which we are able to compute exactly in the ground state limit. These results are independent of any eigenvalue knowledge of the system, but we are further able to show that if one has knowledge of eigenvalue differences $\lambda_S(i) - \lambda_S(j)$, then the total simulation time is dramatically reduced. The ratio of the complete ignorance simulation cost to the perfect knowledge simulation cost scales as $\bigotilde{\frac{\norm{H_S}^7}{\delta_{\min}^7 \epsilon^{3.5} \lambda_\star(\beta)^{3.5}}}$, where $\delta_{\min}$ is related to the eigenvalue differences of the system. Additionally, we provide more specific results for single qubit and harmonic oscillator systems as well as numeric experiments with hydrogen chains. In addition to the algorithmic merits, these results can be viewed as broad extensions of the Repeated Interactions model in open quantum systems to generic Hamiltonians with unknown interactions, giving a complete picture of the thermalization process for quantum systems.
\end{abstract}

\maketitle
\newpage
\tableofcontents

\section{Introduction}

The simulation of quantum systems and materials is among the most promising applications for exponential advantages of digital quantum computers over classical computers \cite{aspuru2005simulated} \cite{reiher2017elucidating} \cite{tensorHypercontraction}. A critical step in quantum simulation algorithms, as well as other quantum algorithms such as Semi-Definite Program (SDP) solvers \cite{brandao2019sdp} and Hamiltonian learning routines \cite{anshu_sample-efficient_2021}, is the preparation of good input states, which are typically thermal states $\frac{e^{-\beta H}}{\trace{e^{-\beta H}}}$. Thermal states at low temperatures (high $\beta$) have large overlap with the ground states of the system, indicating that preparing thermal states is just as difficult as the QMA-Hard $k$-local ground state preparation problem \cite{kempe2005complexitylocalhamiltonianproblem}. 

Many classical algorithms have been developed to estimate the measurement outcomes of quantum experiments with the workhorse behind many of these algorithms typically being some kind of Metropolis-Hastings algorithm \cite{metropolis1953equation} to implement a Markov Chain Monte Carlo (MCMC) program. The Metropolis-Hastings algorithm solves the problem of sampling from arbitrary probability distributions and can be used to estimate partition functions, a \#P-Hard problem \cite{roth1996hardness}. Despite the difficulty of the problem it solves and minimal theoretic guarantees on the runtime, the Metropolis-Hastings algorithm has worked resoundingly well in practice. This is in part due to its elegant simplicity and ease of implementation. The algorithm does tend to breakdown in a few important areas though, namely quantum systems with ``sign problems" (\cite{signProblemOG}, \cite{troyer2005sign}), very high dimension systems \cite{beskos2010optimaltuninghybridmontecarlo}, and Hamiltonians with many deep local minima \cite{betancourt2018conceptualintroductionhamiltonianmonte}. The sign problem in particular serves as an important impetus for developing quantum computers, which naturally do not have to deal with it. However attempts to naively port the classical Metropolis-Hastings algorithm to quantum computers has been rather difficult due to inherent difficulties with quantum information, such as no-cloning. Initial attempts \cite{temme2011} are rather cumbersome and attempts to deal with the filtering and rejection stages relying on ``quantum unwinding" techniques developed by Marriott and Watrous \cite{marriott2005quantum}. These complications make the resulting quantum algorithms difficult to analyze. 

In recent years new approaches have been developed \cite{chen2023quantumthermalstatepreparation}, \cite{gilyen2024quantumgeneralizationsglaubermetropolis}, \cite{motlagh2024ground}, \cite{motta2019} \cite{ding2024single}, many of which are based on the simulation of Linblad operators from open quantum systems \cite{davies1974markovian}. These algorithms have seen a marked improvement in recent years, ranging from ground state preparation routines with single-ancilla overhead \cite{ding2024single} to the first constructions that satisfy a discrete-time detailed balance condition \cite{gilyen2024quantumgeneralizationsglaubermetropolis}. The correctness of many of these algorithms, such as \cite{ding2024efficientquantumgibbssamplers}, is based on satisfying the Kubo-Martin-Schwinger (KMS) condition (\cite{kms2}, \cite{kms1}) which guarantees that the thermal state is a fixed point of the dynamics. The literature on this class of algorithms is already significant and continues to grow, so we point the reader to \cite{gilyen2024quantumgeneralizationsglaubermetropolis}, \cite{dalzell2023quantumalgorithmssurveyapplications}, \cite{chen2023quantumthermalstatepreparation} and \cite{rouze2024efficientthermalizationuniversalquantum} for their thorough literature reviews. 

One of the main drawbacks to the above approaches is the sheer complexity of the resulting algorithms. These algorithms tend to rely on coherently weighted sums of Heisenberg evolved jump operators and the construction of circuits to simulate the resulting Linbladians is nontrivial, as mentioned in Section 1.2 of \cite{gilyen2024quantumgeneralizationsglaubermetropolis}. Further, these algorithms tend to require logarithmically more ancilla qubits to allow for the addition of jump operators whereas our routine explicitly utilizes only a single ancilla qubit. Turning to ground states specifically, there exists single ancilla algorithms \cite{ding2024single}  but we remark that our channel is the first general purpose thermal state preparation routine for finite $\beta$ that utilizes only one qubit explicitly. Further, our routine avoids the complication of simulating weighted Linbladians and has incredibly simple circuits only relying on time independent Hamiltonian simulation and Haar 2-designs. 

Another algorithm that avoids some of the drawbacks of Linbladian simulation is the thermal state preparation routine by Shtanko and Movassagh \cite{shtanko2023preparingthermalstatesnoiseless}. They propose two algorithms, one tailored to a specific class of Hamiltonians and another more general algorithm. The first couples the system to a large bath of ancilla qubits via random $k$-local Pauli terms but is only proven to work for input Hamiltonians satisfying the Eigenstate Thermalization Hypothesis (ETH). The second couples the system to a bath of ancilla qubits via random circuit rotations. This second algorithm works for more generic Hamiltonians, but has much weaker guarantees on the convergence time needed and involves an extra rejection step similar, but not identical to, Temme et al \cite{temme2011}. Further, both of these algorithms have runtimes with explicit $\beta$ scaling which leads to a breakdown of their analysis in the $\beta \to \infty$ limit. Improving the analysis of these algorithms to more general Hamiltonians and to the ground state limit are important open problems, as mentioned in the conclusion of \cite{shtanko2023preparingthermalstatesnoiseless}.

In this paper, we propose a novel thermal state preparation routine inspired by the classical Hamiltonian Monte Carlo algorithm (\cite{neal1993probabilistic}, \cite{hoffman2011nouturnsampleradaptivelysetting}), which is also known as Hybrid Monte Carlo. Hamiltonian Monte Carlo resolves some of the previously mentioned issues, specifically the poor scaling with high dimensional systems, by introducing extra variables that serve as momentum and using time dynamics to generate state transitions. These extra momentum variables are sampled from their thermal distribution $\propto e^{-\beta p^2}$ and the total system is then simulated with classical Hamiltonian dynamics to generate updates to the state. Our channel was developed out of efforts to port this philosophy of using extra variables and time dynamics to prepare thermal states over to quantum systems. The extra states that we add are a single ancilla qubit prepared in a thermal state and we simulate the system-ancilla pair with a randomized interaction term $G$, see Section \ref{sec:prelim} and Eq. \eqref{eq:PhiDef} for more details. 

We analyze the resulting channel in a weak-coupling regime and obtain a reduction of the quantum dynamics to an underlying Markov chain over the eigenstates of the system. This allows us to show the correctness of the procedure fairly easily by finding the fixed point of the Markov chain directly. We explicitly do not rely on KMS conditions, instead directly proving conditions on the fixed points that more resembles classical detailed balance, see Eq. \eqref{eq:detailed_balance}. One potential downside to this approach is that the convergence time of our channel is now dependent on the mixing time of a classical Markov process, which is near impossible to know a priori. This is a feature of virtually all previous thermal state preparation routines and computing the spectral gaps that dictate these relaxation times is a major open problem. We are unable to compute these spectral gaps for general systems and finite $\beta$, but we resolve both open problems posed by Shtanko and Movassagh \cite{shtanko2023preparingthermalstatesnoiseless} by extending the range of systems that can be thermalized to all non-degenerate Hamiltonians and by computing the spectral gap of the Markov chain in the ground state, $\beta \to \infty$, limit.

Although our channel was inspired by the classical Hamiltonian Monte Carlo algorithm, the end result bears a striking resemblence to the Repeated Interactions (RI) framework studied in open quantum systems \cite{ciccarello2022repeated}. This framework involves coupling small environments, typically a single photon prepared in a thermal state, with the system and repeating the process over and over. This is then studied in thermodynamic limits of infinite time, infinite interactions, and weak coupling, which serves as a key difference with our work as we are concerned with bounding finite resources for use in an algorithm. Our work could be viewed as an extension of the RI framework to arbitrary Hamiltonians with completely unknown interactions. This also highlights a key difference between our work and existing RI literature, as previous RI work is typically concerned with specific systems and interactions \cite{polla2021quantum}. The RI framework has recently been fully worked out in the single qubit instance \cite{prositto2025equilibrium} where Prositto, Forbes, and Segal bound finite work and energy requirements needed for thermalization of a qubit with a single ancilla qubit. Their results are complimentary to our single qubit results in Section \ref{sec:single_qubit}. Recently there have been algorithms developed to take Linbladian simulations and convert them to RI simulations on quantum computers \cite{pocrnic2024quantumsimulationlindbladiandynamics}. It remains to be seen if the previously developed Linbladian based algorithms, in particular the single qubit ground state prep algorithm by Ding et al \cite{ding2024single}, resemble our results if this machinery is used to obtain the RI-equivalent algorithm.

\subsection{Main Results}
The remainder of the paper is split into three main parts. Section \ref{sec:weak_coupling} contains a derivation of the weak-coupling expansion in Lemma \ref{lem:big_one} and outlines the underlying Markov chain behavior in Section \ref{sec:markov}. This weak-coupling expansion is presented in as much generality as possible as it may be of use in other applications beyond our thermalization procedure, such as thermometry or spectroscopy. Section \ref{sec:specific_systems} has two theorems concerning single qubit systems and harmonic oscillators, Theorems \ref{thm:single_qubit} and \ref{thm:harmonic_oscillator} respectively, as well as numerics exploring the $\beta$ and $\epsilon$ dependence of the channel. Section \ref{sec:general_systems} contains our most general results in Theorems \ref{thm:zero_knowledge} and \ref{thm:perfect_knowledge} in which we show that the thermal state is an approximate fixed point for arbitrary Hamiltonians, bound the runtime in terms of a Markovian spectral gap, and finally compute this spectral gap for the ground state limit. The main difference between these two theorems is that one requires only an uppper bound on the spectral norm $\norm{H_S}$ while the other takes advantage of eigenvalue knowledge. Lastly numerics on small Hydrogen chains that explore strong coupling error rates and the effects of noisy eigenvalue samples on thermalization are presented in Section \ref{sec:general_numerics}. 

One of the key aspects of our thermalization procedure is that the analysis is dependent on the ability to tune the environment gap $\gamma$ to match the system energy differences. One of our main results in Theorem \ref{thm:zero_knowledge} shows that even if the user cannot tune $\gamma$ at all and is reduced to uniform guessing within an interval containing all the differences $\Delta_S(i,j)$, then thermalization can still occur. We show that the thermal state at finite $\beta$ is an approximate fixed state, with the error going to 0 as the coupling constant $\alpha \to 0$. This zero coupling limit can be taken with the opposite limit $t \to \infty$ to yield a nonzero simulation time for the random interaction $G$. Further, we show that the ground state is exactly the fixed point in the $\beta \to \infty$ limit. In this limit we are also able to bound the total simulation time required as $L \cdot t \in \bigotilde{\frac{\dim_S^{16} \norm{H_S}^7}{\delta_{\min}^8 \epsilon^6}} $, where $\delta_{\min}$ represents a ``resolution" type distance and is the smallest difference between two distinct eigenvalue differences $|\Delta_S(i,j) - \Delta_S(k,l)|$. When preparing finite $\beta$ thermal states we pick up an extra factor of $\frac{1}{\widetilde{\lambda}_\star(\beta)^7}$ related to the spectral gap of the transition matrix.

Although Theorem \ref{thm:zero_knowledge} is likely to be the more applicable result for most scenarios, as eigenvalue knowledge is just as difficult to obtain as preparing thermal states, in Theorem \ref{thm:perfect_knowledge} we are able to quantify the effects of this knowledge on the total simulation time needed by the channel. We obtain stronger results than the zero knowledge Theorem \ref{thm:zero_knowledge}: we find that the thermal state is the exact fixed point at all $\beta \in [0, \infty]$ and that the amount of simulation time needed scales as $L \cdot t \in \bigotilde{\frac{\dim_S^{16}}{\delta_{\min} \epsilon^{2.5} \lambda_\star^{3.5}}}$. This represents a reduction over the zero knowledge scenario of $\bigotilde{\frac{\norm{H_S}^7}{\delta_{\min}^7 \epsilon^{3.5} \widetilde{\lambda}_\star(\beta)^{3.5}}}$. We remark that although these asymptotic scalings are not optimal, they are qualitatively interesting as they represent the first analysis for thermal state preparation that eigenvalue heuristics have a quantifiable impact on. 

We explored this reduction in total simulation time numerically in Section \ref{sec:general_numerics} with small Hydrogen chains of 2 or 3 nuclei. We started off by measuring the total simulation time needed to reach a fixed $\epsilon = 0.05$ with perfect samples of eigenvalue differences $\Delta_S(i,j)$. We then tracked the total simulation time as we added noise to these guesses and find that low noise guesses can reduce the total simulation time by a factor of 2. These results that indicate that even very noisy guesses can still lead to thermalization are also complimented by our study of the error convergence for hydrogen chain systems, as for these tests we chose $\gamma$ from a Gaussian centered at the average energy $\trace{H_S} / \dim_S$ with a standard deviation of $\norm{H_S} / 2$. Even with this heuristic choice for $\gamma$ we can achieve fairly low errors of $\epsilon = 0.03$ in as little as $L = 1500$ interactions.


\section{Weak Interaction Expansion for $\Phi$} \label{sec:weak_coupling}
\subsection{Preliminaries and Notation} \label{sec:prelim}
We will be working with a bipartite Hilbert space consisting of a system space $\hilb_S$ with dynamics governed by the Hamiltonian $H_S$ and an environment space $\hilb_E$ with Hamiltonian $H_E$. The total space is $\hilb = \hilb_S \otimes \hilb_E$ with Hamiltonian $H = H_S \otimes \identity_E + \identity_E \otimes H_E = H_S + H_E$. We will assume without loss of generality that our spaces are encoded in qubits so that $\hilb_S = \mathbb{C}^{2^n}$ and $\hilb_E = \mathbb{C}^{2^m}$. We use $\dim_S$ to refer to the dimension of the system's Hilbert space ($2^n$), $\dim_E$ the environment, and $\dim$ the total Hilbert space. As for the basis we will use for our spaces, we will work directly in the eigenbasis of each Hamiltonian. Besides simplifying our calculations, we can do so because the interaction term we will introduce later is unitarily invariant. We denote these basis in a 1-indexed fashion as
\begin{equation}
    H_{S} = \sum_{i = 1}^{2^n} \lambda_S(i) \ketbra{i}{i} ~,~ H_{E} = \sum_{j=1}^{2^m } \lambda_E(j) \ketbra{j}{j} ~,~ H = \sum_{i=1}^{2^n } \sum_{j=1}^{2^m } \lambda(i,j) \ketbra{i,j}{i,j},
\end{equation}
where $\lambda(i,j) = \lambda_S(i) + \lambda_E(j)$ and we will sort the eigenvalues in nondecreasing order such that $i > j \implies \lambda_S(i) \geq \lambda_S(j)$. We note that the ground state in our 1-indexed notation is therefore $\ketbra{1}{1}$. We also make use of the following notation for the energy differences of the system-environment Hamiltonian and just the system
\begin{equation}
\Delta(i,j|k,l) \coloneqq \lambda(i,j) - \lambda(k,l), \quad \Delta_S(i,i') = \lambda_S(i) - \lambda_S(i'), \label{eq:delta_def}
\end{equation}
and because our eigenvalues are sorted $i > j \implies \Delta_S(i,j) \geq 0$. We will need a few other notations for eigenvalue differences. First we denote the degeneracy of an eigenvalue $\lambda(i,j)$ using $\eta(i,j)$ and the number of times a system eigenvalue \emph{difference} is present as $\eta_\Delta(i,j)$. For example, in a truncated harmonic oscillator with 4 energy levels the lowest gap $\Delta$ is present 3 times, so $\eta_\Delta(1, 2) = 3$. The second is that we will need to eventually analyze interferences between eigenvalue differences of the system, so we define
\begin{equation}
    \delta_{\min} \coloneqq \min_{\Delta_S(i,j) \neq \Delta_S(k,l)} \left| \Delta_S(i,j) - \Delta_S(k, l) \right|. \label{eq:delta_min_def}
\end{equation}
Note that nothing in this definition prevents one of the summands, say $\Delta_S(k,l)$, from being 0. This implies that $\delta_{\min} \leq \Delta_S(i,j)$ for all $i$ and $j$.

Currently our dynamics involved a system separated from the environment, so we need to fix this by adding an interaction term $G : \hilb_S \otimes \hilb_E \to \hilb_S \otimes \hilb_E$. We will choose $G$ randomly via the eigendecomposition 
\begin{equation}
    G = U_{\haar} D U_{\haar}^\dagger, U_{\haar} \sim {\rm Haar}(\hilb_S \otimes \hilb_E) \text{ and } D_{ii} \sim \mathcal{N}(0,1), \label{eq:interaction_def}
\end{equation}
where the eigenvectors are Haar distributed and the eigenvalues I.I.D. normal Gaussian variables. We then add this random interaction term to our system-environment dynamics with a coupling constant $\alpha$, yielding a total dynamics governed by $H_S + H_E + \alpha G$. We define the following rescaled coupling constant 
 \begin{equation}
    \widetilde{\alpha} \coloneqq \frac{\alpha t}{\sqrt{\dim + 1}}, \label{eq:a_tilde_def}
\end{equation}
where the $\dim$ is the total Hilbert space $\hilb$ dimension. The rescaling with respect to $\dim$ is to capture the factors of $1/(\dim + 1)$ in the transition amplitudes that appear later and leads to much more compact expressions.
This gives a decomposition of expectation values over $G$ into two parts 
\begin{equation}
    \mathbb{E}_G f(G) = \mathbb{E}_{\haar} \mathbb{E}_{D} f(G),
\end{equation}
where the two expectations on the right commute with each other $\mathbb{E}_{\haar} \mathbb{E}_{D} = \mathbb{E}_{D} \mathbb{E}_{\haar} $. 

We will use this interaction term to couple our system to an environment prepared in the thermal state $\rho_E(\beta) = e^{-\beta H_E} /\partfun_E(\beta)$, where $\partfun_E(\beta) = \trace{e^{-\beta H_E}}$, and then trace out the environment. This gives the definition of our thermalizing channel $\Phi : \mathcal{L}(\hilb_S) \to \mathcal{L}(\hilb_S)$ as
\begin{equation}\label{eq:PhiDef}
    \Phi(\rho ; \alpha, \beta, t) :=  \Tr_{\hilb_E} \mathbb{E}_{G} \left[ e^{+i(H + \alpha G)t} \rho \otimes \rho_E(\beta) e^{-i(H + \alpha G) t}\right]. 
\end{equation}
We will typically drop the implicit parameters of $\alpha, \beta$ and $t$. Our goal is to show how this channel can be used to prepare the system in the thermal state $\rho(\beta) = \frac{e^{-\beta H_S}}{\partfun(\beta)}$. It will be useful to introduce a fixed-interaction channel $\Phi_G : \mathcal{L}(\hilb_S \otimes \hilb_E) \to \mathcal{L}(\hilb_S \otimes \hilb_E)$ over the total Hilbert space $\hilb$ as 
\begin{equation}
    \Phi_G(\rho \otimes \rho_E; \alpha, t) \coloneqq e^{+i(H + \alpha t)} \rho \otimes \rho_E e^{- i(H + \alpha G)t}, \label{eq:phi_g_definition}
\end{equation}
giving us $\Phi(\rho; \alpha, \beta, t) = \Tr_{\hilb_E} \mathbb{E}_G \Phi_G(\rho \otimes \rho_E(\beta); \alpha, t)$. Another alternative notation for $\Phi$ that we will use is whenever $\hilb_E$ is a single qubit with energy gap $\gamma$ we will use $\Phi_\gamma$ to draw attention to this specific energy gap. We will also make frequent use of indicator functions, denoted $\mathbf{I}[P]$, which is 1 if the predicate $P$ is true and 0 if $P$ is false.

\subsection{First and Second Order Terms}
In order to understand our thermalizing channel $\Phi$ we will compute a Taylor Series for the output of the channel with respect to the coupling constant $\alpha$. We will perform the $\alpha$ expansion about $\alpha = 0$ and we will use the mean value form of the remainder, in which we are guaranteed a special value $\alpha_{\star} \in (0, \infty)$ such that the final derivative evaluated at $\alpha_{\star}$ is the exact amount needed. We use a second-order expansion and will need to explicitly compute terms up to order $\alpha^2$, which will give the following expansion
\begin{align}
    \Phi(\rho; \alpha) &= \Phi(\rho; 0) + \alpha \frac{\partial}{\partial \alpha} \Phi(\rho; \alpha) \big|_{\alpha = 0} + \frac{\alpha^2}{2} \frac{\partial^2}{\partial \alpha^2} \Phi(\rho; \alpha) \big|_{\alpha = 0} + R_{\Phi}(\rho; \alpha_{\star}) \label{eq:phi_taylor_series}
\end{align}
We use
\begin{equation}
    \mathcal{T}(\rho) \coloneqq \frac{\alpha^2}{2} \frac{\partial^2}{\partial \alpha^2} \Phi(\rho; \alpha) \bigg|_{\alpha = 0} = \frac{\alpha^2}{2}  \Tr_{\hilb_E} \mathbb{E}_{G} \left[\frac{\partial^2}{\partial \alpha^2} \Phi_G(\rho; \alpha) \big|_{\alpha = 0}\right] \label{eq:transition_def}
\end{equation} to denote the transition terms, as it will be revealed that the first two terms do not cause transitions in the system state, and $R_{\Phi}$ to denote the remainder. 
Further we will often leave the dependence on the  $\alpha$ parameter implicit and only include it when necessary.

We start off with the $\bigo{\alpha^0}$ term, which can be trivially computed as
\begin{align}
\Phi(\rho; 0) = \Tr_{\hilb_E}\int e^{i(H + \alpha G) t} \rho \otimes \rho_E(\beta) e^{-i (H + \alpha G) t} dG \bigg|_{\alpha = 0} = e^{i H t} \rho e^{-i H t}.
\end{align}
We then see that if $[ \rho, H] = 0$ then $\Phi(\rho; 0) = \identity(\rho)$, and as we restrict ourselves to such input states we will use this throughout the remainder of the paper. The next order correction is the $\bigo{\alpha^1}$ term. 
\begin{theorem} \label{lem:first_order_phi}
Let $\Phi$ be the thermalizing quantum channel given by Eq.\eqref{eq:PhiDef} and $G$ the randomly chosen interaction term as given by Eq. \eqref{eq:interaction_def}. The $O(\alpha)$ term in the weak-coupling expansion in Eq. \eqref{eq:phi_taylor_series} vanishes
   \begin{equation}
        \frac{\partial}{\partial \alpha} \Phi(\rho; \alpha) \big|_{\alpha = 0} = 0.
   \end{equation}
\end{theorem}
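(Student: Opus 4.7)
The plan is to compute $\partial_\alpha \Phi(\rho;\alpha)|_{\alpha=0}$ directly by differentiating under the trace/expectation, and then observe that every surviving term is linear in $G$ and therefore has zero expectation.

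First I would differentiate the matrix exponential $e^{\pm i(H+\alpha G)t}$ using Duhamel's identity
\begin{equation}
    \frac{\partial}{\partial \alpha} e^{\pm i(H+\alpha G)t} = \pm i \int_0^t e^{\pm i(H+\alpha G)s}\, G\, e^{\pm i(H+\alpha G)(t-s)}\, ds,
\end{equation}
and evaluate at $\alpha=0$, where the exponentials reduce to $e^{\pm i H s}$ and $e^{\pm i H (t-s)}$. Applying the product rule to $\Phi_G(\rho;\alpha) = e^{i(H+\alpha G)t}\,\rho\otimes\rho_E\,e^{-i(H+\alpha G)t}$ produces two terms, each containing exactly one factor of $G$ sandwiched between free evolutions. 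Since the partial trace $\Tr_{\hilb_E}$ and the expectation $\mathbb{E}_G$ are both linear, and since the $s$-integral from Duhamel's formula converges absolutely on the bounded domain $[0,t]$, I can interchange $\mathbb{E}_G$ with the integral (Fubini) and with $\Tr_{\hilb_E}$.

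Next I would invoke the defining structure of $G$ in Eq.~\eqref{eq:interaction_def}: $G = U_{\haar} D U_{\haar}^\dagger$ with $D_{ii}\sim\mathcal{N}(0,1)$ independent. Because $\mathbb{E}_G = \mathbb{E}_{\haar}\mathbb{E}_D$ and $\mathbb{E}_D[D]=0$ (zero-mean Gaussian entries), one obtains
\begin{equation}
    \mathbb{E}_G[G] = \mathbb{E}_{\haar}\brackets{U_{\haar}\,\mathbb{E}_D[D]\,U_{\haar}^\dagger} = 0.
\end{equation}
Each of the two terms produced by the product rule factors, after pulling the expectation inside the $s$-integral, into a product of fixed matrices with $\mathbb{E}_G[G]$, so both terms vanish individually. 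Adding them and taking the partial trace over $\hilb_E$ gives $\partial_\alpha \Phi(\rho;\alpha)|_{\alpha=0}=0$.

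The only subtlety is justifying the interchange of $\mathbb{E}_G$ with $\partial_\alpha$ and with the $s$-integral, but since we are evaluating at the single point $\alpha=0$ and the integrand is uniformly bounded in operator norm by $\norm{G}$ (which has finite expectation for the Gaussian-eigenvalued ensemble), dominated convergence applies without difficulty. There is no genuine obstacle here; the result is essentially a statement that a mean-zero perturbation produces no first-order drift in the channel, and the proof amounts to Duhamel's formula plus $\mathbb{E}_G[G]=0$.
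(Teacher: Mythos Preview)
Your proposal is correct and follows essentially the same route as the paper: differentiate $\Phi_G$ via Duhamel's formula, set $\alpha=0$, and kill both terms using $\mathbb{E}_G[G]=\mathbb{E}_{\haar}[U_{\haar}\,\mathbb{E}_D[D]\,U_{\haar}^\dagger]=0$. The only cosmetic difference is that the paper parametrizes the Duhamel integral over $[0,1]$ rather than $[0,t]$, and it does not pause to justify the interchange of $\mathbb{E}_G$ with the derivative and integral as you do.
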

\begin{proof}
    We start by moving the $\alpha$ derivative through the linear operations of partial tracing and integrals so that it can act on the fixed interaction map $\Phi_G$
    \begin{align}
        \frac{\partial}{\partial \alpha} \Phi(\rho) \bigg|_{\alpha = 0} &= \frac{\partial}{\partial \alpha} \partrace{\mathcal{H}_E}{\int \Phi_G(\rho) dG} \bigg|_{\alpha = 0} \\
         &= \partrace{\mathcal{H}_E}{\int \frac{\partial}{\partial \alpha} \Phi_G(\rho) dG \bigg|_{\alpha = 0} } .
    \end{align}
    Now we use the expression for $\Phi_G$ in Eq. \eqref{eq:phi_g_definition} to compute the derivatives,
    \begin{align}
        \frac{\partial}{\partial \alpha} \Phi_G (\rho) &= \parens{\frac{\partial}{\partial \alpha} e^{+ i (H + \alpha G)t}} \rho \otimes \rho_E e^{-i (H + \alpha G) t} + e^{+i (H + \alpha G)t} \rho \otimes \rho_E \parens{\frac{\partial}{\partial \alpha} e^{- i (H + \alpha G)t}} \\
        &= \parens{\int_{0}^{1} e^{i s (H+\alpha G)t} (i t G) e^{i (1-s) (H+\alpha G)t} ds} \rho \otimes \rho_E e^{-i(H+\alpha G)t} \nonumber \\
    &~ ~+ e^{i(H+\alpha G)t} \rho \otimes \rho_E \parens{\int_{0}^1 e^{-i s (H+\alpha G) t} (- i t G) e^{-i (1-s) (H+\alpha G)t} ds}. \label{eq:first_order_alpha_derivative}
    \end{align}
    Now we can set $\alpha = 0$ in the above and introduce the expectation over $G$ that will be required
    \begin{align}
        &\mathbb{E}_G\left[ \frac{\partial}{\partial \alpha} \Phi_G(\rho) \bigg|_{\alpha = 0}\right] = i t \mathbb{E}_G \int_0^1 e^{i s H t} G e^{-i s H t} ds e^{i H t} \rho \otimes \rho_E e^{-i H t} \nonumber\\
&- i t e^{+i H t} \rho \otimes \rho_E \mathbb{E}_G \int_0^1 e^{-is H t} G e^{-i(1-s) Ht} ds \\ 
        &\quad= i t \int_0^1 e^{i s H t} \mathbb{E}_G[G] e^{-i s H t} ds e^{i H t} \rho \otimes \rho_E e^{-i H t} - i t e^{+i H t} \rho \otimes \rho_E \int_0^1 e^{-is H t} \mathbb{E}_G[G] e^{-i(1-s) Ht} ds.
    \end{align}
    Since our eigenvalues, $D_{ii}$, are mean zero ($\int D dD = 0$) we can compute $\mathbb{E}_G [G] $ and arrive at the lemma statement
    \begin{align}
        \mathbb{E}_G [G] &= \int G dG = \int \int U_{\haar} D U_{\haar}^\dagger dD dU_{\haar} = \int U_{\haar} \left( \int D dD \right) U_{\haar}^\dagger dU_{\haar} = 0.
    \end{align}
\end{proof}

Now we move on to the $O(\alpha^2)$ term in the weak-coupling expansion of $\Phi$. We first will compute the combined system-environment output of a generic system-environment basis state and we note that this result holds for an arbitrary dimension environment. We will use this to draw two results:  the first being for a single qubit environment the transition amplitudes of just the system can be split into on-resonance and off-resonance terms based on the tuning of the environment qubit Hamiltonian. The second result is that coherences are not introduced to the state at this order of $\Phi$, meaning if an input density matrix $\rho$ is diagonal then $(\identity + \mathcal{T})(\rho)$ will also be diagonal. This will be crucial for our later understanding of the channel as a Markov chain.
\begin{restatable}{lemma}{secondOrderChannelHaar} \label{lem:big_one}
    Given a system Hamiltonian $H_{S}$, an environment Hamiltonian $H_{E}$, a simulation time $t$, and coupling coefficient $\alpha$, let $\Phi_G$ denote the time evolution channel under a fixed interaction term $G$ as given in Eq. \eqref{eq:phi_g_definition}, let $\chi$ denote the following coherence prefactor
$$ \chi(i,j) \coloneqq \sum_{a,b: \Delta(i,j,|a,b) \neq 0} \frac{1 - i \Delta(i,j|a,b)t - e^{-i \Delta(i,j|a,b) t}}{\Delta(i,j|a,b)^2}, $$
and let $\eta(i,j)$ denote the degeneracy of the $(i,j)$\ts{th} eigenvalue of $H = H_S + H_E$. Then the $O(\alpha^2)$ term of $\Phi_G$ in a weak-coupling expansion is given by
 \begin{align}
 \frac{\alpha^2}{2} \mathbb{E}_G \left[ \frac{\partial^2}{\partial \alpha^2} \Phi_G(\ketbra{i,j}{k,l}) \big|_{\alpha = 0} \right] &= -\frac{\alpha^2  e^{i \Delta(i,j|k,l) t}}{\dim + 1} \bigg(\chi(i,j) + \chi(k,l)^*  + \frac{t^2}{2}(\eta(i,j) + \eta(k,l)) \bigg) \ketbra{i,j}{k,l} \nonumber \\
    &~ + \braket{i,j}{k,l}  \frac{\alpha^2 t^2}{\dim+1} \sum_{a,b} \sinc^2 \left( \frac{\Delta(i,j|a,b)t}{2} \right) \ketbra{a,b}{a,b}.  \label{eq:el_gigante}
 \end{align}

 For $\ket{i, j} = \ket{k, l}$ the above expression simplifies to
 \begin{align}
     &\frac{\alpha^2}{2} \mathbb{E}_G \left[ \frac{\partial^2}{\partial \alpha^2} \Phi_G(\ketbra{i,j}{i,j}) \big|_{\alpha = 0} \right] \nonumber \\
     &=  - \widetilde{\alpha}^2 \left(\sum_{(a,b) \neq (i,j)} \sinc^2 \left(\frac{\Delta(i,j | a,b)t}{2} \right) \right) \ketbra{i, j}{i,j} + \widetilde{\alpha}^2 \sum_{(a,b) \neq (i,j)} \sinc^2 \left(\frac{\Delta(i,j | a,b)t}{2} \right) \ketbra{a, b}{a,b} ,\label{eq:el_gigante_dos}
 \end{align}
 which also demonstrates that $\Tr \mathcal{T}(\rho) = 0$ for $\rho$ such that $[\rho, H_S] = 0$.
\end{restatable}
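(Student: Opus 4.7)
The key tool is Duhamel's formula for the second derivative of the dressed propagator at $\alpha = 0$,
\begin{equation}
    \frac{\partial^2}{\partial\alpha^2} e^{\pm i(H+\alpha G)t}\bigg|_{\alpha=0} = -2t^2 \int_0^1 \int_0^{s_1} e^{\pm i(1-s_1)Ht}\, G\, e^{\pm i(s_1-s_2)Ht}\, G\, e^{\pm i s_2 Ht}\, ds_2\, ds_1.
\end{equation}
Differentiating $\Phi_G(\rho\otimes\rho_E) = e^{+i(H+\alpha G)t}\rho\otimes\rho_E\, e^{-i(H+\alpha G)t}$ twice and setting $\alpha=0$ via the product rule decomposes $\partial_\alpha^2\Phi_G$ into three pieces: both derivatives on the left unitary (LL), both on the right (RR), and one on each side (LR). Each piece is quadratic in $G$, and the linear-in-$G$ cross-terms vanish by the same $\mathbb{E}_G[G]=0$ argument used in Theorem \ref{lem:first_order_phi}.

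Expanding in the joint eigenbasis $\{\ket{a,b}\}$ of $H$ turns the propagators into diagonal phases and reduces the problem to two-index sums of phase integrals. A nested integral of the form $\int_0^1\int_0^{s_1} e^{-i\omega(s_1-s_2)t}\,ds_2\,ds_1$ evaluates, for $\omega \neq 0$, to exactly $(1 - i\omega t - e^{-i\omega t})/(\omega t)^2$, which is precisely the summand in the definition of $\chi(i,j)$; for $\omega = 0$ it evaluates to $1/2$, with one such contribution per state in the degenerate eigenspace of $\ket{i,j}$, accounting for the $\eta(i,j)$ prefactor. The simpler single integral that appears in the LR piece collapses to a $\sinc$ factor, and its squared modulus produces the $\sinc^2$ transition weights.

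The central computation is the $G$-expectation. Writing $G = U_{\haar} D U_{\haar}^\dagger$, the $D$-average enforces $\EE[D_{pp}D_{qq}] = \delta_{pq}$ and reduces the problem to a Haar integral over four matrix elements of $U_{\haar}$. The Weingarten formula on $U(\dim)$ then yields
\begin{equation}
    \EE_G[G_{mn} G_{m'n'}] = \frac{1}{\dim+1}\bigl(\delta_{mn}\delta_{m'n'} + \delta_{mn'}\delta_{nm'}\bigr),
\end{equation}
which is exactly the source of the $1/(\dim+1)$ normalization and motivates the rescaled coupling $\widetilde\alpha$ in Eq.~\eqref{eq:a_tilde_def}. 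The two Wick contractions play very different roles: the crossed contraction $\delta_{mn'}\delta_{nm'}$ traces over an intermediate state and produces the $\sinc^2(\Delta(i,j|a,b)t/2)$ transitions in the LR piece, while the straight contraction $\delta_{mn}\delta_{m'n'}$ ties the two $G$'s across the density matrix and generates the coherence-damping terms $\chi(i,j) + \chi(k,l)^*$ in LL and RR together with the degeneracy contributions $\eta(i,j)+\eta(k,l)$. Overall phases $e^{+i\lambda(i,j)t}$ from the left bare propagator and $e^{-i\lambda(k,l)t}$ from the right combine into the global factor $e^{i\Delta(i,j|k,l)t}$.

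The hard part will be the careful bookkeeping of how the two contractions distribute across LL, RR, and LR, and in particular why $\chi(k,l)^*$ (rather than $\chi(k,l)$) arises from RR through the opposite sign of $i$ in the right propagator. For the diagonal specialization in Eq.~\eqref{eq:el_gigante_dos}, I would then use the identity $\chi(i,j) + \chi(i,j)^* = t^2 \sum_{(a,b):\Delta\neq 0}\sinc^2(\Delta(i,j|a,b)t/2)$ together with $\sinc^2(0) = 1$ to fold the $\chi$ contribution and the $\eta(i,j)$ degeneracy contribution into a single $\sinc^2$ sum over all $(a,b)\neq(i,j)$; the resulting loss-minus-gain structure then makes tracelessness of $\mathcal{T}(\rho)$ on states commuting with $H_S$ immediate.
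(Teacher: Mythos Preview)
Your proposal is correct and takes essentially the same approach as the paper: the paper's six terms $(A)$--$(F)$ obtained by the product rule on the first-derivative expression are exactly your LL/RR/LR decomposition (grouped as $(A){+}(B)$, $(E){+}(F)$, $(C){+}(D)$), and the paper packages your second-moment identity $\EE_G[G_{mn}G_{m'n'}]=\frac{1}{\dim+1}(\delta_{mn}\delta_{m'n'}+\delta_{mn'}\delta_{nm'})$ into two operator-valued lemmas computing $\int G(x)G(y)\,dG$ and $\int G(x)\,\rho\,G(y)\,dG$ for Heisenberg-evolved interactions $G(t)=e^{iHt}Ge^{-iHt}$.

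One minor bookkeeping slip worth flagging: in the LL and RR pieces it is actually the \emph{crossed} contraction that frees the intermediate index and generates both $\chi$ and $\eta$ (for the product $G_{ac}G_{ce}$ with summed middle index $c$, the crossed pairing is $\delta_{ae}\delta_{cc}$, which leaves $c$ free), while the straight contraction $\delta_{ac}\delta_{ce}$ collapses everything and contributes only the residual identity term. Your attribution is reversed, but since you already flag this bookkeeping as the delicate step, this does not affect the validity of the scheme.
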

\noindent The proof of this Lemma uses similar techniques to Theorem \ref{lem:first_order_phi}, and formal proof can be found in Appendix \ref{sec:haar_integral_appendix}.

Next we will compute the effects of the channel on just the system alone which involves computing the partial trace $\Tr_{\hilb_E}$. We can either do this for a generic environment, which results in summations over $\hilb_E$ floating around, or specialize to a specific choice of $\hilb_E$ and compute the summation. For the remainder of this paper we will choose the latter option with a single qubit environment $\hilb_E = \mathbb{C}^2$ and denote the Hamiltonian $H_E = \begin{bmatrix} 0 & 0 \\ 0 & \gamma \end{bmatrix}$. Our environment input states then become
\begin{equation}
    \rho_E(\beta) = \frac{e^{-\beta H_E}}{\partfun_E(\beta)} = \frac{1}{1 + e^{-\beta \gamma}} \ketbra{0}{0} + \frac{e^{-\beta \gamma}}{1 + e^{-\beta \gamma}} \ketbra{1}{1} \eqqcolon q(0) \ketbra{0}{0} + q(1) \ketbra{1}{1} \label{eq:env_state_def},
\end{equation}
where we will use the environment qubit probabilities $q(0)$ and $q(1)$ in calculations for brevity. It will turn out that the value chosen for $\gamma$ is highly critical to the convergence of our algorithm, tuning it to match eigenvalue \emph{differences} of the system $H_S$ will allow us to analyze the convergence of the algorithm. As we can see in Eq. \eqref{eq:el_gigante} there will be a lot of $\sinc$ functions used, we will characterize a $\sinc$ function as being on-resonance or off-resonance if the inputs are sufficiently close to zero (the max for sinc). As for how close ``sufficiently close" actually is will depend on various parameters, such as $t, \alpha, \epsilon$, and the spectral properties of $H_S$.  
\begin{theorem}[Second-Order Expansion Term $\mathcal{T}$] \label{thm:second_order_transition}
Let $\mathcal{T}$ denote the second-order correction for a weak coupling expansion for a thermalizing channel $\Phi$ with a single qubit environment. Let 0 and $\gamma$ be the eigenvalues of the environment hamiltonian $H_E$ and the state of the environment be given by $\rho_E(\beta) = e^{-\beta H_E} / \partfun_E(\beta)$. The following properties hold for the second order correction terms.
\begin{enumerate}
    \item 
The transition element from system state $\ketbra{i}{i}$ to $\ketbra{j}{j}$, for $i \neq j$, is given by
\begin{align}
    \bra{j}\mathcal{T}(\ketbra{i}{i}) \ket{j} = \widetilde{\alpha}^2 \Biggr(& \sinc^2 \left( \frac{\Delta_S(i,j)t}{2} \right) + \frac{1}{1 + e^{-\beta \gamma}} \sinc^2 \left( \frac{(\Delta_S(i,j) - \gamma)t}{2} \right) \nonumber\\
    &\quad+  \frac{e^{-\beta \gamma}}{1 + e^{-\beta \gamma}} \sinc^2 \left( \frac{(\Delta_S(i,j) + \gamma)t}{2} \right) \Biggr). \label{eq:transition_terms_total}
\end{align}
\item For same-state transitions $\ketbra{i}{i}$ to $\ketbra{i}{i}$ we have
\begin{equation}
    \bra{i}\mathcal{T}(\ketbra{i}{i})\ket{i} = - \sum_{j \neq i} \bra{j} \mathcal{T}(\ketbra{i}{i}) \ket{j},
\end{equation}
which follows from $\Tr \mathcal{T}(\rho) = 0$ as shown in Lemma \ref{lem:big_one}. 
\item There are no coherences, or off-diagonal density matrix elements, introduced to the system up to second order in $\alpha$
\begin{equation}
    j \neq k \implies \bra{j} \mathcal{T}(\ketbra{i}{i}) \ket{k} = 0.
\end{equation}

\end{enumerate}
\end{theorem}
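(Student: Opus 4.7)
The plan is to apply Lemma \ref{lem:big_one}, specifically the diagonal-input form Eq. \eqref{eq:el_gigante_dos}, to each environment basis component of $\ketbra{i}{i} \otimes \rho_E(\beta)$, then perform the partial trace over $\hilb_E$ and read off the matrix elements. Since $\rho_E(\beta) = q(0)\ketbra{0}{0} + q(1)\ketbra{1}{1}$ is diagonal in the eigenbasis of the environment qubit, linearity lets me evaluate $\mathcal{T}(\ketbra{i}{i})$ as $q(0)\partrace{E}{\mathcal{T}_{SE}(\ketbra{i,0}{i,0})} + q(1)\partrace{E}{\mathcal{T}_{SE}(\ketbra{i,1}{i,1})}$, where $\mathcal{T}_{SE}$ denotes the right-hand side of Eq. \eqref{eq:el_gigante_dos} before tracing.

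For part 1 (the $j\neq i$ transition elements), Eq. \eqref{eq:el_gigante_dos} gives an output supported on projectors $\ketbra{a,b'}{a,b'}$. Sandwiching between $\bra{j}$ and $\ket{j}$ and tracing out the environment keeps only the terms with $a=j$ summed over $b' \in \{0,1\}$, and since $j \neq i$ the subtracted $\ketbra{i,b}{i,b}$ contribution vanishes. Evaluating the relevant energy gaps gives $\Delta(i,0|j,0) = \Delta(i,1|j,1) = \Delta_S(i,j)$, $\Delta(i,0|j,1) = \Delta_S(i,j) - \gamma$, and $\Delta(i,1|j,0) = \Delta_S(i,j) + \gamma$. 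Regrouping the four sinc-squared contributions by their argument and using $q(0)+q(1)=1$ to combine the two copies of $\sinc^2(\Delta_S(i,j)t/2)$ yields exactly Eq. \eqref{eq:transition_terms_total}.

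Part 2 is immediate from the tracelessness of $\mathcal{T}$ on diagonal inputs, which is the final assertion of Lemma \ref{lem:big_one}: since $\sum_k \bra{k}\mathcal{T}(\ketbra{i}{i})\ket{k} = 0$, solving for the $k=i$ term gives the claim. Part 3 also follows directly from Eq. \eqref{eq:el_gigante_dos}: the second-order action on an $SE$-diagonal input is a linear combination of $SE$-diagonal projectors $\ketbra{a,b'}{a,b'}$, and the partial trace over $\hilb_E$ maps each such projector to a system-diagonal projector $\ketbra{a}{a}$, so no off-diagonal matrix elements $\ketbra{j}{k}$ with $j \neq k$ can be produced.

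The main obstacle is purely bookkeeping: correctly enumerating the four $(b,b') \in \{0,1\}^2$ pairs, identifying the correct $\Delta(i,b|j,b')$ shifts by $\pm \gamma$, and keeping track of how the environment weights $q(0), q(1)$ distribute across the three surviving sinc-squared terms. No new technology beyond Lemma \ref{lem:big_one} is needed.
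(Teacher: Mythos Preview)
Your proposal is correct and follows essentially the same approach as the paper: both apply Lemma \ref{lem:big_one} to the diagonal system-environment inputs $\ketbra{i,b}{i,b}$ weighted by $q(b)$, trace out the environment by summing over $b'\in\{0,1\}$, enumerate the four $\Delta(i,b|j,b')$ values, and combine the two copies of $\sinc^2(\Delta_S(i,j)t/2)$ via $q(0)+q(1)=1$. The only cosmetic difference is that the paper proves part 3 first by invoking the general formula Eq.~\eqref{eq:el_gigante} rather than the already-diagonal Eq.~\eqref{eq:el_gigante_dos}, but since the input is diagonal both routes are equivalent and your argument for part 3 is if anything slightly cleaner.
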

In the proof of the theorem, as well as in subsequent propositions, it is useful to introduce the concept of on- and off-resonant couplings which we give below.
\begin{definition}[Definition of Transition Matrix Elements]\label{def:transition}
The transition elements in Eq. \eqref{eq:transition_terms_total} can be divided into on-resonance and off-resonance terms based on the arguments to the sinc function. We define the on-resonance transitions as
\begin{align}
    \bra{j} \TT_{\on}(\ketbra{i}{i})\ket{j} &\coloneqq \widetilde{\alpha}^2 \frac{1}{1 + e^{-\beta \gamma}} \mathbf{I}[|\Delta_S(i,j) - \gamma| \le \delta_{\min}]  \sinc^2\left(\frac{(\Delta_S(i,j) - \gamma)t}{2}\right) \nonumber \\
    &~+ \widetilde{\alpha}^2 \frac{e^{-\beta \gamma}}{1 + e^{-\beta \gamma}} \mathbf{I}[|\Delta_S(i,j) + \gamma| \le \delta_{\min}]  \sinc^2\left(\frac{(\Delta_S(i,j) + \gamma)t}{2}\right) \label{eq:on_resonance}
\end{align}
and the off-resonance terms as
\begin{equation}
\begin{split}
    \bra{j} \TT_{\off}(\ketbra{i}{i})\ket{j} &\coloneqq \widetilde{\alpha}^2 \frac{1}{1 + e^{-\beta \gamma}} \mathbf{I}[|\Delta_S(i,j) - \gamma| > \delta_{\min}]  \sinc^2\left(\frac{(\Delta_S(i,j) - \gamma)t}{2}\right)  \\
    &~+ \widetilde{\alpha}^2 \frac{e^{-\beta \gamma}}{1 + e^{-\beta \gamma}} \mathbf{I}[|\Delta_S(i,j) + \gamma| > \delta_{\min}]  \sinc^2\left(\frac{(\Delta_S(i,j) + \gamma)t}{2}\right)  \\
    &~+ \widetilde{\alpha}^2 \sinc^2 \left( \frac{\Delta_S(i,j)t}{2} \right). \label{eq:off_resonance}
    \end{split}
\end{equation}
For same-state transitions $\ketbra{i}{i}$ to $\ketbra{i}{i}$ the on- and off-resonance transitions are equal to
\begin{equation}
    \bra{i} \TT_{\on}(\ketbra{i}{i}) \ket{i} = - \sum_{j \neq i}\bra{j} \TT_{\on}(\ketbra{i}{i}) \ket{j} \text{ and } \bra{i} \TT_{\off}(\ketbra{i}{i}) \ket{i} = - \sum_{j \neq i}\bra{j} \TT_{\off}(\ketbra{i}{i}) \ket{j}. \label{eq:same_state_transition_resonances}
\end{equation}
\end{definition}
We will now use these definitions to prove Theorem~\ref{thm:second_order_transition}.
\begin{proof}[Proof of Theorem~\ref{thm:second_order_transition}]
    The bulk of this proof will be based on straightforward reductions from Eq. \eqref{eq:el_gigante}. To start we will first show that no off-diagonal elements are introduced to the density matrix. By taking the $(j,k)$ matrix element of the output from Eq. \eqref{eq:el_gigante} we see
    \begin{align}
        \bra{j} \mathcal{T}(\ketbra{i}{i})\ket{k} &= \sum_{l, m} \frac{e^{-\beta \lambda_E(m)}}{1 + e^{-\beta \lambda_E(m)}} \bra{j, l} \frac{\alpha^2}{2} \mathbb{E}_{G} \left[ \frac{\partial^2}{\partial \alpha^2} \Phi_G(\ketbra{i, m}{i, m}) \big|_{\alpha = 0} \right] \ket{k, l} \\
        &= - \sum_{l,m} q(m) \widetilde{\alpha}^2 (\chi(i,m) + \chi(i,m)^* + t^2 \eta(i,m)) \braket{j, l}{i, m} \braket{i, m}{k, l} \nonumber \\
        &~ + \sum_{l, m} q(m) \sum_{a,b} \widetilde{\alpha}^2 \sinc^2 \left(\frac{\Delta(i,m|a,b)t }{2} \right) \braket{j, l}{a,b} \braket{a, b}{k, l} \\
        &= 0,
    \end{align}
    where we introduce $q(m)$ for $m=0,1$ to be a placeholder for the prefactors in Eq. \eqref{eq:el_gigante} and the last equality is due to the fact that $j \neq k$ implies that $\braket{j, l}{i,m}$ and $\braket{i,m}{k, l}$ cannot both be nonzero and likewise for $\braket{j, l}{a,b}$ and $\braket{a,b}{k,l}$.
    
    Since we have shown that coherences are not introduced to our system we can focus on the transitions from diagonal entries to diagonal entries in $\rho$. We make heavy use of Eq. \eqref{eq:el_gigante_dos} which tells us that for $i \neq k$ the system-environment transition amplitude is
    \begin{equation}
        \frac{\alpha^2}{2}\bra{k, l} \mathbb{E}_G \left[ \frac{\partial^2}{\partial \alpha^2} \Phi_G(\ketbra{i, j}{i,j}) \big|_{\alpha = 0} \right] \ket{k, l} = \widetilde{\alpha}^2 \sinc^2 \left( \frac{\Delta(i,j | k, l) t}{2} \right). 
    \end{equation}
    Now because all the operations present in the above expression are linear we can compute this map for the initial environment state $\rho_E(\beta)$ straightforwardly. Taking the output of this linear combination and computing the trace over the environment then gives us the expression for $\mathcal{T}$ using the assumption that the environment is a single qubit we find using the definition of $\gamma$ and $\Delta_S$ in~\eqref{eq:delta_def}
    \begin{align}
        \bra{j} \mathcal{T}(\ketbra{i}{i}) \ket{j} &= \sum_{k, l} q(k) \frac{\alpha^2}{2}\bra{j, l} \mathbb{E}_G \left[ \frac{\partial^2}{\partial \alpha^2} \Phi_G(\ketbra{i, k}{i,k}) \big|_{\alpha = 0} \right] \ket{j, l} \\
        &= \widetilde{\alpha}^2 \sum_{k, l} q(k) \sinc^2 \left(\frac{\Delta(i, k | j , l) t}{2} \right) \\
        &= \widetilde{\alpha}^2 \left(q(0) \sinc^2 \left(\frac{\Delta(i, 0 | j , 0) t}{2} \right) + q(0) \sinc^2 \left(\frac{\Delta(i, 0 | j , 1) t}{2} \right) \right) \nonumber \\
        & ~+ \widetilde{\alpha}^2 \left(q(1) \sinc^2 \left(\frac{\Delta(i, 1 | j , 0) t}{2} \right) + q(1) \sinc^2 \left(\frac{\Delta(i, 1 | j , 1) t}{2} \right) \right) \\
        &= \widetilde{\alpha}^2 \left(q(0) \sinc^2 \left(\frac{\Delta_S(i,j) t}{2} \right) + q(0) \sinc^2 \left(\frac{(\Delta_S(i,j) - \gamma) t}{2} \right) \right) \nonumber \\
        & ~+ \widetilde{\alpha}^2 \left(q(1) \sinc^2 \left(\frac{(\Delta_S(i, j) + \gamma) t}{2} \right) + q(1) \sinc^2 \left(\frac{\Delta_S(i,j) t}{2} \right) \right),
    \end{align}
    where we see that combining the terms with $\sinc^2 \left(\frac{\Delta_S(i,j) t}{2} \right)$, as $q(0) + q(1) = 1$, we immediately get Eq. \eqref{eq:transition_terms_total}. 

    To classify these terms as on-resonance or off-resonance we will focus on the argument to the sinc function, which is of the form $\Delta_S(i,j) t/ 2$ or $(\Delta_S(i,j) \pm \gamma) t/ 2$. The idea is that we will take $t$ large enough so that only the energy differences that are less than $\delta_{\min}$, as defined in Eq. \eqref{eq:delta_min_def}, will be non-negligible. Clearly the term $\widetilde{\alpha}^2 \sinc^2 \left( \frac{\Delta_S(i,j)t}{2} \right)$ will always be off-resonance, as $\delta_{\min} \le \Delta_S(i,j)$. 
    
    Now we have three terms to classify as either on-resonance or off-resonance, we refer to each term by their argument to the $\sinc$ function. The first we can categorically declare as being off-resonance is the $\Delta_S(i,j)$ term. By Lemma \ref{lem:sinc_poly_approx} we know $\sinc^2(\Delta_S(i,j) t/ 2) \le 4 / (\delta_{\min}^2 t^2)$, which we will make arbitrarily small in later sections. The other two can only be classified as on or off resonance depending if $\Delta_S(i,j)$ is positive or negative. If $i > j$ then we know that $\Delta_S(i,j) \ge 0$ and therefore $\sinc^2((\Delta_S(i,j) - \gamma)t/2)$ term can be close to 1 if $\gamma \approx \Delta_{S}(i,j)$, which also shows the $\Delta_S(i,j) + \gamma$ term is off-resonance for all $\gamma$. We say that the $\Delta_S(i,j) - \gamma$ term in this scenario is on-resonance if $|\Delta_S(i,j) - \gamma| \le \delta_{\min}$. This classification is best described symbolically as
    \begin{equation}
    i > j \text{ and } |\Delta(i,j) - \gamma| \le \delta_{\min} \implies  \bra{j} \mathcal{T}_{\on}(\ketbra{i}{i}) \ket{j} = \widetilde{\alpha}^2 q(0) \sinc^2 \left( \frac{(\Delta_S(i,j) - \gamma)t}{2} \right).\label{eq:on_resonance_i_geq_j}
    \end{equation}
    The $q(0)$ prefactor indicates that the ancilla started in it's low energy state and since $\sinc^2$ is symmetric we can write the argument as $\gamma - \Delta_S(i,j)$ which can be remembered as the ancilla gaining $\gamma$ amount of energy and the system losing $\Delta_S(i,j)$. In this scenario the $\Delta_S(i,j) + \gamma$ term is therefore put in the off-resonance map
    \begin{align}
        i > j \text{ and } |\Delta(i,j) - \gamma| \le \delta_{\min} \implies  \bra{j} \mathcal{T}_{\off}(\ketbra{i}{i}) \ket{j} = \widetilde{\alpha}^2\left( \sinc^2 \left( \frac{\Delta_S(i,j) t}{2} \right) + q(1) \sinc^2 \left(\frac{(\Delta_S(i,j) + \gamma) t}{2} \right) \right).
    \end{align}
    
    Now for $i < j$ we find that the on-resonance term is 
    \begin{equation}
    i < j \text{ and } |\Delta_S(i,j) + \gamma| \le \delta_{\min} \implies \bra{j} \mathcal{T}_{\on}(\ketbra{i}{i}) \ket{j} = \widetilde{\alpha}^2 q(1) \sinc^2 \left( \frac{(\Delta_S(i,j) + \gamma)t}{2} \right). \label{eq:on_resonance_i_leq_j}
    \end{equation}
    Similarly to before the $q(1)$ prefactor tells us the ancilla starts in the excited state. This matches with the energy argument by noting that $\Delta_S(i,j) \le 0$ and that the argument to $\sinc$ is symmetric, which allows us to write it as $|\Delta_S(i,j)| - \gamma$; indicating that the system gains energy $|\Delta_S(i,j)|$ and the ancilla energy drops by $-\gamma$. In this scenario the $\Delta_S(i,j) - \gamma$ term is off-resonance and we have
    \begin{align}
        i < j \text{ and } |\Delta(i,j) + \gamma| \le \delta_{\min} \implies  \bra{j} \mathcal{T}_{\off}(\ketbra{i}{i}) \ket{j} = \widetilde{\alpha}^2\left( \sinc^2 \left( \frac{\Delta_S(i,j) t}{2} \right) + q(0) \sinc^2 \left(\frac{(\Delta_S(i,j) - \gamma) t}{2} \right) \right).
    \end{align}

    Now to compute the $i = j$ case, it is sufficient to utilize our results from the $i \neq j$ scenario. This is because our second order correction has zero trace $\trace{\mathcal{T}(\rho)} = 0$ from Theorem~\ref{thm:second_order_transition}, so we can define the on-resonance and off-resonance terms as the following
    \begin{align}
        \bra{i} \mathcal{T}(\ketbra{i}{i}) \ket{i} &= - \widetilde{\alpha}^2 \sum_{k \neq i} \bra{k} \mathcal{T}(\ketbra{i}{i}) \ket{k} \\
        &= - \widetilde{\alpha}^2 \sum_{k \neq i} \bra{k} \left( \mathcal{T}_{\on}(\ketbra{i}{i}) + \mathcal{T}_{\off}(\ketbra{i}{i}) \right) \ket{k} \\
        &\eqqcolon  \bra{i} \mathcal{T}_{\on}(\ketbra{i}{i}) \ket{i} + \bra{i} \mathcal{T}_{\off}(\ketbra{i}{i}) \ket{i}.
    \end{align}
    By plugging in Eqs. \eqref{eq:on_resonance_i_geq_j} and \eqref{eq:on_resonance_i_leq_j} we are done with the self-transition terms. 
\end{proof}

\subsection{Markovian Dynamics and Error Terms} \label{sec:markov}

Now that we have fully computed the significant contributors to the output of our channel $\Phi$, we move on to characterize the behavior of the channel as a Markov chain with noise. 
A Markov chain is a random process that involves a walker transitioning to vertices on a graph wherein the probability of transition does not depend on the history of the walker.  Specifically, in this context we view the vertices in this graph as the eigenstates of the Hamiltonian.  The repeated interaction model because of the lack of coherences in the weak coupling limit can be interpreted as a Markov process over these eigenstates with transitions probabilities given by the above analysis. 

Specifically, the Markov chain is dictated by the $\Phi(\rho; 0)$ and $\mathcal{T}_{\on}$ terms in the weak-coupling expansion, for $[\rho, H_S] = 0$ we showed that $\Phi(\rho; 0) = \identity(\rho)$, so from now on we will specifically only deal with such density matrices and characterize the zeroth order term as an identity map. As for the Markov chain, we will use normal font to denote matrices, such as $I$ for the identity matrix and $T$ for the transition term added on. We use $e_i$ to denote the basis vector associated with the quantum state $\ketbra{i}{i}$ and $p$ to denote the probability vector for $\rho$ associated with its eigenvalues.
\begin{lemma}[Quantum Dynamics to Classical Markov Chain] \label{lem:quantum_to_classical}
    Let $T$ be the matrix defined by 
    \begin{equation}
        e_i^T T e_j \coloneqq \bra{i} \mathcal{T}_{\on}(\ketbra{j}{j}) \ket{i}.
    \end{equation}
    The matrix $I + T$ is a column stochastic matrix and models the Markovian dynamics of our thermalizing channel up to $\bigo{\alpha^2 t^2}$,
    \begin{equation}
        \bra{j} (\identity + \mathcal{T}_{\on})^{\circ L} (\ketbra{i}{i}) \ket{j} = e_j^T (I + T)^L e_i.
    \end{equation}
    By linearity of $\identity + \mathcal{T}_{\on}$ this identity extends to any diagonal density matrix input $\rho = \sum_i p(i) \ketbra{i}{i}$.
\end{lemma}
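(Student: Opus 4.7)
There are two things to establish: (i) $I+T$ is a column stochastic matrix, and (ii) for every $i,j$ the superoperator identity $\bra{j}(\identity+\mathcal{T}_{\on})^{\circ L}(\ketbra{i}{i})\ket{j} = e_j^T(I+T)^L e_i$ holds. The plan is to prove (i) directly from Definition~\ref{def:transition}, then derive (ii) from a single-step correspondence and induct on $L$.

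For (i), the column sum of $T$ indexed by $j$ is $\sum_i \bra{i}\mathcal{T}_{\on}(\ketbra{j}{j})\ket{i}=\Tr\mathcal{T}_{\on}(\ketbra{j}{j})$, which vanishes by the trace identity Eq.~\eqref{eq:same_state_transition_resonances}, so each column of $I+T$ sums to $1$. Off-diagonal nonnegativity is immediate from Eq.~\eqref{eq:on_resonance}: every off-diagonal entry is $\widetilde{\alpha}^2$ times a nonnegative thermal weight $q(0)$ or $q(1)$, times an indicator, times $\sinc^2$. The diagonal entries read $(I+T)_{jj}=1-\sum_{i\neq j}\bra{i}\mathcal{T}_{\on}(\ketbra{j}{j})\ket{i}$, and each of the at most $\dim_S-1$ summands is bounded by $\widetilde{\alpha}^2 = \alpha^2 t^2/(\dim+1)$; nonnegativity therefore holds in the weak-coupling window $\alpha^2 t^2 \lesssim (\dim+1)/\dim_S$, which is precisely what the $\bigo{\alpha^2 t^2}$ qualifier in the lemma statement refers to.

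For (ii), the key observation is that $\mathcal{T}_{\on}$ preserves diagonality in the $H_S$ eigenbasis. This is inherited from part~3 of Theorem~\ref{thm:second_order_transition}: Lemma~\ref{lem:big_one} (Eq.~\eqref{eq:el_gigante_dos}) shows termwise in the $(a,b)$ sum that the second-order output of a diagonal input is a sum of diagonal projectors $\ketbra{a,b}{a,b}$, so any subset of those summands --- in particular the on-resonance subset defining $\mathcal{T}_{\on}$ --- remains diagonal. Applied to a diagonal $\rho=\sum_k p(k)\ketbra{k}{k}$, linearity then gives
\begin{equation}
\bra{j}(\identity+\mathcal{T}_{\on})(\rho)\ket{j} = \sum_k p(k)\bigl(\delta_{jk}+\bra{j}\mathcal{T}_{\on}(\ketbra{k}{k})\ket{j}\bigr) = \sum_k (I+T)_{jk}\,p(k) = \bigl((I+T)p\bigr)_j,
\end{equation}
so a single application of $\identity+\mathcal{T}_{\on}$ implements a single left multiplication by $I+T$ on the probability vector. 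Since the output is again diagonal, a trivial induction on $L$ gives $(I+T)^L p$, and specializing to $\rho=\ketbra{i}{i}$ (i.e.\ $p=e_i$) yields the stated identity. The extension to arbitrary diagonal $\rho$ follows by linearity of the superoperator.

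\textbf{Anticipated obstacle.} The only subtle point is justifying that the \emph{restriction} $\mathcal{T}_{\on}$, which was defined by selectively keeping indicator-gated summands of the full $\mathcal{T}$, still inherits the coherence-killing property of $\mathcal{T}$. Once one notes that the no-coherence statement in Theorem~\ref{thm:second_order_transition} holds at the level of each individual summand in Eq.~\eqref{eq:el_gigante_dos} rather than only after summation, the restriction argument is automatic; no further Haar-integral computation is required.
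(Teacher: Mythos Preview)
Your proposal is correct and follows essentially the same approach as the paper: both establish diagonality preservation of $\mathcal{T}_{\on}$ (you via the termwise argument on Eq.~\eqref{eq:el_gigante_dos}, the paper via an inner induction citing Theorem~\ref{thm:second_order_transition}), then induct on $L$ to lift the single-step correspondence to $(I+T)^L$. You are actually slightly more thorough than the paper, which only verifies that the columns of $I+T$ sum to $1$ and does not explicitly address nonnegativity of the diagonal entries; your weak-coupling bound on the diagonal fills that gap.
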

\begin{proof}
    We prove this inductively on $L$. The base case of $L = 1$ is trivial from the defintion of $T$
    \begin{align}
    \bra{j} (\identity + \mathcal{T}_{\on})(\ketbra{i}{i}) \ket{j} = \delta_{i,j} + \bra{j} \mathcal{T}_{\on}(\ketbra{i}{i}) \ket{j} = e_j^T (I +  T) e_i.
\end{align}
For the inductive step we will rely on the fact that there are no off-diagonal elements for diagonal inputs. 
\begin{align}
    \bra{j} \mathcal{T}_{\on} (\ketbra{i}{i}) \ket{k} = \delta_{j,k} \bra{j} \mathcal{T}_{\on} (\ketbra{i}{i}) \ket{j} \implies \bra{j} \mathcal{T}_{\on}^{\circ L} (\ketbra{i}{i}) \ket{k} = \delta_{j,k} \bra{j} \mathcal{T}_{\on}^{\circ L} (\ketbra{i}{i}) \ket{j}.
\end{align}
This is again by induction where the case $L = 1$ is proved in Theorem \ref{thm:second_order_transition} and the inductive step is 
\begin{align}
    \bra{j} \mathcal{T}_{\on}^{\circ L} (\ketbra{i}{i}) \ket{k} &= \bra{j} \mathcal{T}_{\on} \left( \mathcal{T}_{\on}^{\circ L - 1} (\ketbra{i}{i}) \right) \ket{k} \\
    &= \sum_{m, n} \bra{j} \mathcal{T}_{\on}\left( \ketbra{m}{m}\mathcal{T}_{\on}^{\circ L - 1}(\ketbra{i}{i}) \ketbra{n}{n}\right) \ket{k} \\
    &= \sum_{m, n} \delta_{m,n} \bra{m}\mathcal{T}_{\on}^{\circ L - 1}(\ketbra{i}{i}) \ket{m} \bra{j} \mathcal{T}_{\on}\left(   \ketbra{m}{m} \right) \ket{k} \\
    &= \sum_{m} \bra{m}\mathcal{T}_{\on}^{\circ L - 1}(\ketbra{i}{i}) \ket{m} \delta_{j,k} \bra{j} \mathcal{T}_{\on}\left(   \ketbra{m}{m} \right) \ket{j} \\
    &= \delta_{j,k} \bra{j} \mathcal{T}_{\on}^{\circ L} (\ketbra{i}{i}) \ket{j}.
\end{align} 
This argument points the way towards how we will prove the inductive step in our stochastic conversion, starting with
\begin{align}
    \bra{j} (\identity + \mathcal{T}_{\on})^{\circ L}(\ketbra{i}{i}) \ket{j} &= \bra{j} \left( (\identity + \mathcal{T}_{\on})^{\circ L - 1} (\ketbra{i}{i}) + \mathcal{T}_{\on} \circ (\identity + \mathcal{T}_{\on})^{\circ L - 1} (\ketbra{i}{i}) \right)\ket{j} \\
    &= e_j^T (\identity + T)^{L - 1} e_i + \bra{j} \mathcal{T}_{\on} \circ (\identity + \mathcal{T}_{\on})^{\circ L - 1} (\ketbra{i}{i}) \ket{j} . \label{eq:matrix_reloaded1}
\end{align}
We can use the inductive hypothesis on the term on the left and we now have to break down the $\mathcal{T}_{\on}$ term. 
\begin{align}
    \bra{j} \mathcal{T}_{\on} \circ (\identity + \mathcal{T})^{\circ L - 1} (\ketbra{i}{i}) \ket{j} &= \sum_{m, n} \bra{j} \mathcal{T}_{\on}\left( \ketbra{m}{m} (\identity + \mathcal{T}_{\on})^{\circ L - 1} (\ketbra{i}{i}) \ketbra{n}{n} \right) \ket{j} \\
    &= \sum_{m} \bra{j} \mathcal{T}_{\on} \left( \ketbra{m}{m} \right) \ket{j} e_m^T (I + T)^{L - 1} e_i \\
    &= \sum_m e_j^T T e_m e_m^T (I + T)^{L -1} e_i \\
    &= e_j^T T(I + T)^{L-1} e_i.
\end{align}
Substituting this into Eq. \eqref{eq:matrix_reloaded1} yields \begin{equation}
     \bra{j} (\identity + \mathcal{T}_{\on})^{\circ L}(\ketbra{i}{i}) \ket{j} = e_j^T (I + T)^{L} e_i.
\end{equation}

Our final step in the proof is to show that $I + T$ is column-stochastic. This is straightforward from our definition of $T$
\begin{equation}
    \sum_i e_i^T (I + T) e_j = 1 + \sum_i \bra{i} \mathcal{T}_{\on}(\ketbra{j}{j}) \ket{i}.
\end{equation}
Now we use the fact that $\bra{j} \TT_{\on}(\ketbra{j}{j}) \ket{j} = - \sum_{i \neq j} \bra{i} \TT_{\on}(\ketbra{j}{j}) \ket{i}$ from Eq. \eqref{eq:same_state_transition_resonances} to conclude that $I + T$ is column stochastic.
\end{proof}

Since we will be effectively reducing our quantum dynamics to classical dynamics over the eigenbasis for $H_S$ we will need bounds on the convergence of Markov chains. This is a very deep area of research, with many decades of results, so we point interested readers to the comprehensive book by Levin and Peres \cite{levin2017markov}. As we will be dealing with non-reversible Markov chains we unfortunately cannot use the relatively well-developed theory for reversible Markov chains. Luckily, we will only need the following theorem due to Jerison.
\begin{theorem}(Jerison's Markov Relaxation Time \cite{jerison2013general}) \label{thm:markov_chain_bound}
    Let $M : \mathbb{R}^{N} \to  \mathbb{R}^{N}$ be an ergodic Markov transition matrix acting on an $N$ dimensional state space with absolute spectral gap $\lambda_{\star} \coloneqq 1 - \max_{i > 1} |\lambda_i(M)|$, where the eigenvalues of $M$ are ordered $1 = \lambda_1 \ge \lambda_2 \ge \ldots \ge \lambda_N \geq -1$. Given this gap, if the number of steps $L$ in the Markov chain satisfies the following bound
    \begin{align}
        L &\ge \frac{N}{\lambda_{\star}} \left( 2\log \frac{1}{\lambda_{\star}} + 4(1 + \log 2) +  \frac{1}{N} (2 \log \left( \frac{1}{\epsilon} \right) - 1) \right) \eqqcolon \frac{N}{\lambda_\star} J,
    \end{align}
    where $J$ is the collection of logarithmic and constant terms that we will typically ignore in asymptotic notation, then the resulting state $M^L \vec{x}$ is $\epsilon$ close to the fixed point
    \begin{equation}
        \forall \vec{x} \text{ s.t. } x_i \ge 0 \text{ and } \sum_i x_i = 1, \quad \norm{\vec{\pi} - M^L \vec{x}}_1 \le \epsilon.
    \end{equation}
    We use $\vec{\pi}$ to denote the unique eigenvector of eigenvalue 1 for $M$.
\end{theorem}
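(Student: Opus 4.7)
The plan is to adapt the familiar spectral-gap mixing argument from reversible Markov chains to the non-reversible setting, where a direct application of the spectral theorem fails. First I would fix an arbitrary probability vector $\vec{x}$ and decompose it as $\vec{x} = \vec{\pi} + \vec{v}$, where $\vec{v} = \vec{x} - \vec{\pi}$ is mean-zero (its entries sum to $0$) and $\vec{v}$ lies in the invariant subspace $V_0 \coloneqq \{\vec{u} \in \mathbb{R}^N : \sum_i u_i = 0\}$. Because $M \vec{\pi} = \vec{\pi}$, the quantity to control is $\|M^L \vec{v}\|_1$, and one has the trivial bound $\|\vec{v}\|_1 \le 2$.

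Next I would convert from the $\ell_1$ norm used in the conclusion to the $\ell_2$ norm used by the spectrum, which introduces a $\sqrt{N}$ factor: $\|M^L \vec{v}\|_1 \le \sqrt{N}\,\|M^L \vec{v}\|_2 \le \sqrt{N}\,\|M^L|_{V_0}\|_{\mathrm{op}}\,\|\vec{v}\|_2$. In the reversible case $\|M^L|_{V_0}\|_{\mathrm{op}} = (1-\lambda_\star)^L$ because $M$ is self-adjoint in the $\pi$-weighted inner product and diagonalizable with real spectrum, so one finishes immediately by taking logarithms. The non-reversible setting is qualitatively different: $M$ may be non-normal, possess complex eigenvalues, and carry nontrivial Jordan blocks, so the operator norm of $M^L$ can strictly exceed its spectral radius $\rho(M|_{V_0}) = 1 - \lambda_\star$ for small $L$.

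The main obstacle, and the technical heart of Jerison's argument, is therefore to bound $\|M^L|_{V_0}\|_{\mathrm{op}}$ in terms of the absolute spectral gap $\lambda_\star$ with an explicit, non-asymptotic constant. The strategy is to quantify Gelfand's formula: write $M|_{V_0}$ in Jordan form and show that the polynomial ``overshoot'' factors coming from Jordan blocks of size at most $N$ are absorbed once $L$ exceeds a ``burn-in'' time proportional to $N/\lambda_\star$. Concretely, one shows $\|M^L|_{V_0}\|_{\mathrm{op}} \le C(N)\,(1-\lambda_\star)^{L - O(N/\lambda_\star)}$, and a careful bookkeeping of the constant $C(N)$ through the Jordan decomposition (or, equivalently, through a singular-value expansion of $M^k$ for $k \sim N$) is what produces the leading $\frac{N}{\lambda_\star}$ prefactor in the theorem. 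This is the step I expect to be the main obstacle, since it is the only place where the non-reversibility of $M$ plays a genuine role and where the argument departs from the textbook reversible proof.

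Finally I would assemble the pieces: combining the $\sqrt{N}$ norm-conversion factor, the $\|\vec{v}\|_2 \le \sqrt{N}$ crude bound, and the operator-norm bound above gives
\begin{equation*}
\|M^L \vec{x} - \vec{\pi}\|_1 \;\le\; N \cdot C(N) \cdot (1-\lambda_\star)^{L - O(N/\lambda_\star)},
\end{equation*}
and setting the right-hand side equal to $\epsilon$, taking logarithms, and using $\log(1/(1-\lambda_\star)) \ge \lambda_\star$, one solves for $L$. The dominant term is $L \gtrsim N/\lambda_\star$, with the logarithmic terms $\tfrac{1}{N}\log(1/\epsilon)$, $\log(1/\lambda_\star)$, and the absolute constants gathered into the factor $J$ exactly as stated. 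Since this theorem is cited from Jerison's work, I would not reprove the Jordan-block estimate in detail but simply invoke it as the key black-box input.
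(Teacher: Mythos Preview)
The paper does not provide a proof of this theorem: it is stated as a quoted result from Jerison's work \cite{jerison2013general} and used throughout as a black box, so there is no ``paper's own proof'' to compare against. Your sketch of the underlying mechanism (decompose $\vec{x}=\vec{\pi}+\vec{v}$, pass to $\ell_2$, and control the Jordan-block overshoot of the non-normal operator $M|_{V_0}$ to get the $N/\lambda_\star$ burn-in) is a reasonable outline of why such a bound holds, and you correctly identify that the Jordan-block estimate is the only non-reversible ingredient. Since you yourself conclude by saying you would invoke Jerison's result as a black box, your proposal is effectively aligned with what the paper does: cite it and move on.
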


Now that we have an idea of how long it takes for our Markov chain to converge to the fixed points we need to show which states are actually fixed points. We demonstrate that for finite $\beta$ any fixed point must satisfy a summation of detailed-balance terms. This fixed point is unique if the Markov chain is ergodic, which we do not argue in this lemma as an arbitrary thermalization channel $\Phi$ may not be ergodic. For the ground state limit of $\beta \to \infty$ we show that the Markov matrix $I + T$ is upper triangular, which is crucial to our analysis of the spectral gap of the Markov chain in later results. We also demonstrate that the ground state is a fixed point in this limit nearly trivially.
\begin{lemma}[Markov Chain Fixed Points]\label{lem:fixed_point}
    Let $T$ be the transition matrix with sum zero columns $\sum_j e_j^T T e_i$ for all $i$, negative diagonal entries $e_i^T T e_i \leq 0$, and off-diagonals smaller than 1 $e_j^T T e_i \leq 1$ for $j \neq i$, associated with the on-resonance term $\TT_{\on}$ of an arbitrary thermalizing channel $\Phi$. A vector $\vec{p}$ is a fixed point of the Markovian dynamics $I + T$ if and only if it is in the kernel of $T$. This holds for finite $\beta$ if the following is satisfied for all $j$
    \begin{equation}
        \sum_{i \neq j} \frac{e^{-\beta \lambda_S(i)}}{\partfun_S(\beta)} e_j^T T e_i - \frac{e^{-\beta \lambda_S(j)}}{\partfun_S(\beta)} e_i^T T e_j = 0.\label{eq:detailed_balance}
    \end{equation}
    In the $\beta \to \infty$ limit the ground state $e_1$ is a fixed point and $T$ is upper triangular
    \begin{equation}
        \lim_{\beta \to \infty} (I + T) e_1 = e_1 \text{ and }i > j \implies \lim_{\beta \to \infty} e_i^T T e_j = 0.
    \end{equation}
\end{lemma}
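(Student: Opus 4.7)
The proof breaks naturally into the three claims in the statement, and each reduces to algebra on the structural properties of $T$ that have already been proven in Definition~\ref{def:transition} and Theorem~\ref{thm:second_order_transition}.

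First I would dispense with the fixed-point characterization: since $(I + T)\vec{p} = \vec{p}$ if and only if $T\vec{p} = 0$, part one is immediate. For the detailed-balance rewriting, I would simply evaluate the $j$\ts{th} row of $T\vec{p}$ with $p_i = e^{-\beta \lambda_S(i)}/\partfun_S(\beta)$, split off the diagonal term, and substitute the column-sum-zero property $e_j^T T e_j = -\sum_{i \neq j} e_i^T T e_j$ into it. This rewrites
\begin{equation}
    0 = \sum_i e_j^T T e_i \, p_i = \sum_{i \neq j} e_j^T T e_i \, p_i + e_j^T T e_j \, p_j = \sum_{i \neq j} \bigl( e_j^T T e_i \, p_i - e_i^T T e_j \, p_j \bigr),
\end{equation}
which is precisely the claimed condition \eqref{eq:detailed_balance}. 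Since the steps are reversible and the argument holds row-by-row, this gives the ``if'' direction for $\vec{p}$ being a fixed point at finite $\beta$.

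For the $\beta \to \infty$ part I would use Definition~\ref{def:transition}, specifically Eqs.~\eqref{eq:on_resonance_i_geq_j} and \eqref{eq:on_resonance_i_leq_j}, which separate the on-resonance transitions by the sign of $\Delta_S(i,j)$. Recall the ordering convention $i > j \implies \lambda_S(i) \geq \lambda_S(j)$, so $e_i^T T e_j$ with $i > j$ represents a transition from a lower energy eigenstate to a higher one; by Eq.~\eqref{eq:on_resonance_i_leq_j} this contribution carries a prefactor $q(1) = e^{-\beta \gamma}/(1 + e^{-\beta \gamma})$, which vanishes as $\beta \to \infty$. Hence $\lim_{\beta \to \infty} e_i^T T e_j = 0$ for all $i > j$, proving upper triangularity. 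The ground state fixed-point claim then follows immediately: upper triangularity forces $e_i^T T e_1 = 0$ for every $i > 1$, and the column-sum-zero hypothesis forces the remaining diagonal entry $e_1^T T e_1 = -\sum_{i > 1} e_i^T T e_1 = 0$ as well, so $T e_1 = 0$ and $(I + T) e_1 = e_1$.

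The only delicate point, and the one I would flag as the main obstacle, is keeping the eigenvalue-ordering conventions and the ancilla-energy bookkeeping consistent between Definition~\ref{def:transition} and the matrix indexing used for $T$. One must be careful that a ``downward'' system transition ($i > j \to j$) corresponds to the ancilla absorbing energy $\gamma$ (weighted by $q(0)$) while an ``upward'' system transition corresponds to the ancilla emitting energy (weighted by $q(1)$), so that the vanishing of $q(1)$ in the zero-temperature ancilla limit correctly kills precisely the subdiagonal entries of $T$ in the matrix convention $e_i^T T e_j = \bra{i} \TT_{\on}(\ketbra{j}{j}) \ket{i}$. Once these conventions are pinned down, every step above is a direct substitution, with no analytic estimates required beyond those already established.
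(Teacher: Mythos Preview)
Your proposal is correct and follows essentially the same approach as the paper: the detailed-balance condition is obtained by the identical column-sum-zero substitution for the diagonal entry, and the $\beta \to \infty$ claims are established exactly as you describe, by noting that subdiagonal entries of $T$ carry the ancilla prefactor $q(1) = e^{-\beta\gamma}/(1+e^{-\beta\gamma}) \to 0$ and then deducing $Te_1 = 0$ from upper triangularity plus the zero-column-sum property. The only cosmetic difference is that the paper references Theorem~\ref{thm:second_order_transition} for the on-resonance formulas rather than Definition~\ref{def:transition} (the equations you cite, \eqref{eq:on_resonance_i_geq_j} and \eqref{eq:on_resonance_i_leq_j}, live inside that theorem's proof), and your flagged concern about the indexing convention is well-placed and correctly resolved.
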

\begin{proof}
    To show that the thermal state is the fixed point of the zero knowledge thermalizing channel we need to show that 
$T \vec{p}_{\beta} = 0$ and that the Markov chain is ergodic. Ergodicity will be easy to prove so we focus on showing that $T \vec{p}_{\beta} = 0$. This condition can be expressed as
\begin{equation}
    e_j^T T \vec{p}_{\beta} = \sum_i \frac{e^{-\beta \lambda_S(i)}}{\partfun_S(\beta)} e_j^T T e_i = 0.   \label{eq:thermal_state_tmp_1}
\end{equation}
We can make a quick substitution as we know the diagonal elements must equal the sum of the remainder of the column 
\begin{equation}
    e_i^T T e_i = - \sum_{j \neq i} e_j^T T e_i,
\end{equation}
which we can then pull out the $i = j$ term from the sum in Eq. \eqref{eq:thermal_state_tmp_1}
\begin{align}
    e_j^T T \vec{p}_{\beta} &= \sum_{i \neq j} \frac{e^{-\beta \lambda_S(i)}}{\partfun_S(\beta)} e_j^T T e_i - \frac{e^{-\beta \lambda_S(j)}}{\partfun_S(\beta)} \sum_{k \neq j} e_k^T T e_j,
\end{align}
which is 0 if and only if $\vec{p}_{\beta}$ is a fixed point of $I + T$.

We now show the $\beta \to \infty$ case. We can show that $T$ is upper triangular using Theorem \ref{thm:second_order_transition} which gives us the on-resonance transition amplitude. We assume $i < j$, which implies $\Delta_S(i,j) \le 0$, and get
    \begin{align}
        \lim_{\beta \to \infty} e_j^T T e_i &= \lim_{\beta \to \infty} \bra{j} \TT_{\on}(\ketbra{i}{i}) \ket{j} \\
        &= \widetilde{\alpha}^2 \lim_{\beta \to \infty} \left[ \frac{e^{-\beta \gamma}}{1 + e^{-\beta \gamma}} \mathbf{I}[|\Delta_S(i,j) + \gamma| \le \delta_{\min}] \sinc^2 \left(\frac{(\Delta_S(i, j) + \gamma) t}{2} \right) \right] \\
        &= \widetilde{\alpha}^2 \mathbf{I}[|\Delta_S(i,j) + \gamma| \le \delta_{\min}] \sinc^2 \left(\frac{(\Delta_S(i, j) + \gamma) t}{2} \right) \lim_{\beta \to \infty} \frac{e^{-\beta \gamma}}{1 + e^{-\beta \gamma}} \\
        &= 0.
    \end{align}
    This further shows that the ground state is a fixed point, as every other eigenvector must have higher energy and therefore all on-resonance transitions \emph{from} the ground state must be 0
    \begin{align}
        \lim_{\beta \to \infty} e_1^T T e_1 &= \lim_{\beta \to \infty} \bra{1} \TT_{\on}(\ketbra{1}{1}) \ket{1} = -  \sum_{j > 1} \lim_{\beta \to \infty} \bra{j} \TT_{\on}(\ketbra{1}{1}) \ket{j}  = 0.
    \end{align}
    This then shows that the ground state is fixed
    \begin{equation}
        (I + T) e_1 = e_1,
    \end{equation}
    and completes the proof.
\end{proof}

Using the decomposition from Theorem \ref{thm:second_order_transition} and intermediate expressions in its proof we can now show why the off-resonance map $\mathcal{T}_{\off}$ is named ``off-resonance"; even in the worst case scenario of choosing a bad value of $\gamma$ such that all terms in $\mathcal{T}$ end up in $\mathcal{T}_{\off}$ the trace norm of its output is always controllably small via $\alpha$.
\begin{corollary} \label{cor:t_off_norm}
    The induced trace norm of the off-resonance map $\mathcal{T}_{\off}(\rho)$, for all density matrices $\rho$ such that $[\rho, H_S] = 0$ and $\rm{dim}\ge 2$, is upper bounded for all choices of the environment Hamiltonian $\gamma$ by
    \begin{align}
        \norm{\mathcal{T}_{\off}(\rho)}_1 \le \frac{8 \alpha^2}{\delta_{\min}^2}
    \end{align}
\end{corollary}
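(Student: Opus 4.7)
The plan is to reduce the bound to the case of diagonal pure-state inputs, exploit the fact that $\TT_{\off}(\ketbra{i}{i})$ is itself diagonal with sum-zero columns, and then bound each surviving $\sinc^2$ factor in Eq. \eqref{eq:off_resonance} individually via the tail estimate $\sinc^2(x/2) \le 4/x^2$ (Lemma \ref{lem:sinc_poly_approx}). First I would diagonalize $\rho$ in the eigenbasis of $H_S$, writing $\rho = \sum_i p(i) \ketbra{i}{i}$ with $p(i) \ge 0$ and $\sum_i p(i) = 1$, which is legal because $[\rho, H_S] = 0$. Linearity of $\TT_{\off}$ together with the triangle inequality for the trace norm gives $\norm{\TT_{\off}(\rho)}_1 \le \sum_i p(i) \norm{\TT_{\off}(\ketbra{i}{i})}_1 \le \max_i \norm{\TT_{\off}(\ketbra{i}{i})}_1$, so it suffices to prove the claim for eigenstate inputs.

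Next, Theorem \ref{thm:second_order_transition} guarantees that $\TT_{\off}(\ketbra{i}{i})$ is diagonal in the eigenbasis, so its trace norm is just the $\ell_1$ norm of its diagonal. Using the sum-zero identity $\bra{i} \TT_{\off}(\ketbra{i}{i}) \ket{i} = - \sum_{j \ne i} \bra{j} \TT_{\off}(\ketbra{i}{i}) \ket{j}$ from Eq. \eqref{eq:same_state_transition_resonances}, together with the fact that the off-diagonal-in-$j$ entries are manifestly nonnegative sums of $\widetilde{\alpha}^2$ times $\sinc^2$ factors, I can rewrite the trace norm exactly as $\norm{\TT_{\off}(\ketbra{i}{i})}_1 = 2 \sum_{j \ne i} \bra{j} \TT_{\off}(\ketbra{i}{i}) \ket{j}$.

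Finally I bound each of the three summands in Eq. \eqref{eq:off_resonance} separately. The two indicator-gated summands are nonzero only when the corresponding $|\Delta_S(i,j) \pm \gamma| > \delta_{\min}$, which by Lemma \ref{lem:sinc_poly_approx} forces each $\sinc^2$ to be at most $4/(\delta_{\min}^2 t^2)$. The third, ungated summand $\widetilde{\alpha}^2 \sinc^2(\Delta_S(i,j)t/2)$ is handled by the same estimate since $\Delta_S(i,j) \ge \delta_{\min}$ for $i \ne j$ by Eq. \eqref{eq:delta_min_def}. Collecting these with $q(0) + q(1) = 1$ yields $\bra{j} \TT_{\off}(\ketbra{i}{i}) \ket{j} \le 8\widetilde{\alpha}^2/(\delta_{\min}^2 t^2)$ for each $j \ne i$. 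Summing over the $\dim_S - 1$ target states, multiplying by $2$, and substituting $\widetilde{\alpha}^2 = \alpha^2 t^2/(\dim+1)$ with $\dim = 2\dim_S$ produces $16(\dim_S - 1)\alpha^2/((2\dim_S + 1)\delta_{\min}^2)$, and since $16(\dim_S-1)/(2\dim_S+1) < 8$ for every $\dim_S \ge 2$ the desired bound $8\alpha^2/\delta_{\min}^2$ follows. The only mildly subtle step is the last one: one must remember that the untamed $\sinc^2(\Delta_S(i,j)t/2)$ contribution carries no indicator and therefore needs $\delta_{\min} \le \Delta_S(i,j)$ to be handled uniformly; everything else is bookkeeping.
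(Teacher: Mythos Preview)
Your proposal is correct and follows essentially the same route as the paper: both reduce to the diagonal/sum-zero structure from Theorem~\ref{thm:second_order_transition}, bound each surviving $\sinc^2$ by $4/(\delta_{\min}^2 t^2)$ via Lemma~\ref{lem:sinc_poly_approx}, and then count the $\dim_S-1$ contributions against the $\dim+1$ in $\widetilde\alpha^2$. The only cosmetic difference is that you exploit nonnegativity to write $\norm{\TT_{\off}(\ketbra{i}{i})}_1 = 2\sum_{j\ne i}\bra{j}\TT_{\off}(\ketbra{i}{i})\ket{j}$ directly, whereas the paper bounds the diagonal entry $\bra{i}\TT_{\off}(\ketbra{i}{i})\ket{i}$ separately; the arithmetic ends up the same.
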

\begin{proof}

This result follows from applying bounds on the sinc function from Lemma \ref{lem:sinc_poly_approx} (given in Appendix \ref{sec:appendix_sinc}) to the worst-case scenario off-resonance terms given in Eq. \eqref{eq:off_resonance}. 
    \begin{align}
        i \neq j \implies \abs{\bra{j}\mathcal{T}_{\off}(\ketbra{i}{i})\ket{j}} &\le \widetilde{\alpha}^2\frac{4}{\delta_{\min}^2 t^2} \left( 1 + q(0) + q(1)\right) = \frac{8 \alpha^2}{\delta_{\min}^2(\dim + 1)}.
    \end{align}
    This allows us to bound the off-resonance self-transition term in Theorem~\ref{thm:second_order_transition} as
    \begin{align}
        \abs{\bra{i}\mathcal{T}_{\off}(\ketbra{i}{i})\ket{i}} &= \abs{- \sum_{j \neq i} \bra{j} \mathcal{T}_{\off}(\ketbra{i}{i}) \ket{j}} \le (\dim_S - 1) \frac{8 \alpha^2}{\delta_{\min}^2 (\dim + 1)} \le \frac{4 \alpha^2}{\delta_{\min}^2}.
    \end{align}
    Now we can use this, along with our no off-diagonal output elements of $\mathcal{T}$, to compute the trace norm of the off-resonance map
    \begin{align}
        \norm{\mathcal{T}_{\off}(\rho)}_1 &= \sum_{j} \abs{\bra{j} \mathcal{T}_{\off}(\rho) \ket{j}} \\
        &\le \sum_{i, j} \rho_{i,i} \abs{\bra{j}\mathcal{T}_{\off}(\ketbra{i}{i})\ket{j}} \\
        &= \sum_{i} \rho_{i,i} \left(\sum_{j \neq i} |\bra{j} \mathcal{T}_{\off}(\ketbra{i}{i}) \ket{j}| + |\bra{i} \mathcal{T}_{\off}(\ketbra{i}{i})\ket{i}| \right) \\
        &\le \sum_{i} \rho_{i,i} \left( (\dim_S - 1) \frac{8 \alpha^2}{\delta_{\min}^2(\dim + 1)} + \frac{4 \alpha^2}{\delta_{\min}^2} \right) \\
        &\le \frac{8 \alpha^2}{\delta_{\min}^2}.
    \end{align}
\end{proof}

The last result in this section that we will need is a bound on the trace norm of the remainder term, which we state in the following theorem.
\begin{theorem}[Remainder Bound] \label{thm:remainder_bound}
    Let $R_{\Phi}(\rho)$ be the remainder term for the second-order Taylor series expansion 
of the quantum channel $\Phi$ acting on an input state $\rho$ about $\alpha=0$ defined via
    $$\Phi(\rho; \alpha) = \Phi(\rho; 0) + \alpha \frac{\partial}{\partial \alpha} \Phi(\rho; \alpha) \big|_{\alpha = 0} + \frac{\alpha^2}{2} \frac{\partial^2}{\partial \alpha^2} \Phi(\rho; \alpha) \big|_{\alpha = 0} + R_{\Phi}(\rho; \alpha_{\star})$$ where the Sch\"{a}tten 1-norm of the remainder operator is bounded above by
    \begin{equation}
        \norm{R_{\Phi}(\rho;\alpha)}_1 \le \frac{16 \sqrt{2}}{\sqrt{\pi}} \dim_S (\alpha t)^3.
    \end{equation}
\end{theorem}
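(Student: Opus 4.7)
The plan is to invoke Taylor's theorem with the Lagrange form of the remainder, which is the form already anticipated by Eq.~\eqref{eq:phi_taylor_series}: there exists $\alpha_\star \in (0,\alpha)$ with
\[
R_\Phi(\rho;\alpha_\star) = \frac{\alpha^3}{6}\,\frac{\partial^3}{\partial\alpha^3}\Phi(\rho;\alpha)\bigg|_{\alpha = \alpha_\star}.
\]
Since $\EE_G$ and $\Tr_{\hilb_E}$ are linear and commute with differentiation in $\alpha$, the third derivative pushes inside the channel, and convexity of $\norm{\cdot}_1$ under expectation together with contractivity of the partial trace on $\norm{\cdot}_1$ reduce the theorem to an upper bound on $\EE_G \norm{\tfrac{\partial^3}{\partial\alpha^3}\Phi_G(\sigma;\alpha)}_1$ for $\sigma = \rho\otimes\rho_E$ that is uniform in $\alpha \in (0,\alpha_\star)$, so the dependence on the unknown intermediate point $\alpha_\star$ drops out.

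To obtain that bound I would expand the derivative of $\Phi_G(\sigma;\alpha) = U_\alpha \sigma U_\alpha^\dagger$ via the Leibniz rule,
\[
\frac{\partial^3}{\partial\alpha^3}\bigl(U_\alpha \sigma U_\alpha^\dagger\bigr) = \sum_{k=0}^{3}\binom{3}{k}\,U_\alpha^{(k)}\,\sigma\,\bigl(U_\alpha^{(3-k)}\bigr)^\dagger,
\]
and control each $U_\alpha^{(k)} = \frac{\partial^k}{\partial\alpha^k} e^{i(H+\alpha G)t}$ through the iterated Duhamel/Dyson expansion, which writes $U_\alpha^{(k)}$ as $k!(it)^k$ times an integral over the $k$-simplex (of volume $t^k/k!$) of products $V_k G V_{k-1} G \cdots G V_0$ with unitary $V_j$. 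The $k!$ cancels the simplex volume, and unitary invariance of $\norm{\cdot}_\infty$ leaves the clean bound $\norm{U_\alpha^{(k)}}_\infty \le (t\norm{G}_\infty)^k$. This is the same Duhamel manipulation already used in the proof of Theorem~\ref{lem:first_order_phi}.

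The next step upgrades these operator-norm bounds into trace-norm bounds using the low rank of $\sigma$. Because the environment is a single qubit, $\mathrm{rank}(\sigma) = \mathrm{rank}(\rho)\,\mathrm{rank}(\rho_E) \le 2\dim_S$, and sandwiching does not increase rank, so every Leibniz summand obeys $\mathrm{rank}\bigl(U_\alpha^{(k)}\sigma(U_\alpha^{(3-k)})^\dagger\bigr)\le 2\dim_S$. Applying $\norm{X}_1 \le \mathrm{rank}(X)\,\norm{X}_\infty$ together with the submultiplicative bound $\norm{U_\alpha^{(k)}\sigma (U_\alpha^{(3-k)})^\dagger}_\infty \le (t\norm{G}_\infty)^3\norm{\sigma}_\infty \le (t\norm{G}_\infty)^3$, summing the $\sum_k \binom{3}{k} = 8$ Leibniz terms, applying the contractive partial trace, and attaching the $\alpha^3/6$ Lagrange prefactor yields $\norm{R_\Phi(\rho;\alpha)}_1 \le \tfrac{8\dim_S}{3}(\alpha t)^3\,\EE_G \norm{G}_\infty^3$ up to constants.

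The main obstacle is the random-matrix moment $\EE_G\norm{G}_\infty^3$: matching the stated constant $\tfrac{16\sqrt{2}}{\sqrt{\pi}}$ requires a bound whose coefficient is $\sqrt{2/\pi} = \EE|\mathcal{N}(0,1)|$, which strongly suggests controlling $\norm{G}_\infty$ pointwise by an expression linear in the Gaussian eigenvalues $D_{ii}$—for instance via $\norm{G}_\infty \le \sum_i |D_{ii}|$ or a H\"older-style split of the three operator-norm factors that keeps only one copy of $\norm{G}_\infty$—so that the expectation collapses to powers of the first absolute Gaussian moment rather than to the sharper but constant-heavy subgaussian maximum bound $\EE\norm{G}_\infty = O(\sqrt{\log\dim})$. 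Once that moment is in hand, closing the theorem reduces to arithmetic bookkeeping of the combinatorial factors (the rank bound $2\dim_S$, the Leibniz weight $2^3$, and the $3!$ in the Taylor denominator) to produce $\tfrac{16\sqrt{2}}{\sqrt{\pi}}\dim_S (\alpha t)^3$.
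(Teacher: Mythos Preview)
Your overall strategy---Lagrange remainder, Leibniz expansion of $U_\alpha\sigma U_\alpha^\dagger$, iterated Duhamel bounds $\norm{U_\alpha^{(k)}}_\infty \le (t\norm{G}_\infty)^k$, and contractivity of the partial trace---is exactly the paper's approach, and your Leibniz bookkeeping is in fact tidier than the paper's explicit $(A)$--$(F)$ term-by-term differentiation.

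The rank detour, however, is both unnecessary and the source of your obstacle. Drop it: from $\norm{A\sigma B}_1 \le \norm{A}_\infty\norm{\sigma}_1\norm{B}_\infty$ and $\norm{\sigma}_1 = 1$ you already obtain $\norm{U_\alpha^{(k)}\sigma(U_\alpha^{(3-k)})^\dagger}_1 \le (t\norm{G}_\infty)^3$, so the Leibniz sum gives $\norm{\partial_\alpha^3\Phi_G(\sigma)}_1 \le 8(t\norm{G}_\infty)^3$ with no $\dim_S$ factor yet. The dimension then enters exactly where the paper places it, in the Gaussian moment: since $\norm{G}_\infty^3 = \max_i|D_{ii}|^3 \le \sum_{i=1}^{2\dim_S}|D_{ii}|^3$ pointwise, one has $\EE_G\norm{G}_\infty^3 \le 2\dim_S\,\EE|y|^3 = 4\dim_S\sqrt{2/\pi}$, and the stated constant follows (indeed with a factor of $3$ to spare). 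Your rank route double-counts: it inserts a factor of $2\dim_S$ from the rank and then still requires $\EE\norm{G}_\infty^3$, which is not $O(1)$ but grows like $(\log\dim_S)^{3/2}$, so neither of your suggested fixes---bounding $\norm{G}_\infty \le \sum_i|D_{ii}|$ or a H\"older split retaining one $\norm{G}_\infty$---would recover a bound linear in $\dim_S$ once the rank factor is already present.
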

The proof of the remainder bound follows from the triangle inequality and remainder bounds on Taylor series and is given in Section \ref{sec:weak_coupling_remainder_bound}.

\section{Specific Systems} \label{sec:specific_systems}
In this section we analyze the dynamics of the weak-coupling expansion from Section \ref{sec:weak_coupling} as a thermal state preparation routine for specific systems. We keep these specific systems separate from the generic results in Section \ref{sec:general_systems} for a few reasons. The first being that these systems are simple enough that most readers should have a solid grasp of their spectrums. The second being that they are interesting in their own right and may be of use for researchers in other fields than quantum algorithms design who are looking to implement thermal state preparation for their projects. For example, even the single qubit scenario in Section \ref{sec:single_qubit} is already complex enough that there is active research on this Hamiltonian in the field of open quantum systems \cite{prositto2025equilibrium}. Lastly, these scenarios are simple enough that they allow for concrete computation of their spectra, which can be a helpful warm-up for the more abstract settings later.

For a single qubit we are able to give asymptotic runtime bounds for preparing arbitrary $\beta$ thermal states with specific values of $\alpha$ and $t$ to guarantee a trace distance error of $\epsilon$ from the thermal state. Further, our results hold without exact knowledge of the system energy gap $\Delta$. We do require knowledge of a window of width $2 \sigma$ that contains $\Delta$, and our runtime is parametrized in terms of $\sigma$ with a simple expression in the exact knowledge limit of $\sigma \to 0$.

After we develop the theory for a single qubit we extend our sights to a truncated harmonic oscillator in Section \ref{sec:harmonic_oscillator}. This section can be viewed as a warmup to the proofs that work for general systems in Section \ref{sec:general_systems}. One unique aspect of the harmonic oscillator is that all of the energy eigenvalues are integer multiples of single difference $\Delta$, which means that by tuning our environment gap $\gamma$ to $\Delta$ we should be able to thermalize. One difficulty with this however is that ergodicity is now a bit harder to show, as with randomly choosing $\gamma$ there is always a nonzero probability to transition between arbitrary states $i$ and $j$ but with the harmonic oscillator multiple ``hops'' may be necessary if the states are not adjacent in energy eigenvalues.

Lastly, another reason to study these two specific systems is that they serve as crucial test beds for numerical studies. As our analysis tends to work with upper bounds to obtain realistic estimates of total simulation times we need to simulate the channel numerically. Although we do study more realistic systems numerically in Section \ref{sec:general_numerics}, larger systems are more costly in terms of classical computation needed to simulate. Because of the reduction in classical compute time needed, we are able to get a more complete view of our channel through the lens of the small systems in Section \ref{sec:specific_numerics}.

\subsection{Single Qubit} \label{sec:single_qubit}
The first system we study is the qubit $\hilb_S = \mathbb{C}^2$. This system is simple enough that we can explicitly write the dynamics as a $2 \times 2$ transition matrix, which makes it easy to compute required simulation times and easy for the reader to follow. Although this system could be viewed as a warmup to the more general systems in Section \ref{sec:general_systems}, as the proof techniques are very similar, we remark that this system does have some unique properties. The biggest difference is that we do not assume any kind of belief distribution of the eigenvalue gap $\Delta$ of the system. We only require that a window of width $2 \sigma$ is known that contains $\Delta$. We can then characterize the runtime in terms of $\sigma$ and in addition to determining runtime we find it determines an upper bound on the $\beta$ that can be prepared at low error.

The other unique phenomenon with the single qubit scenario is that the total simulation time needed is \emph{independent} of $\beta$. Although this may seem incorrect, as most existing thermal state preparation algorithms tend to scale at least linearly with $\beta$, this is in fact a property of the underlying Markov chain. The rate of convergence of the Markov chain is dictated by the spectral gap, which for this system is shown to be $\widetilde{\alpha}^2$. The only aspect of the Markov chain that changes with $\beta$ is what the fixed point is and the Markov Relaxation Theorem \ref{thm:markov_chain_bound} provides relaxation guarantees regardless of initial or final state.

\begin{theorem} \label{thm:single_qubit}
    Let $H_S$ be an arbitrary single qubit Hamiltonian with eigenvalue gap $\Delta$, $\rho$ any input state diagonal in the $H_S$ basis, $[\rho, H_S] = 0$, and $L$ the number of interactions simulated. Given a window of width $2 \sigma$ that is promised to contain $\Delta$ and satisfies the inequality
    \begin{align}
        \sigma \le \min \set{\frac{\epsilon}{2\beta}, \Delta \sqrt{\frac{\epsilon}{2}}},
    \end{align}
    then the following parameter choices 
    \begin{align}
        \alpha &= \frac{1}{t^3(\Delta + \sigma)^2}, \nonumber \\
        t &\in \frac{1}{\sigma}\left[\sqrt{1 - \sqrt{1 - \frac{2 \sigma^2}{\Delta^2 \epsilon}} }, \sqrt{1 + \sqrt{1 - \frac{2 \sigma^2}{\Delta^2 \epsilon}} } \right], \nonumber\\
        \text{and } L &= \left\lceil\frac{10}{\alpha^2 t^2 (1 - \sigma^2 t^2 / 2)} \left( 2 \log \frac{5}{\alpha^2 t^2 \sinc^2(|\Delta - \gamma|t/2)} + 4( 1 + \log 2) - \frac{1}{2} + \log \frac{2}{\epsilon} \right)\right\rceil
    \end{align}
    are sufficient to guarantee thermalization of the form $\norm{\rho_S(\beta) - \Phi^{\circ L}(\rho)}_1 \in \bigotilde{\epsilon}$. In the limit as $\sigma \to 0$, the total simulation time required for all $L$ interactions required scales as
    \begin{equation}
        L \cdot t \in \widetilde{O} \left( \frac{1}{\Delta \epsilon^{2.5}}\right).
    \end{equation}
\end{theorem}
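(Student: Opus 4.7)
The plan is to reduce the single qubit case to an explicit $2\times 2$ classical Markov chain and then control the three types of deviation from this idealization: Taylor remainder, off-resonance terms, and imperfect tuning of $\gamma$. By Lemma \ref{lem:quantum_to_classical} and Theorem \ref{thm:second_order_transition}, the on-resonance part of the second-order expansion acts on the two-dimensional probability vector $(p_1, p_2)^T$ via $I + T$, where for $\dim_S = 2$ the only relevant system gap is $\Delta = \Delta_S(2,1)$ and the off-diagonal entries of $T$ are $\widetilde{\alpha}^2 q(0) \sinc^2((\Delta - \gamma)t/2)$ (down-transition) and $\widetilde{\alpha}^2 q(1) \sinc^2((\Delta - \gamma)t/2)$ (up-transition). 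The columns sum to zero, so this is a column stochastic chain whose unique fixed point is immediately seen by inspection to be the thermal distribution $(q(0), q(1))^T$ associated with environment gap $\gamma$. The absolute spectral gap of a column-stochastic $2\times 2$ matrix equals the sum of its two off-diagonals, giving $\lambda_\star = \widetilde{\alpha}^2 \sinc^2((\Delta - \gamma)t/2)$.

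Next I will handle the three error sources. For the remainder, Theorem \ref{thm:remainder_bound} yields $\norm{R_\Phi}_1 \in \bigo{(\alpha t)^3}$, and the off-resonance term is bounded by $8\alpha^2/\Delta^2$ via Corollary \ref{cor:t_off_norm} using $\delta_{\min} = \Delta$. To make the combined per-step error negligible compared to the on-resonance signal $\widetilde{\alpha}^2 \sinc^2((\Delta - \gamma)t/2)$, I will choose $\alpha$ so that $(\alpha t)^3$ is polynomially smaller than $\alpha^2 t^2$, which motivates the scaling $\alpha = 1/(t^3 (\Delta+\sigma)^2)$ so that after $L$ repetitions the error accumulates additively and stays below $\epsilon$. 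The window parameter $\sigma$ enters in two distinct ways: first through the mismatch $|\Delta - \gamma| \le \sigma$ that degrades the sinc factor (requiring $\sigma t$ to be small enough that $\sinc^2(\sigma t/2) \ge 1 - \epsilon/2$, explaining the explicit solvable-in-$t$ interval given in the statement), and second through the fact that the chain's fixed point is the Gibbs state at gap $\gamma$ rather than at $\Delta$, contributing a bias $\norm{\rho_S(\beta)|_\Delta - \rho_S(\beta)|_\gamma}_1$ which a one-line Lipschitz calculation of the Fermi function in $\gamma$ bounds by $O(\beta \sigma)$, giving the $\sigma \le \epsilon/(2\beta)$ constraint.

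Once the spectral gap and the per-step errors are assembled I invoke Jerison's theorem (Theorem \ref{thm:markov_chain_bound}) with $N = 2$ and $\lambda_\star = \widetilde{\alpha}^2 \sinc^2((\Delta - \gamma)t/2) \ge \widetilde{\alpha}^2(1 - \sigma^2 t^2/2)$, which plugged in produces exactly the stated value of $L$ up to the constant-and-log factors bundled in $J$. A triangle inequality then combines (i) the mixing error $\epsilon$ between $M^L \vec{p}_0$ and the Gibbs state at gap $\gamma$, (ii) the $O(\beta \sigma)$ bias between Gibbs at $\gamma$ and Gibbs at $\Delta$, and (iii) the $L \cdot \bigo{(\alpha t)^3 + \alpha^2/\Delta^2}$ accumulation of Taylor and off-resonance errors, each of which will be $\bigotilde{\epsilon}$ under the stated choices.

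Finally, to extract the asymptotic $L \cdot t \in \widetilde{O}(1/(\Delta \epsilon^{2.5}))$ in the $\sigma \to 0$ limit, I substitute $\alpha = 1/(t^3 \Delta^2)$ into $\widetilde{\alpha}^2 = \alpha^2 t^2/(\dim+1)$ to get $\lambda_\star = \Theta(1/(t^4 \Delta^4))$, making the Jerison bound read $L \in \widetilde{O}(t^4 \Delta^4)$; the feasible interval for $t$ collapses to $t = \Theta(1/(\Delta\sqrt{\epsilon}))$ as $\sigma \to 0$ (read off from the stated interval when the inner square root tends to $1$), yielding $L \cdot t \in \widetilde{O}(t^5 \Delta^4) = \widetilde{O}(1/(\Delta \epsilon^{2.5}))$. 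The main obstacle I expect is the bookkeeping around the $\gamma$-mismatch: one must simultaneously keep $\sigma t$ small so that the sinc does not kill the spectral gap, keep $\beta \sigma$ small so that the shifted fixed point is close to the true Gibbs state, and keep $L \cdot (\alpha t)^3$ small so that the remainder does not overwhelm the mixing; balancing these three simultaneously is what pins down the precise interval for $t$ and the exponent $\epsilon^{2.5}$.
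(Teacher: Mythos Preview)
Your proposal is correct and follows essentially the same route as the paper: the three-term triangle inequality decomposition (fixed-point bias, Markov mixing, accumulated off-resonance plus remainder), the explicit $2\times 2$ chain with spectral gap $\widetilde{\alpha}^2\sinc^2((\Delta-\gamma)t/2)$, the Lipschitz bound $\norm{\rho_S(\beta;\gamma)-\rho_S(\beta;\Delta)}_1 \le 2\beta\sigma$, Jerison with $N=2$, and the $\sigma\to 0$ asymptotic all match. One small imprecision: the stated interval for $t$ does not come from requiring $\sinc^2(\sigma t/2)\ge 1-\epsilon/2$ directly, but rather from requiring the \emph{accumulated} off-resonance plus remainder error $L(\norm{\TT_{\off}}_1+\norm{R_\Phi}_1)\in\bigotilde{1/(t^2\Delta^2(1-\sigma^2 t^2/2))}$ to be $\bigotilde{\epsilon}$, which yields the quadratic $\Delta^2 t^2(1-\sigma^2 t^2/2)\ge 1/\epsilon$ in $t^2$; the $1-\sigma^2 t^2/2$ factor enters via the sinc lower bound on the spectral gap inside $L$, not as a standalone constraint.
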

\begin{proof}
    The proof will be structured into three parts. First, we will need a bound on how close the fixed point of the Markov chain is to the thermal state, because  the fixed point  is exactly the thermal state only when $\gamma = \Delta$ and our window of width $\sigma$ is sufficiently small given our error budget. Second, once we have these bounds we then need to determine the number of interactions $L$ that will be necessary to reach the fixed point within trace distance $\epsilon$. Lastly, we use this value of $L$ to bound the accumulative error from the off-resonance mapping $\mathcal{T}_{\off}$ and remainder term $R_{\Phi}$.

    We start by breaking down the trace distance into three components, one for the fixed-point distance from the thermal state, one for the Markov dynamics distance to the fixed-point, and lastly the remainder terms
    \begin{align}
        &\norm{\rho_S(\beta; \Delta) - \Phi^{\circ L}(\rho)}_1 \nonumber \\
        &\le \norm{\rho_S(\beta; \Delta) - \rho_S(\beta; \gamma)}_1 + \norm{\rho_S(\beta; \gamma) - \Phi^{\circ L}(\rho)}_1 \\
        &\le \norm{\rho_S(\beta; \Delta) - \rho_S(\beta; \gamma)}_1 + \norm{\rho_S(\beta; \gamma) - (\identity + \mathcal{T}_{\on})^{\circ L}(\rho)}_1 + \norm{(\identity + \mathcal{T}_{\on})^{\circ L}(\rho) - \Phi^{\circ L}(\rho)}_1 \\
        &\le \norm{\rho_S(\beta; \Delta) - \rho_S(\beta; \gamma)}_1 + \norm{\rho_S(\beta; \gamma) - (\identity + \mathcal{T}_{\on})^{\circ L}(\rho)}_1 +  L \left(\norm{\mathcal{T}_{\off}(\rho)}_1 + \norm{R_{\Phi}}_1 \right). \label{eq:single_qubit_three_errors}
    \end{align}
    We proceed with the leftmost term first. The trace distance can be computed explicitly for a single qubit state as 
 \begin{align}
     \norm{\rho_S(\beta; \gamma) - \rho_S(\beta; \Delta)}_1 &= \abs{\bra{1} \rho_S(\beta; \gamma) \ket{1} - \bra{1}\rho_S(\beta;\Delta)\ket{1}} + \abs{\bra{2} \rho_S(\beta; \gamma) \ket{2} - \bra{2}\rho_S(\beta; \Delta)\ket{2}} \\
     &= \abs{\bra{1} \rho_S(\beta; \gamma) \ket{1} - \bra{1}\rho_S(\beta; \Delta)\ket{1}} + \abs{1 - \bra{1} \rho_S(\beta; \gamma) \ket{1} -1 + \bra{1}\rho_S(\beta; \Delta)\ket{1}} \\
     &= 2 \abs{\bra{1} \rho_S(\beta; \gamma) \ket{1} - \bra{1}\rho_S(\beta; \Delta)\ket{1}}. \label{eq:single_qubit_int_1}
 \end{align}
    Now we expand $\bra{1} \rho_S(\beta; \gamma) \ket{1}$ about $\gamma = \Delta$
    \begin{align}
    \bra{1} \rho_S(\beta; \gamma) \ket{1} = \frac{1}{1 + e^{-\beta \gamma}} &= \frac{1}{1 + e^{-\beta \Delta}} + (\gamma - \Delta) \beta \frac{1}{1 + e^{-\beta \gamma_{\star}}} \frac{e^{-\beta \gamma_{\star}}}{1 + e^{-\beta \gamma_{\star}}} \\
    &= \bra{1} \rho_S(\beta; \Delta) \ket{1} + (\gamma - \Delta) \beta \frac{1}{1 + e^{-\beta \gamma_{\star}}} \frac{e^{-\beta \gamma_{\star}}}{1 + e^{-\beta \gamma_{\star}}},
    \end{align}
    where $\gamma_{\star}$ denotes the special value of $\gamma$ that is guaranteed to make the above equation hold by Taylor's Remainder Theorem.
    Since the rightmost factors can be upper bounded by $\frac{1}{1 + e^{-\beta \gamma_{\star}}} \frac{e^{-\beta \gamma_{\star}}}{1 + e^{-\beta \gamma_{\star}}} \le 1$, this can be rearranged and plugged into Eq. \eqref{eq:single_qubit_int_1} to give the upper bound
 \begin{equation}
 \norm{\rho_S(\beta, \gamma) - \rho_S(\beta, \Delta)}_1 \le 2 \beta |\Delta - \gamma|.
 \end{equation}
Since we require this distance to be less than $\epsilon$, we can upper bound $|\Delta - \gamma| \le \sigma$ and require
\begin{equation}
    \sigma \le \frac{\epsilon}{2 \beta}. \label{eq:single_qubit_ineq_1}
\end{equation}

 Now we move on to the second stage of the proof: computing the number of interactions needed to reach the fixed point of the Markov chain. As the Markov transition matrix is only $2 \times 2$ we will compute it explicitly. To do so, we need the matrix elements for $T$, which can be computed using Theorem \ref{thm:second_order_transition} 
\begin{align}
        \vec{e}_1^T T \vec{e}_1 = \bra{1} \mathcal{T}_{\on}(\ketbra{1}{1}) \ket{1} &= - \widetilde{\alpha}^2 \frac{e^{-\beta \gamma}}{1 + e^{-\beta \gamma}} \sinc^2 \left( \frac{(-\Delta + \gamma)t}{2}\right) \label{eq:single_qubit_markov_1}\\
        \vec{e}_2^T T \vec{e}_1 = \bra{2} \mathcal{T}_{\on}(\ketbra{1}{1}) \ket{2} &=  \widetilde{\alpha}^2 \frac{e^{-\beta \gamma}}{1 + e^{-\beta \gamma}} \sinc^2 \left( \frac{(-\Delta + \gamma)t}{2}\right) \\
        \vec{e}_1^T T \vec{e}_2 = \bra{1} \mathcal{T}_{\on}(\ketbra{2}{2}) \ket{1} &=  \widetilde{\alpha}^2 \frac{1}{1 + e^{-\beta \gamma}} \sinc^2 \left( \frac{(\Delta - \gamma)t}{2}\right)\\
        \vec{e}_2^T T \vec{e}_2 = \bra{2} \mathcal{T}_{\on}(\ketbra{2}{2}) \ket{2} &= - \widetilde{\alpha}^2 \frac{1}{1 + e^{-\beta \gamma}} \sinc^2 \left( \frac{(\Delta - \gamma)t}{2}\right). \label{eq:single_qubit_markov_4}
    \end{align}
This gives us the total Markov chain matrix as 
\begin{align}
    \identity + T &= \begin{bmatrix} 1 & 0 \\ 0 & 1 \end{bmatrix} + \widetilde{\alpha}^2 \sinc^2 \left(\frac{(\Delta - \gamma)t}{2} \right) \frac{1}{1 + e^{-\beta \gamma}}\begin{bmatrix} -e^{-\beta \gamma} & 1 \\ e^{-\beta \gamma} & -1\end{bmatrix},\label{eq:markov_matrix_single_qubit_gamma}
\end{align}
 where it can be seen that
 the solution of
 \begin{equation}
     (\identity +T)\vec{p}_{\beta,\gamma} = \vec{p}_{\beta,\gamma},
 \end{equation}
 is
 \begin{equation}
     \vec{p}_{\beta, \gamma} = \frac{1}{1 + e^{-\beta \gamma}} \vec{e}_1 + \frac{e^{-\beta \gamma}}{1 + e^{-\beta \gamma}} \vec{e}_2
 \end{equation}
 In other words, $\vec{p}_{\beta,\gamma}$ is the fixed point of the Markov chain. To show convergence we will need the spectral gap of Eq. \eqref{eq:markov_matrix_single_qubit_gamma}, which is given as $\lambda_{\star} = \widetilde{\alpha}^2 \sinc^2 \left( \frac{(\Delta - \gamma) t}{2} \right)$. Plugging this in to the Markov Relaxation Theorem \ref{thm:markov_chain_bound} we can compute a lower bound on the number of interactions needed 
 \begin{align}
     L &\ge \frac{2}{\widetilde{\alpha}^2 \sinc^2(|\Delta - \gamma|t/2)} \left( 2 \log \frac{1}{\widetilde{\alpha}^2 \sinc^2(|\Delta - \gamma|t/2)} + 4( 1 + \log 2) - \frac{1}{2} + \log \frac{2}{\epsilon} \right) \\
     &\eqqcolon \frac{2}{\widetilde{\alpha}^2 \sinc^2(|\Delta - \gamma| t/2)} J, \label{eq:one_qubit_l_bound_1}
 \end{align}
where $J$ captures the logarithmic factors.  We then choose $L$ to satisfy this lower bound.
 
 Our next goal is to simplify these bounds so that we can propagate them to our final error requirements. We first use Lemma \ref{lem:sinc_poly_approx} to produce a  bound on $\sinc$ whenever $\gamma$ is within our window and $|\Delta - \gamma| \le \sigma$
 \begin{align}
     \sinc^2\left( \frac{|\Delta - \gamma| t}{2} \right) \ge 1 - \frac{\sigma^2 t^2}{2},
 \end{align}
 provided that $t \sigma \le \sqrt{2}$ to make the bound meaningful. Recalling that the dimension of the system is $4$, we can then create a new lower bound for $L$ by plugging this expression for sinc in to to Eq. \eqref{eq:one_qubit_l_bound_1} and use  the definition of $\tilde{\alpha}$ to get the following sufficient condition on $L$
 \begin{equation}
     L \ge \frac{10}{\alpha^2 t^2(1 - \sigma^2 t^2 / 2)} J \label{eq:single_qubit_l_bound_2}
 \end{equation} which is larger than the lower bound in \eqref{eq:one_qubit_l_bound_1}. If we choose $L$ to be twice this lower bound then we will for sure meet the Markov chain error.   
 
 The third stage of the proof utilizes the above bound on $L$ to bound the  off-resonance and remainder terms. The magnitude of the total off-resonance contribution is $L \norm{\mathcal{T}_{\off}}_1 \le \frac{8 }{\Delta^2}$, given by Corollary \ref{cor:t_off_norm}, and the remainder term is $L \norm{R_{\Phi}}_1 \le 32 \sqrt{\frac{2}{\pi}} \alpha t^3$, from Theorem \ref{thm:remainder_bound}. 
 set $\alpha = \frac{1}{t^3(\Delta + \sigma)^2} \le \frac{1}{t^3 \Delta^2}$. This then allows us to make the following inequalities
 \begin{align}
     L(\norm{\mathcal{T}_{\off}}_1 + \norm{R_{\Phi}}_1) &\le \frac{20}{t^2(1 - \sigma^2 t^2 / 2)}J \left( \frac{8}{\Delta^2} + 32 \sqrt{\frac{2}{\pi}} \alpha t^3 \right) \\
    &\le \frac{20}{t^2 \Delta^2 (1 - \sigma^2 t^2 / 2)}J \left( 8 + 32 \sqrt{\frac{2}{\pi}}\right). \label{eq:single_qubit_tmp_2}
 \end{align}
The last step is then to show the above is contained in $\bigotilde{\epsilon}$. As $J$ contains only log factors, it is sufficient to show that there exists a $t$ such that $\frac{1}{t^2(1 - \sigma^2 t^2 / 2)} \le \epsilon$. Rearranging yields a quadratic in $t^2$ that must satisfy the following
\begin{equation}
    \Delta^2 t^2 \left(1 - \frac{\sigma^2 t^2}{2}\right) - \frac{1}{\epsilon} \ge 0. \label{eq:single_qubit_tmp_3}
\end{equation}
The roots of this quadratic are
\begin{align}
    t^2 = \frac{1}{\sigma^2}\left(1 \pm \sqrt{1 - \frac{2 \sigma^2}{\Delta^2 \epsilon}} \right),
\end{align}
meaning if $t$ lies within these two roots then our expression will be satisfied. Our first observation is that in order for any solution to the inequality to exist we require $\sigma \le \Delta \sqrt{\frac{\epsilon}{2}}$, otherwise the roots are complex. As $\sigma \to 0$ we note that the larger root $\frac{1}{\sigma^2}(1 + \sqrt{1 - 2 \sigma^2 / \Delta^2 \epsilon})$ approaches infinity and the smaller root approaches $\frac{1}{\Delta^2 \epsilon}$.

Now that we know the inequality in Eq. \eqref{eq:single_qubit_tmp_3} has valid solutions, provided $\sigma$ is small enough, we can now bound the remainder and off-resonance errors by $\bigotilde{\epsilon}$. 
We now will summarize the results. First our distance from the target state can be upper bounded in three contributions, which is done in Eq. \eqref{eq:single_qubit_three_errors}.
Provided $\sigma \le \frac{\epsilon}{2 \beta}$, then the distance from our Markov chain fixed point to the desired thermal state is less than $\epsilon$. Second, if $L \ge \frac{10 J}{\alpha^2 t^2(1 - \sigma^2 t^2 /2) }$, then the output state of our channel will be less than a trace distance of $\epsilon$ away from the Markov chain fixed point. Lastly, by choosing $\alpha  = \frac{1}{t^3(\Delta + \sigma)^2}$ and $t \in \left[ \frac{1}{\sigma}\sqrt{1 - \sqrt{1 - \frac{2 \sigma^2}{\Delta^2 \epsilon}} }, \frac{1}{\sigma}\sqrt{1 + \sqrt{1 - \frac{2 \sigma^2}{\Delta^2 \epsilon}} } \right]$ we can guarantee that the remainder and off-resonance error terms are of order $\bigotilde{\epsilon}$.  We choose the lower limit for our choice of $t$ because it does not diverge as $\sigma\rightarrow 0$.  Adding these contributions together gives us $\norm{\rho_S(\beta) - \Phi^{\circ L }(\rho)}_1 \in \bigotilde{\epsilon}$. The total simulation time needed can then be bounded
\begin{align}
    Lt &\ge \frac{10 J t}{\alpha^2 t^2 (1 - \sigma^2 t^2 / 2)} \ge \frac{10 J t^5 (\Delta + \sigma)^4}{\sqrt{1 - \frac{2 \sigma^2}{\Delta^2 \epsilon}}} \ge \frac{10 J \sigma \left( 1 - \sqrt{1 - \frac{2 \sigma^2}{\Delta^2 \epsilon}}\right)^{5/2} (\Delta + \sigma)^4}{\sqrt{1 - \frac{2 \sigma^2}{\Delta^2 \epsilon}}}.
\end{align}
If we make the assumption that $\sigma \to 0$ then we can use the fact that the lower bound for $t$ approaches $\frac{1}{\Delta \sqrt{\epsilon}}$, giving
\begin{equation}
    \lim_{\sigma \to 0} Lt \ge \frac{10 J}{\Delta \epsilon^{2.5}}.
\end{equation}
The final result then follows by noting that $J$ is polylogarithmic in $\Delta$ and $\epsilon$ for our choice of $t$ and $\alpha$.
 \end{proof}

\subsection{Harmonic Oscillator} \label{sec:harmonic_oscillator}
Now that we have explored the thermalization channel completely for the single qubit case we turn our attention to a more complicated system: a truncated harmonic oscillator. For this scenario we will assume that the oscillator gap, $\Delta$, is known. This is mostly to simplify proofs of ergodicity and should not be an issue in practice, as evidenced by later theorems that show thermalization without eigenvalue knowledge. The reason behind this proof requirement is that by tuning $\gamma$ to be the spectral gap we can create a ``ladder" transition matrix in which states can move one level up or down. The proof of ergodicity relies on this ladder. Once we remove knowledge of $\Delta$ if $\gamma$ has some probability of being close to $2 \Delta$ this special ladder structure is destroyed. To avoid this annoyance and focus on the special structure granted by the harmonic oscillator we will assume $\gamma = \Delta$.

This system also represents a transition from the single qubit to more general settings discussed later as the guarantees on total simulation time as a function of $\beta$ are similar. For the harmonic oscillator we are only able to bound the spectral gap in the ground state limit as $\beta \to \infty$, meaning that the convergence time for finite $\beta$ has to be characterized in terms of the spectral gap of the Markov chain. For infinite $\beta$ we are able to compute the spectral gap exactly, as the Markov transition matrix is upper triangular. The following theorem introduces this technique in a straightforward setting before it is used later for more complicated transition matrices. 

\begin{theorem}\label{thm:harmonic_oscillator}
    Let $H_S$ denote a truncated harmonic oscillator with $\dim_S$ energy levels that are (up to an irrelevant constant offset) $\lambda_S(i) = i \Delta$ for $1 \le i \le \dim_S$, $\gamma$ be chosen such that $\gamma = \Delta$ be the known energy gap, and let $\rho$ be any input state that commutes with $H_S$. Setting the following parameters for the thermalizing channel $\Phi$
    \begin{equation}
        \alpha = \frac{\epsilon^{1.5} \widetilde{\lambda}_\star(\beta)^{1.5} \Delta}{\dim_S^4}, t = \frac{\dim_S}{\Delta \sqrt{\epsilon \widetilde{\lambda}_\star(\beta)}}, \text{ and } L \in \bigotilde{\frac{\dim_S^2}{\alpha^2 t^2 \widetilde{\lambda}_\star(\beta)}},
    \end{equation}
    where $\widetilde{\lambda}_\star(\beta)$ denotes the spectral gap of the scaled transition matrix $T / \widetilde{\alpha}^2$, is sufficient for thermalization for arbitrary $\beta$ as
    \begin{equation}
        \norm{\rho_S(\beta) - \Phi^{\circ L}(\rho)}_1 \in \bigotilde{\epsilon}.
    \end{equation}
    This gives the total simulation time required as
    \begin{equation}
        L \cdot t \in \bigotilde{\frac{\dim_S^9}{\Delta \epsilon^{2.5} \widetilde{\lambda}_\star(\beta)^{2.5}}}.
    \end{equation}
    In the limit as $\beta \to \infty$ the above settings for $\alpha, L$, and $t$ are valid for preparing the ground state with the spectral gap of the rescaled transition matrix given by
    \begin{equation}
        \lim_{\beta \to \infty} \widetilde{\lambda}_\star(\beta) = 1.
    \end{equation}

\end{theorem}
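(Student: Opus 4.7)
The plan is to mirror the three-part error decomposition used in the proof of Theorem \ref{thm:single_qubit}. I would split $\|\rho_S(\beta) - \Phi^{\circ L}(\rho)\|_1$ into (i) the distance between $\rho_S(\beta)$ and the fixed point of the on-resonance Markov chain $I+T$, (ii) the Markov chain convergence error from Jerison's theorem, and (iii) the cumulative off-resonance plus remainder contribution $L(\|\mathcal{T}_{\off}(\rho)\|_1 + \|R_\Phi\|_1)$.

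For part (i), observe that $\gamma=\Delta$ combined with $\lambda_S(i)=i\Delta$ forces the dominant nearest-neighbor entries of $T$ to satisfy the detailed-balance condition of Lemma \ref{lem:fixed_point} against the Gibbs weights $e^{-\beta\lambda_S(i)}$ exactly: the Boltzmann ratio check reduces to $\lambda_S(i)-\lambda_S(j)=\gamma$, which holds for $i=j+1$. Any next-nearest contribution that survives the on-resonance cut $|\Delta_S(i,j)\pm\gamma|\le\delta_{\min}=\Delta$ is suppressed by $\sinc^2(\Delta t/2)$, and for the stated $t$ is small enough to be absorbed into the $\tilde{O}(\epsilon)$ budget. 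Uniqueness of the fixed point (i.e.\ ergodicity) then follows from the tridiagonal ``ladder'' structure of the dominant part of $T$: the chain is irreducible and aperiodic on $\{1,\ldots,\dim_S\}$, so the Gibbs weights are the unique stationary vector.

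For part (ii), I would apply Theorem \ref{thm:markov_chain_bound} with $N=\dim_S$ and absolute spectral gap $\lambda_\star = \widetilde{\alpha}^2\,\widetilde{\lambda}_\star(\beta)$, the latter obtained by pulling the overall $\widetilde{\alpha}^2$ prefactor out of $T$. Using $\widetilde{\alpha}=\alpha t/\sqrt{\dim+1}$, this gives the stated $L \in \widetilde{O}(\dim_S^2/(\alpha^2 t^2\,\widetilde{\lambda}_\star(\beta)))$. For part (iii), Corollary \ref{cor:t_off_norm} gives $\|\mathcal{T}_{\off}(\rho)\|_1 \le 8\alpha^2/\Delta^2$ since $\delta_{\min}=\Delta$, and Theorem \ref{thm:remainder_bound} gives $\|R_\Phi\|_1 \in O(\dim_S(\alpha t)^3)$. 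The stated choices of $\alpha$ and $t$ are engineered precisely so that $L$ copies of these two errors sum to $\tilde{O}(\epsilon)$; verifying this reduces to the same kind of algebraic juggling as in the single-qubit proof, balancing $L\alpha^2/\Delta^2$ against $L \dim_S (\alpha t)^3$. Summing parts (i)--(iii) produces the total time $L\cdot t \in \widetilde{O}(\dim_S^9/(\Delta\epsilon^{2.5}\widetilde{\lambda}_\star(\beta)^{2.5}))$ after substituting the chosen $\alpha$ and $t$.

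The $\beta\to\infty$ limit is the cleanest piece. By Lemma \ref{lem:fixed_point}, $T$ becomes upper triangular, so its eigenvalues coincide with its diagonal entries. The dominant downward nearest-neighbor on-resonance transition from $\ket{i}$ to $\ket{i-1}$ contributes $\widetilde{\alpha}^2 q(0)\to\widetilde{\alpha}^2$ to $-e_i^T T e_i$ for every $i>1$, while $e_1^T T e_1=0$. Hence the eigenvalues of $I+T$ are $1$ (once, with eigenvector $e_1$) and $1-\widetilde{\alpha}^2$ (with multiplicity $\dim_S-1$), giving $\lambda_\star=\widetilde{\alpha}^2$ and therefore the rescaled gap $\widetilde{\lambda}_\star(\beta\to\infty)=1$. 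The main obstacle I expect is the uniform control of the $\sinc^2(\Delta t/2)$-suppressed next-nearest-neighbor on-resonance terms: they formally break detailed balance at finite $\beta$ and slightly perturb the triangular diagonal at $\beta=\infty$, and threading these subleading contributions through all three error pieces without degrading the stated asymptotic scaling is where the bookkeeping gets delicate.
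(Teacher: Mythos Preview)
Your proposal is correct and follows essentially the same approach as the paper: show the thermal state is the exact fixed point via the detailed-balance condition of Lemma~\ref{lem:fixed_point}, establish ergodicity from the nearest-neighbor ladder structure, then split the error as Markov-convergence plus $L(\|\mathcal{T}_{\off}\|_1+\|R_\Phi\|_1)$ and balance via Jerison's theorem, with the $\beta\to\infty$ gap read off the upper-triangular diagonal. Your stated ``main obstacle'' about the boundary next-nearest-neighbor terms with $|\Delta_S(i,j)-\gamma|=\delta_{\min}$ is not a real obstacle: since they carry $\sinc^2(\Delta t/2)\le 4/(\Delta^2 t^2)$, they are of exactly the same order as the off-resonance terms already bounded by Corollary~\ref{cor:t_off_norm} and can simply be absorbed into $\mathcal{T}_{\off}$ (equivalently, take the on-resonance indicator to be strict at the boundary), so the fixed point is \emph{exactly} $\rho_S(\beta)$, part~(i) of your decomposition vanishes, and nothing delicate remains---this is what the paper does implicitly.
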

\begin{proof}
    We first show that the thermal state is the unique fixed point for finite $\beta$. This will be done by computing the nonzero on-resonance transitions and plugging in to Lemma \ref{lem:fixed_point}. As $\gamma = \Delta$, $\Delta_S(i,j) = (i - j) \Delta$, and $\delta_{\min} = \Delta$ we can deduce that the on-resonance transitions will only be nonzero for adjacent states $\ketbra{i}{i}$ and $\ketbra{i \pm 1}{i \pm 1}$. This can be seen explicitly for $i \neq j$ by evaluating the transition elements given by Definition~\ref{def:transition}.
    \begin{align}
        &\bra{j} \TT_{\on}(\ketbra{i}{i}) \ket{j} \\
        &= \widetilde{\alpha}^2 \frac{1}{1 + e^{-\beta \gamma}} \mathbf{I}[|\Delta_S(i,j) - \gamma| \le \delta_{\min}]  \sinc^2\left(\frac{(\Delta_S(i,j) - \gamma)t}{2}\right) \nonumber \\
    &~+ \widetilde{\alpha}^2 \frac{e^{-\beta \gamma}}{1 + e^{-\beta \gamma}} \mathbf{I}[|\Delta_S(i,j) + \gamma| \le \delta_{\min}]  \sinc^2\left(\frac{(\Delta_S(i,j) + \gamma)t}{2}\right) \\
    &= \widetilde{\alpha}^2 q(0) \mathbf{I}[j = i - 1]  \sinc^2\left(\frac{(\Delta_S(i,j) - \gamma)t}{2}\right) + \widetilde{\alpha}^2 q(1) \mathbf{I}[j = i + 1]  \sinc^2\left(\frac{(\Delta_S(i,j) + \gamma)t}{2}\right) \\
    &= \widetilde{\alpha}^2 \left(q(0) \mathbf{I}[j = i - 1] +  q(1) \mathbf{I}[j = i + 1] \right). \label{eq:harmonic_oscillator_t_matrix}
    \end{align}
    We now plug this expression into Eq. \eqref{eq:detailed_balance} of Lemma \ref{lem:fixed_point} and use the fact that $\Delta_S(i,i+1)=\Delta$ for the harmonic oscillator
    \begin{align}
        &\sum_{i \neq j} \frac{e^{-\beta \lambda_S(i)}}{\partfun_S(\beta)} \bra{j} \TT_{\on}(\ketbra{i}{i}) \ket{j} - \frac{e^{-\beta \lambda_S(j)}}{\partfun_S(\beta)} \bra{i} \TT_{\on}(\ketbra{j}{j}) \ket{i} \\
        &= \widetilde{\alpha}^2 \frac{e^{-\beta \lambda_S(j)}}{\partfun_S(\beta)} \sum_{i \neq j} \left[ e^{-\beta \Delta_S(i,j)} \left(q(0) \mathbf{I}[j = i - 1] +  q(1) \mathbf{I}[j = i + 1] \right) - \left(q(0) \mathbf{I}[j = i + 1] +  q(1) \mathbf{I}[j = i - 1] \right) \right] \\
        &= \widetilde{\alpha}^2 \frac{e^{-\beta \lambda_S(j)}}{\partfun_S(\beta)} \left( \left(e^{-\beta \Delta} q(0) - q(1) \right) + \left(e^{+\beta \Delta} q(1) - q(1) \right)\right) \\
        &= \widetilde{\alpha}^2 \frac{e^{-\beta \lambda_S(j)}}{\partfun_S(\beta)} \left( \left(e^{-\beta \Delta} \frac{1}{1 + e^{-\beta \gamma}} - \frac{e^{-\beta \gamma}}{1 + e^{-\beta \gamma}} \right) + \left(e^{+\beta \Delta} \frac{e^{-\beta \gamma}}{1 + e^{-\beta \gamma}} - \frac{1}{1 + e^{-\beta \gamma}} \right)\right) \\
        &= 0,
    \end{align}
    where the final equality comes from setting $\gamma = \Delta$. By Lemma \ref{lem:fixed_point} this is sufficient for $\rho_S(\beta)$ to be a fixed point of $\identity + \TT_{\on}$. 
    
    To show that $\rho_S(\beta)$ is the unique fixed point of the Markov chain it suffices to  show that the walk is ergodic. This means that we need to show that the walk can generate transitions between any two sites, or in other words, the hitting time for any two states $i \neq j$ is nonzero. We prove this by induction on $i - j$ first for $i > j$. For $i = j + 1$ we have
    \begin{align}
        \bra{j} \left( \identity + \TT_{\on} \right)(\ketbra{i}{i}) \ket{j} &= \bra{j} \TT_{\on}(\ketbra{j + 1}{j + 1}) \ket{j} \\
        &= \widetilde{\alpha}^2 q(0),
    \end{align} 
    which is nonzero and therefore the base case holds. Assuming $i = j + n$ holds we show that the hitting time for $i = j + n + 1$ is nonzero. Let $p$ denote the probability of transitioning from $j$ to $j + n$ after $n$ applications of $\identity + \TT_{\on}$. 
    \begin{align}
        &\bra{j } \left(\identity + \TT_{\on} \right)^{\circ n + 1} (\ketbra{j+ n + 1}{j+ n + 1}) \ket{j } \nonumber \\
        &= \sum_{k_1, k_2}\bra{j } \left(\identity + \TT_{\on} \right)^{\circ n} \circ \left(\ketbra{k_1}{k_1}\left(\identity + \TT_{\on} \right) (\ketbra{j+ n + 1}{j+ n + 1}) \ketbra{k_2}{k_2} \right) \ket{j } \\
        &= \sum_{k}\bra{j } \left(\identity + \TT_{\on} \right)^{\circ n} \circ \left(\ketbra{k}{k}\left(\identity + \TT_{\on} \right) (\ketbra{j+ n + 1}{j+ n + 1}) \ketbra{k}{k} \right) \ket{j } \\
        &\ge \bra{j } \left(\identity + \TT_{\on} \right)^{\circ n} \circ \left(\ketbra{j + n}{j + n}\left(\identity + \TT_{\on} \right) (\ketbra{j+ n + 1}{j+ n + 1}) \ketbra{j + n}{j + n} \right) \ket{j } \\
        &= \widetilde{\alpha}^2 q(0) \bra{j } \left(\identity + \TT_{\on} \right)^{\circ n} \left(\ketbra{j + n}{j + n}\right)\ket{j } \\
        &= \widetilde{\alpha}^2 q(0) p,
    \end{align}
    which is greater than 0. We are able to bound the summation over $k \neq j + n$ because $I + T$, the associated Markov matrix for $\identity + \TT_{\on}$, has only positive entries. To prove the case where $i < j$ the same argument holds but now we get factors of $q(1)$ as opposed to $q(0)$, which is nonzero for finite $\beta$. 

    Now that we have shown that the thermal state is the fixed point we would like to bound the total simulation time needed. To do so we first decompose our error into two parts, a Markov chain error and an off-resonance and remainder error
    \begin{equation}
        \norm{\rho_S(\beta) - \Phi^{\circ L}(\rho)}_1 \le \norm{\rho_S(\beta) - (\identity + \TT_{\on})^{\circ L}(\rho)}_1 + L(\norm{\TT_{\off}}_1 + \norm{R_{\Phi}}_1 ). \label{eq:harmonic_oscillator_error_breakdown}
    \end{equation}
    We first bound the number of interactions, $L$, needed for the output of the Markov chain to be $\epsilon$ close to the fixed point and then use this bound on $L$ to upper bound the off-resonance and remainder error. Unfortunately in the finite $\beta$ scenario we are unable to determine the spectral gap of $T$, the entries of which are given in Eq. \eqref{eq:harmonic_oscillator_t_matrix}.  The spectral gap of $T$ is necessary to use Jerison's Markov Relaxation Theorem \ref{thm:markov_chain_bound} which poses a problem for our understanding of the evolution time needed. Instead, we will pull out the overall factor of $\widetilde{\alpha}^2$ and let $\widetilde{\lambda}_\star(\beta)$ denote the spectral gap of $T/\widetilde{\alpha}^2$. This then allows us to use Theorem \ref{thm:markov_chain_bound} but we will have to leave the number of interactions required in terms of $\widetilde{\lambda}_\star(\beta)$.

    Theorem \ref{thm:markov_chain_bound} tells us that requiring
    \begin{align}
        L \ge \frac{\dim_S}{\widetilde{\alpha}^2 \widetilde{\lambda}_\star(\beta)} J \in \bigotilde{\frac{\dim_S^2}{\alpha^2 t^2 \widetilde{\lambda}_\star(\beta)}}
    \end{align}
    is sufficient for the total variational distance between the stationary distribution to be $\epsilon$-small, in other words $\norm{\rho_S(\beta) - (\identity + \TT_{\on})^{\circ L}(\rho)}_1 \in \bigotilde{\epsilon}$. Now we use this expression for $L$ to bound the off-resonance and remainder errors. To do so we first want to asymptotically bound the two contributions, which can be found in Corollary \ref{cor:t_off_norm} and Theorem \ref{thm:remainder_bound}. The sum of the two is
    \begin{equation}
        \norm{\TT_{\off}}_1 + \norm{R_{\Phi}}_1 \le \frac{8 \alpha^2}{\Delta^2} + 16 \sqrt{\frac{\pi}{2}} \dim_S (\alpha t)^3.
    \end{equation}
    By setting $\alpha = \frac{1}{\dim_S \Delta^2 t^3}$ we can simplify the above as
    \begin{equation}
        \norm{\TT_{\off}}_1 + \norm{R_{\Phi}}_1 \le \frac{\alpha^2}{\Delta^2} \left(8 + 16 \sqrt{\frac{\pi}{2}} \right).
    \end{equation}
    Using the sub-additivity property of the trace distance the total error scales as
    \begin{align}
        L (\norm{\TT_{\off}}_1 + \norm{R_{\Phi}}_1) &\le \frac{\dim_S^2}{\alpha^2 t^2 \widetilde{\lambda}_\star(\beta)} J \frac{\alpha^2}{\Delta^2} \left(8 + 16 \sqrt{\frac{\pi}{2}} \right) \\
        &\in \bigotilde{\frac{\dim_S^2}{t^2 \Delta^2 \widetilde{\lambda}_\star(\beta)}}.
    \end{align}
    We can make this $\bigotilde{\epsilon}$ by setting $t = \frac{\dim_S}{\Delta \sqrt{\epsilon \widetilde{\lambda}_\star(\beta)}}$.
    This then gives the following total simulation time as
    \begin{align}
        L\cdot t \in \bigotilde{\frac{\dim_S^2}{\alpha^2 t \widetilde{\lambda}_\star(\beta)} } = \bigotilde{\frac{\dim_S^9}{\epsilon^{2.5} \Delta  \widetilde{\lambda}_\star(\beta)^{3.5}}}.
    \end{align}
    
    Now that we have analyzed the finite $\beta$ regime, we turn to the $\beta \to \infty$ limit. Our proof above for the fixed points only worked for finite $\beta$, but Lemma \ref{lem:fixed_point} tells us that in the $\beta \to \infty$ limit the ground state is a fixed point. We will show that it is the unique fixed point by computing the spectrum of $T$, which will be rather easy to do. Lemma \ref{lem:fixed_point} further tells us that as $\beta \to \infty$ the matrix $T$ is upper triangular, which means we can compute the spectrum if we can compute the diagonal elements. We will do so via Eq. \eqref{eq:harmonic_oscillator_t_matrix} and~\eqref{eq:env_state_def}, which says for $1 < i < \dim_S$
    \begin{align}
        e_i^T T e_i &= \bra{i} \TT_{\on}(\ketbra{i}{i}) \ket{i} \\
        &= -\sum_{j \neq i} \bra{j} \TT_{\on}(\ketbra{i}{i}) \ket{j} \\
        &= - \widetilde{\alpha}^2 \sum_{j \neq i} \left(q(0) \mathbf{I}[j = i - 1] +  q(1) \mathbf{I}[j = i + 1] \right) \\
        &= -\widetilde{\alpha}^2 \left(q(0) + q(1) \right) \\
        &= - \widetilde{\alpha}^2,
    \end{align}
    where the summation is only nonzero for $j = i \pm 1$. For $i = 1$ we note that because $e_1$ is a fixed point we have $e_j^T T e_1 = 0$, so the diagonal entry is 0. The computation for $i = \dim_S$ is similar to the above but yields from~\eqref{eq:env_state_def}
    \begin{equation}
        \lim_{\beta \to \infty} \bra{\dim_S} \TT_{\on}(\ketbra{\dim_S}{\dim_S}) \ket{\dim_S} = -\widetilde{\alpha}^2 \lim_{\beta \to \infty}  q(0) = - \widetilde{\alpha}^2.
    \end{equation}
    This shows us that the zero temperature limit of the transition matrix $T$ is 
\begin{equation}
    \lim_{\beta \to \infty} T = \widetilde{\alpha}^2 \begin{bmatrix}
        0 & 1 &   &\\
        & -1 & 1 &  &\\
        & & -1  & & \\
        & & & \ddots & \\
        & &     &       & 1 \\
        & &  & & -1
    \end{bmatrix}.
\end{equation}
We can compute the spectrum via the characteristic polynomial $\det(\lambda \identity - T)$. This is because $T$ is upper triangular and the determinant we need to compute is
\begin{equation}
    \det \left(\lambda \identity - \lim_{\beta \to \infty} T \right) =  \begin{vmatrix}
        \lambda & -\widetilde{\alpha}^2 &   &\\
        & \lambda + \widetilde{\alpha}^2 & -\widetilde{\alpha}^2 &  &\\
        & & \lambda + \widetilde{\alpha}^2  & & \\
        & & & \ddots & \\
        & &     &       & -\widetilde{\alpha}^2 \\
        & &  & & \lambda + \widetilde{\alpha}^2
    \end{vmatrix}.
\end{equation}
    The roots of the above characteristic polynomial gives the spectrum of $T$ as 0 and $-\widetilde{\alpha}^2$ with multiplicity $\dim_S - 1$. This not only gives the spectral gap of $\widetilde{\alpha}^2$ but further shows that the ground state is the unique fixed point because 0 only has multiplicity 1. This shows that $\lim_{\beta \to \infty} \widetilde{\lambda}_\star(\beta) = 1$.

    We now can use this to repeat the simulation time bound arguments from the finite $\beta$ case. The decomposition in Eq. \eqref{eq:harmonic_oscillator_error_breakdown} is still valid and we can use the Markov Relaxation Theorem \ref{thm:markov_chain_bound} to bound
    \begin{equation}
        L \ge \frac{\dim_S}{\widetilde{\alpha}^2\lim_{\beta \to \infty} \widetilde{\lambda}_\star(\beta)} J \in \widetilde{\Theta}\left({\frac{\dim_S^2}{\alpha^2 t^2}}\right).
    \end{equation}
    The arguments for the off-resonance and remainder error bounds are the exact same and tell us that it suffices to set
    \begin{equation}
        \alpha = \frac{1}{\dim_S \Delta^2 t^3} \text{ and } t = \frac{\dim_S}{\Delta \sqrt{\epsilon}}.
    \end{equation}
    This gives the total simulation time needed as
    \begin{equation}
        L\cdot t \in \bigotilde{\frac{\dim_S^9}{\epsilon^{2.5} \Delta}}.
    \end{equation}
\end{proof}

\subsection{Numerics} \label{sec:specific_numerics}
Now that we have rigorous bounds on each of the parameters $\alpha, t$ and $L$ needed to prepare thermal states of simple systems, we turn to numerics to test these bounds. The first question we explore is how the total simulation time $L \cdot t$ behaves as a function of $\alpha$ and $t$. After, we examine the dependence of the total simulation time on the inverse temperature $\beta$ and we observe a Mpemba-like effect where we find higher temperature states can cool faster than lower temperature ones~\cite{auerbach1995supercooling}. Finally, we demonstrate how our proof techniques could be leading to worse $\epsilon$ scaling than appears numerically necessary. Throughout these experiments we have the same numeric method of starting with the maximally mixed state $\rho_S(0)$ and performing a search on the minimal number of interactions needed for the mean trace distance over all samples to be less than the target $\epsilon$. The number of samples is increased until the variance in the trace distance is less than an order of magnitude below the mean.

In Figure \ref{fig:tot_time_vs_single_time} we explore the total simulation time needed to prepare a thermal state with $\beta = 2.0$ and $\epsilon = 0.05$ for a single qubit system. We plot the total simulation time $L \cdot t$ needed as a function of $t$ for various settings of $\alpha$. We find that increasing both parameters tends to decrease the overall cost until a saturation point is reached, which is at a value of $t$ slightly larger $1/\alpha$. For a fixed value of $\alpha$ this initial decrease in $L \cdot t$ is inverse with $t$, in agreement with our finding of $L \in \bigotilde{t^{-2}}$ in Eq. \eqref{eq:single_qubit_l_bound_2} for $\sigma = 0$. However, this process of decreasing the cost by increasing $t$ can only scale so far and appears to run into a minimum number of interactions $L$ required to thermalize. After this saturation point $L \cdot t$ scales linearly with $t$, indicating that the number of interactions $L$ has reached a minimum. 

Another major take away from Figure \ref{fig:tot_time_vs_single_time} is that it demonstrates that our thermalizing channel is exceptionally robust beyond the weak-coupling expansion in which we can theoretically analyze it. The values of $\alpha t$ used in the far right of the plot completely break our weak-coupling expansion, as we have values of $\widetilde{\alpha}$ that reach up to 500. One interesting phenomenon that we do not have an explanation for is the ``clumping" of various settings of $\alpha$ in the large $t$ limit. As $\alpha t$ dictates the amount of time that the random interaction term $G$ is simulated for, it could be that once a minimum amount of randomness is added via this interaction it is no longer beneficial in causing transitions among system eigenstates. 

\begin{figure} 
    \centering
    \includegraphics[width=0.75\linewidth]{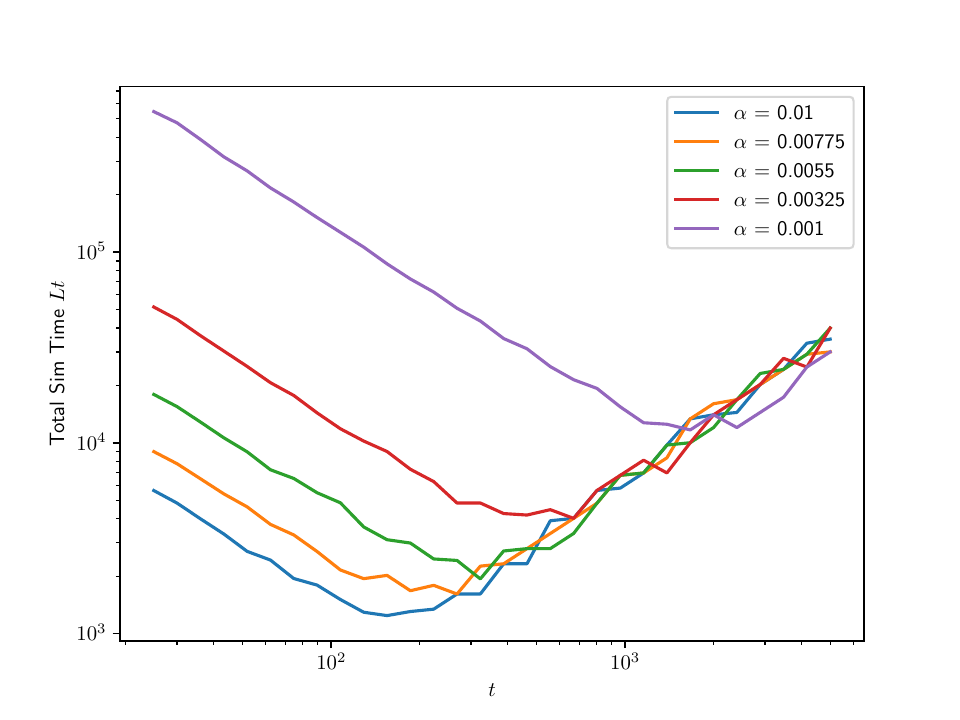}
    \caption{Total simulation time for a single qubit system to reach within trace distance of $0.05$ of the thermal state for $\beta = 2$ as a function of per-interaction simulation time $t$. The slope of the large $t$ asymptote is $\approx$ 1.01.}\label{fig:tot_time_vs_single_time}
\end{figure}

The next task we have is to examine the $\beta$ dependence. For the harmonic oscillator Theorem \ref{thm:harmonic_oscillator} is helpful for giving an idea of the total simulation time for the ground state but we cannot extend it to finite $\beta$ due to the special structure of the transition matrix in the $\beta \to \infty$ limit. Perturbation theory could possibly be used to extend the computation of the spectral gap to the low temperature regime, but even then it would break down for large temperature (small $\beta$). For generic $\beta$ the structure of the harmonic oscillator transition matrix is tridiagonal but it is not quite Toeplitz, as the main diagonals deviate in the upper left and bottom right corners. We could try to pull these deviations into a separate matrix and treat them as perturbations to a fully Toeplitz matrix, which we can then compute the spectrum of. The issue with this approach is that these deviations are on the order of $\widetilde{\alpha}^2 q(0)$ and $\widetilde{\alpha}^2 q(1)$, which are comparable to the eigenvalues of the unperturbed matrix.

In Figure \ref{fig:sho_total_time_vs_beta} we are able to probe the total simulation time and spectral gap of the harmonic oscillator as a function of $\beta$. We reveal a rather surprising Mpemba-like phenomenon where it takes longer for an infinite temperature initial state (the maximally mixed state) to cool to intermediate temperatures than low temperature states. The Mpemba effect \cite{mpemba} is a classical phenomenon related to the time needed to freeze hot water compared to room temperature water with mentions going all the way back to Aristotle. This phenomenon has been extended to quantum thermodynamics and observed in both theory \cite{nickMpemba}, \cite{mpembaExplanation} and in recent experimental research \cite{zhang2025mpembaObservation}. Our observations are not only a further analytic observation, but we are able to provide a proposed mechanism that explains the behavior. It is clear that the distance of our initial state to the target thermal state $\norm{\rho_S(\beta) - \rho_S(\infty)}_1$ increases monotonically with $\beta$ but what is not obvious is that the spectral gap of the underlying Markov chain is \emph{also} increasing. As larger spectral gaps lead to quicker convergences this acts in an opposite way on the total simulation time. The end result is that for small $\beta$ the increase in initial distance is stronger than the increase in the spectral gap and $L \cdot t$ increases. After some amount of $\beta$ these forces flip and the spectral gap effects become stronger than the initial state distance increasing, leading to a reduction in $L \cdot t$. This phenomenon appears to become more pronounced as the dimension of the harmonic oscillator increases, as can be seem in the $\dim_S = 10$ data. Two things remain unclear: the first is what parameters affect the position and height of the peak in total simulation time and the second is if this behavior is present in Hamiltonians with more complicated eigenvalue difference structure than the harmonic oscillator.

\begin{figure}[t]
    \centering
    \begin{subfigure}{0.45\textwidth}
    \includegraphics[width=\textwidth]{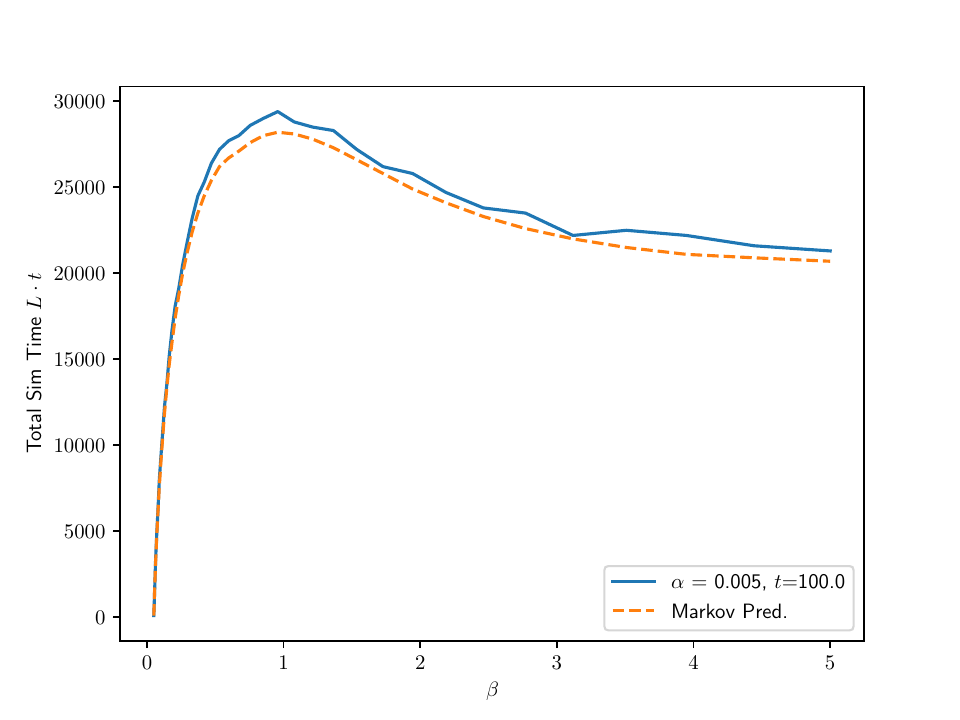}
    \caption{Minimum Interactions vs. $\beta$, $\dim = 4$}
    \label{fig:sho_l_vs_beta_dim_4}
    \end{subfigure}
    \begin{subfigure}{0.45\textwidth}
    \vspace{0.7cm}
    \includegraphics[width=\textwidth]{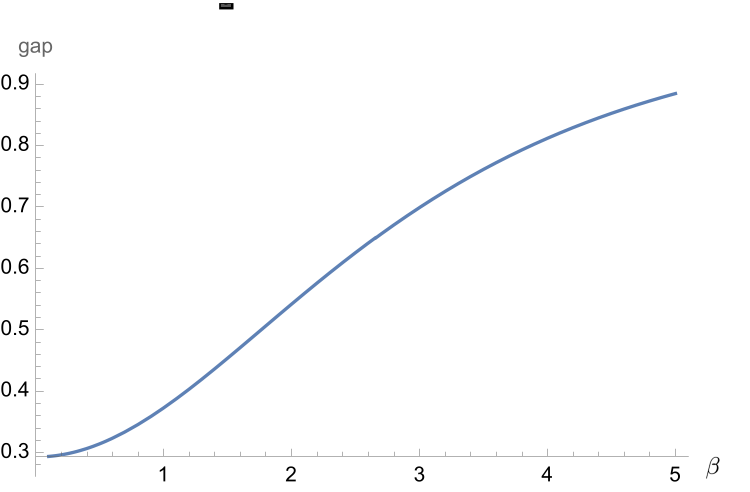}
    \caption{Spectral Gap $\widetilde{\lambda}_\star(\beta)$ vs. $\beta$, $\dim = 4$}
    \label{fig:sho_spectral_gap_vs_beta}
    \end{subfigure}
    \hfill
    \begin{subfigure}{0.5\textwidth}
    \includegraphics[width=\textwidth]{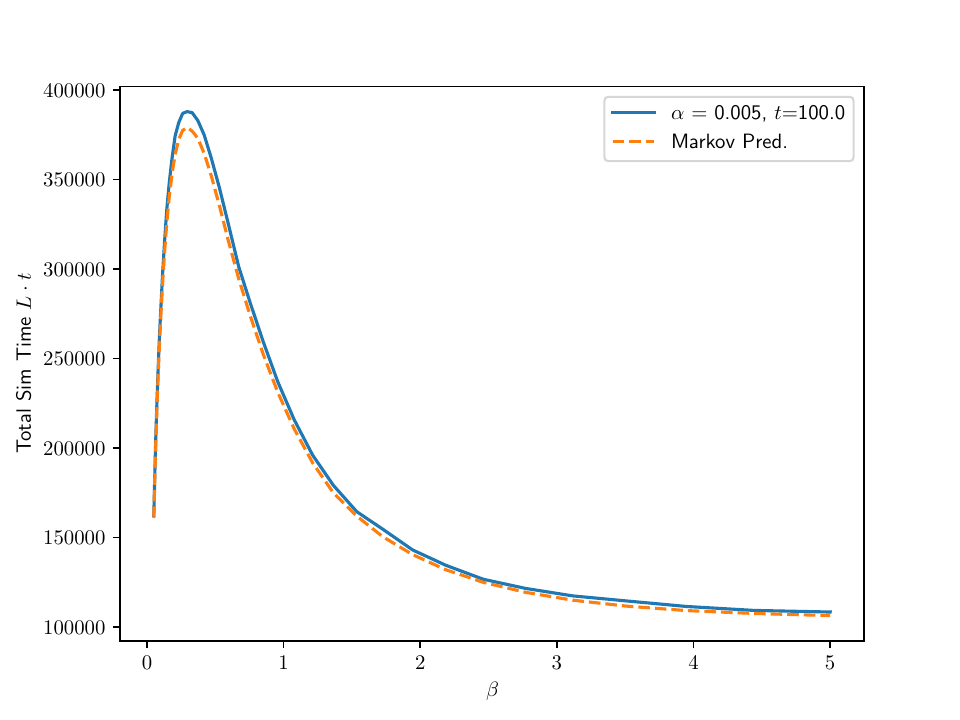}
    \caption{Total simulation time vs. $\beta$, $\dim = 10$}
    \label{fig:sho_l_vs_beta_dim_10}
    \end{subfigure}
    \caption{Demonstration of $\beta$ dependence of the thermalizing channel $\Phi$ for the truncated harmonic oscillator. The environment gap $\gamma$ was tuned to match the system gap $\Delta$ exactly. The minimal number of interactions was found by binary search over values of $L$ that have an average error of less than $\epsilon = 0.05$ with 100 samples.}
    \label{fig:sho_total_time_vs_beta}
\end{figure}

The analytic proofs given in Theorems \ref{thm:single_qubit} and \ref{thm:harmonic_oscillator} are entirely based on our weak-coupling expansion derived in Section \ref{sec:weak_coupling}. The high level picture of this expansion is that we have a remainder error that scales like $\bigo{(\alpha t)^3}$ and an off-resonance error that scales as $\bigo{\alpha^2}$. To balance these two terms we then set $\alpha = \bigo{1/t^3}$. However, as seen in Figure \ref{fig:tot_time_vs_single_time} our thermalization routine appears to be quite robust beyond this weak-coupling expansion, which could lead to significant improvements in runtime. In our derivation for the $\bigo{\alpha}$ and $\bigo{\alpha^2}$ terms we relied on our eigenvalues being I.I.D Gaussian variables, with the first and second order expressions containing factors with the first and second moments respectively of the Gaussian distribution. This would suggest that the third order term in a weak coupling expansion might also be 0, similarly to the first order term. This would lead to a supposed remainder error of $\bigo{\alpha^4 t^4}$, which after balancing with the off-resonance error would give $\alpha = \bigo{1/t^2}$. If the number of interactions then scales like $\bigo{1/\alpha^2 t^2}$, which is consistent with the spectral gap of $\TT_{\on}$ scaling as $\bigo{\alpha^2 t^2}$, then to make the total error of order $\bigo{\epsilon}$ we would require $t \in \bigotilde{1/\epsilon^{0.5}}$ as in Theorems \ref{thm:single_qubit} and \ref{thm:harmonic_oscillator}. This conjecture then leads to a total simulation time of order $\bigo{1/\epsilon^{1.5}}$. 

An even further conjecture would be to keep $\alpha \cdot t$ as a small constant, in this case we are essentially saying that the randomized dynamics $e^{i \alpha t G}$ are beneficial and should not be thought of as some remainder error to be minimized. If the $\alpha t$ constant is small enough then the dynamics will still be approximated by the Markov chain $\TT_{on}$. Our spectral gap will still scale as $\bigo{(\alpha t)^2}$ and $t$ as $\bigo{1/\epsilon^{0.5}}$. This would lead to our total simulation time scaling as $\bigo{1/\epsilon^{0.5}}$. In Figure \ref{fig:epsilon_scaling} we numerically explore these various scalings of $\alpha$ for the harmonic oscillator with $\beta = \dim_S = 4$. Our first remark is that the $\alpha = \bigo{1/t^3}$ scaling as dictated by Theorem \ref{thm:harmonic_oscillator} is numerically supported. 
 Specifically, the theorem suggests that we should observe $O(1/\epsilon^{2.5})$ scaling for $L\cdot t$. 
 Our experiment suggesting $L \cdot t \in \bigo{1/\epsilon^{2.764}}$ which is approximately consistent and deviations from this scaling may arise from the inclusion of data in the fit from outside of the weak coupling limit which is the only regime where we anticipate this scaling. 
 We obtained these exponents via least squares fitting of a power-law fit to $L\cdot t$ and $1/\epsilon$.

\begin{figure}
    \centering
    \includegraphics[width=0.66\linewidth]{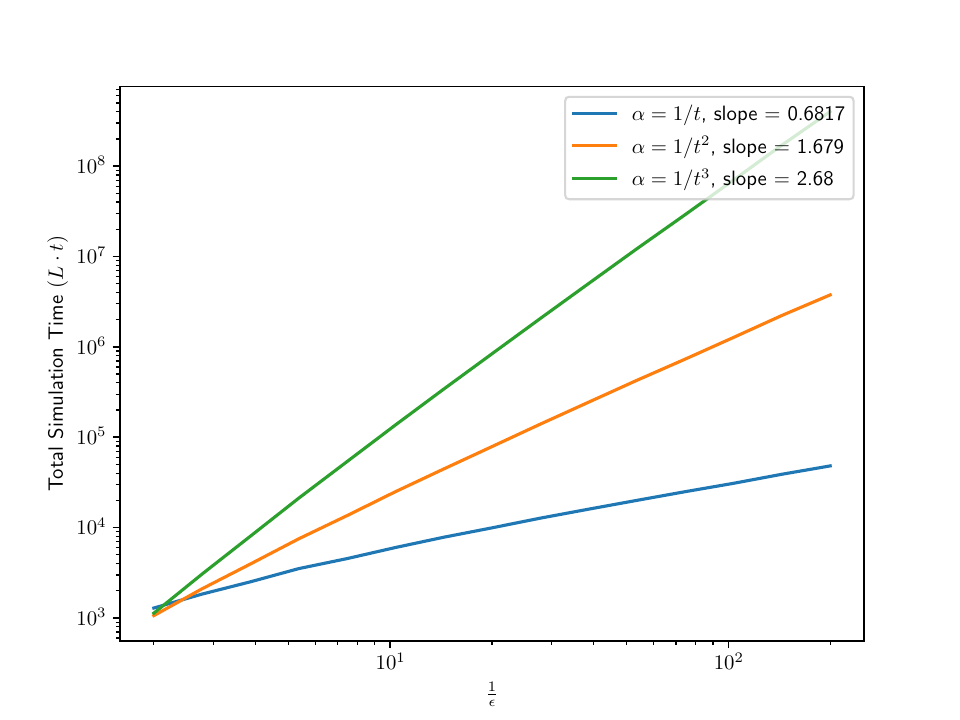}
    \caption{Scaling of $L \cdot t$ to prepare a harmonic oscillator thermal state with $\beta = \dim_S = 4$ with respect to $1/\epsilon$ in a log-log plot. For each line in the plot we scaled $\alpha$ by a constant value to make $\widetilde{\alpha}^2 \approx 0.05$ for the largest value of $\epsilon$. Each of these slopes is consistently larger by 0.18 compared to stated predictions.}
    \label{fig:epsilon_scaling}
\end{figure}

\section{General Systems} \label{sec:general_systems}

We now extend our thermalization techniques to arbitrary Hamiltonians with no degenerate eigenvalues. The first major difficulty that we run into is how to choose our environment gap $\gamma$. If one does not have any knowledge whatsoever about where the eigenvalues of $H_S$ may lie then we are reduced to uniform guessing. In Section \ref{sec:zero_knowledge} we show that even in this scenario the thermal state is an approximate fixed point for finite $\beta$ and the exact fixed point for ground states and we provide a bound on the total simulation time required. However, we show that this generality does come at a cost. If one has complete knowledge of the eigenvalue differences we show in Section \ref{sec:perfect_knowledge} that the total simulation time markedly decreases. Further, with complete knowledge the thermal state is an exact fixed point for all $\beta$. Finally, in Section \ref{sec:general_numerics} we study these impacts on small Hydrogen chain systems and observe the quantitative effects of noise added to $\gamma$. 

The assumption on non-degenerate eigenvalues is required for fairly technical conditions. In the $\beta \to \infty$ limit for our proof of the spectral gap we use the fact that the transition matrix $T$ is upper triangular in Lemma \ref{lem:fixed_point}. This is where the non-degeneracy is required because degenerate eigenvalues always have a non-zero transition amplitude that scales as $\bigo{\widetilde{\alpha}^2}$ without a factor of sinc. This means within the degenerate subspace in the transition matrix there is a uniform block. This makes computing the transition matrix spectrum a little more complicated than necessary, so we avoid this issue by requiring no degeneracies. This restriction could likely be lifted through an intelligent choice of eigenbasis for the degenerate subspace, or through better spectrum calculations of the resulting transition matrix, but we leave such explorations for future work.

\subsection{Zero Knowledge} \label{sec:zero_knowledge}
We now move on to show how our channel performs if one has no knowledge about the eigenvalue differences $\Delta_S(i,j)$ apart from a bound on the maximum value of these differences. This is represented by choosing $\gamma$ uniformly from the interval $[0, 4 \norm{H_S}]$, which technically constitutes an upper bound on the largest $\Delta_S(i,j)$, but estimates of $\norm{H_S}$ are often readily attainable from the specification of the Hamiltonian using the triangle inequality.  We also assume that an input state that commutes with the Hamiltonian can be provided, the maximally mixed state is sufficient as would a random eigenstate yielded by the quantum phase estimation algorithm.

\begin{theorem}[Zero Knowledge Thermal State Prep] \label{thm:zero_knowledge}
    Let $H_S$ be a Hermitian matrix of dimension ${\rm dim}_S$ with no degenerate eigenvalues, $\rho$ any input state that commutes with $H_S$, and $\gamma$ a random variable distributed uniformly in the interval $[0, 4 \norm{H_S}]$ and
    let $\rho_{\rm fix}$ denote the unique fixed point of the transition dynamics $\identity + \EE_\gamma \TT_{\on}^{(\gamma)}$ where $\TT_{\on}^{(\gamma)}$ is the on-resonance transition matrix used above with the dependence on $\gamma$ made explicit. The following statements then hold.
    \begin{enumerate}
\item For finite $\beta$ the thermal state is an approximate fixed point of the thermalizing channel $\EE_\gamma \Phi_\gamma$ with a deviation of
    \begin{equation}
        \norm{\rho_S(\beta) - \EE_\gamma \Phi_\gamma(\rho_S(\beta))}_1 \le \alpha^2 t e^{\beta \delta_{\min}} \norm{H_S}^{-1} \pi + 8 \frac{\alpha^2}{\delta_{\min}} + 16 \sqrt{\frac{\pi}{2}} \dim_S (\alpha t)^3.
    \end{equation}
    \item   The parameter settings for any $\beta\in [0,\infty]$ and error tolerance $\epsilon \in (0,2]$
    \begin{align}
        \alpha = \frac{\delta_{\min}^4 \epsilon^{3} \widetilde{\lambda}_\star(\beta)^{3}}{\dim_S^7 \norm{H_S}^3}, ~t = \frac{\dim_S^2 \norm{H_S}}{\epsilon \widetilde{\lambda}_\star(\beta) \delta_{\min}^2}, \text{ and } L \in \bigotilde{\frac{\dim_S^{14} \norm{H_S}^6}{\epsilon^5 \delta_{\min}^6 \widetilde{\lambda}_\star(\beta)^{6} }}
    \end{align}
    are sufficient to guarantee $\norm{\rho_{\rm fix} - \left(\EE_\gamma \Phi_\gamma \right)^{\circ L}(\rho)}_1 \in \bigotilde{\epsilon}$.
    The total simulation time needed is therefore
    \begin{equation}
        L \cdot t \in \bigotilde{\frac{\dim_S^{16} \norm{H_S}^7}{\delta_{\min}^8 \epsilon^6 \widetilde{\lambda}_\star(\beta)^7}}.
    \end{equation}
   \item    The fixed point is the ground state In the $\beta \to \infty$ limit and the spectral gap, $\widetilde{\lambda}_\star(\beta)$, of the rescaled transition matrix $\EE_\gamma T_\gamma \cdot \left(\frac{2 \norm{H_S} (\dim + 1)}{\alpha^2 t}\right)$ is lower bounded by a constant, giving the two limits
    \begin{equation}
        \lim_{\beta \to \infty} \rho_{\rm fix} = \ketbra{1}{1} \text{ and } \lim_{\beta \to \infty} \widetilde{\lambda}_\star(\beta) = 2 \int_{0}^{-\delta_{\min}t/2} \sinc^2(u) du \ge 2.43.
        \end{equation}

    \end{enumerate}
\end{theorem}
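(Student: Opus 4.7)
The plan is to prove the three parts of Theorem \ref{thm:zero_knowledge} by leveraging the weak-coupling expansion and the Markov-chain machinery developed earlier. The unifying theme is that averaging $\gamma$ uniformly over $[0, 4\norm{H_S}]$ smears out the resonance condition, preventing exact detailed balance on any single transition pair, but still yielding a Markov operator whose fixed point is close to the thermal state and whose spectrum can be computed explicitly in the ground-state limit using the triangular structure granted by Lemma \ref{lem:fixed_point}.

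For Part 1, I would begin by writing $\EE_\gamma \Phi_\gamma(\rho_S(\beta)) - \rho_S(\beta) = \EE_\gamma \TT_{\on}^{(\gamma)}(\rho_S(\beta)) + \EE_\gamma \TT_{\off}^{(\gamma)}(\rho_S(\beta)) + \EE_\gamma R_\Phi^{(\gamma)}$ using the Taylor expansion of $\Phi$. The off-resonance and remainder pieces are immediately controlled by Corollary \ref{cor:t_off_norm} and Theorem \ref{thm:remainder_bound}. The on-resonance piece is the substantive step: using Definition \ref{def:transition}, for $i>j$ the forward and reverse matrix elements $\bra{j}\TT_{\on}^{(\gamma)}(\ketbra{i}{i})\ket{j}$ and $\bra{i}\TT_{\on}^{(\gamma)}(\ketbra{j}{j})\ket{i}$ differ only by the factor $q(1)/q(0) = e^{-\beta\gamma}$, so exact detailed balance against the thermal weights would require $\gamma = \Delta_S(i,j)$. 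Because the resonance indicator $\mathbf{I}[|\gamma - \Delta_S(i,j)|\le \delta_{\min}]$ only pins $\gamma$ to within $\delta_{\min}$ of the gap, the deviation from detailed balance per pair picks up a factor bounded by $e^{\beta\delta_{\min}}$. Integrating $\sinc^2((\Delta_S(i,j)-\gamma)t/2)$ against the uniform measure on $[0, 4\norm{H_S}]$ contributes a $\pi/t$ factor times $\tfrac{1}{4\norm{H_S}}$, producing the stated $\alpha^2 t e^{\beta\delta_{\min}}/\norm{H_S}$ bound after summing across transition pairs.

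For Part 2, I would mirror the harmonic-oscillator proof of Theorem \ref{thm:harmonic_oscillator}: decompose
\begin{equation}
\norm{\rho_{\rm fix} - (\EE_\gamma \Phi_\gamma)^{\circ L}(\rho)}_1 \le \norm{\rho_{\rm fix} - (\identity + \EE_\gamma \TT_{\on}^{(\gamma)})^{\circ L}(\rho)}_1 + L\bigl(\EE_\gamma\norm{\TT_{\off}^{(\gamma)}}_1 + \norm{R_\Phi}_1\bigr),
\end{equation}
convert the first term to a classical Markov distance via Lemma \ref{lem:quantum_to_classical}, and invoke Jerison's Theorem \ref{thm:markov_chain_bound} with rescaled spectral gap $\widetilde{\lambda}_\star(\beta)$ to obtain a sufficient $L$. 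The remaining $\alpha, t$ budget is set by forcing the second term to be $\bigotilde{\epsilon}$: balancing the $\alpha^2/\delta_{\min}^2$ off-resonance contribution against the $\dim_S(\alpha t)^3$ remainder suggests $\alpha = \bigotilde{1/t^3}$, after which $t$ is chosen large enough that the net accumulated error is $\bigotilde{\epsilon}$. Substituting into $L\cdot t$ produces the asymptotic simulation time.

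The principal obstacle, and the heart of Part 3, is evaluating the spectral gap in the $\beta\to\infty$ limit. By Lemma \ref{lem:fixed_point}, the averaged transition matrix $\EE_\gamma T_\gamma$ is upper triangular with $e_1$ as its unique zero eigenvector, so the spectrum can be read directly off the diagonal. I would compute $\EE_\gamma e_i^T T_\gamma e_i = -\sum_{j<i} \widetilde{\alpha}^2 q(0)\, \EE_\gamma\bigl[\mathbf{I}[|\Delta_S(i,j)-\gamma|\le \delta_{\min}]\sinc^2((\Delta_S(i,j)-\gamma)t/2)\bigr]$, and perform the substitution $u = (\Delta_S(i,j)-\gamma)t/2$ to reduce the $\gamma$-integral to $\tfrac{2}{t}\int_{-\delta_{\min}t/2}^{\delta_{\min}t/2}\sinc^2(u)\,du$. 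Since $q(0)\to 1$ as $\beta\to\infty$, each diagonal entry is $(i-1)$ copies of a universal factor; after rescaling by $\tfrac{2\norm{H_S}(\dim+1)}{\alpha^2 t}$ the smallest-magnitude nonzero eigenvalue occurs at $i=2$ and equals $2\int_0^{\delta_{\min}t/2}\sinc^2(u)\,du$. The constant lower bound $\ge 2.43$ follows by verifying that the parameter choice in Part 2 forces $\delta_{\min}t/2$ past a value where the truncated integral already exceeds this threshold; the delicate point is ensuring the chosen $t$ asymptotically guarantees $\delta_{\min}t\gg 1$, which falls out of the scaling selected in Part 2.
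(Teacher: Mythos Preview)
Your proposal is correct and follows essentially the same approach as the paper: the same three-term decomposition of $\Phi_\gamma - \identity$ for Part 1 with the $e^{\beta\delta_{\min}}$ factor coming from the resonance window and the $\pi/(t\norm{H_S})$ factor from the uniform-$\gamma$ integral of $\sinc^2$; the same Jerison-based error splitting and $\alpha \propto 1/t^3$ balancing for Part 2; and the same upper-triangular diagonal computation via the substitution $u = (\Delta_S(i,j)-\gamma)t/2$ for Part 3, with the $2.43$ bound coming from checking that the chosen $t$ forces $\delta_{\min}t/2 \ge \pi/2$. The only small items you glossed over are that $q(0)$ depends on $\gamma$ and should sit inside the expectation before taking $\beta\to\infty$, and that ergodicity of $\EE_\gamma T_\gamma$ (needed for uniqueness of $\rho_{\rm fix}$) should be noted explicitly---both are handled in the paper exactly as you would expect.
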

\begin{proof}
    We start by understanding the fixed points of $\identity + \EE_\gamma \TT_{\on}^{(\gamma)}$, conditions for the thermal state being fixed are given in Lemma \ref{lem:fixed_point}. As the condition boils down to a detailed balance like condition, we need to compute the off-diagonal transition elements first. Starting with $i > j$ we have from Definition~\ref{def:transition} that
    \begin{align}
        &\EE_\gamma \bra{j} \TT_{\on}^{(\gamma)}(\ketbra{i}{i})\ket{j} \nonumber \\
        &=  \widetilde{\alpha}^2 \EE_{\gamma} \frac{1}{1 + e^{-\beta \gamma}} \mathbf{I}[|\Delta_S(i,j) - \gamma| \le \delta_{\min}]  \sinc^2\left(\frac{(\Delta_S(i,j) - \gamma)t}{2}\right) \\ \nonumber \\
        &~+ \widetilde{\alpha}^2 \EE_{\gamma} \frac{e^{-\beta \gamma}}{1 + e^{-\beta \gamma}} \mathbf{I}[|\Delta_S(i,j) + \gamma| \le \delta_{\min}]  \sinc^2\left(\frac{(\Delta_S(i,j) + \gamma)t}{2}\right) \\
        &= \widetilde{\alpha}^2 \EE_{\gamma} \frac{1}{1 + e^{-\beta \gamma}} \mathbf{I}[|\Delta_S(i,j) - \gamma| \le \delta_{\min}]  \sinc^2\left(\frac{(\Delta_S(i,j) - \gamma)t}{2}\right) \\
        &= \widetilde{\alpha}^2 \frac{1}{4 \norm{H_S}} \int_{0}^{4 \norm{H_S}} \frac{1}{1 + e^{-\beta \gamma}} \mathbf{I}[|\Delta_S(i,j) - \gamma| \le \delta_{\min}]  \sinc^2\left(\frac{(\Delta_S(i,j) - \gamma)t}{2}\right) d\gamma \\
        &=  \frac{\widetilde{\alpha}^2}{4 \norm{H_S}} \int_{\Delta_S(i,j) - \delta_{\min}}^{\Delta_S(i,j) + \delta_{\min}} \frac{1}{1 + e^{-\beta \gamma}}  \sinc^2\left(\frac{(\Delta_S(i,j) - \gamma)t}{2}\right) d\gamma \\
        &= \frac{\widetilde{\alpha}^2}{2 t \norm{H_S}} \int_{-\delta_{\min} t /2}^{\delta_{\min} t / 2} \frac{1}{1 + e^{-\beta (\Delta_S(i, j) - 2 u /t)}} \sinc^2(u) du. \label{eq:zero_knowledge_transition_1}
    \end{align}
    The exact same calculation holds for $i < j$, which after repeating the steps that led to~\eqref{eq:zero_knowledge_transition_1} we arrive at a similar result  with a slightly different integrand
    \begin{align}
        \EE_\gamma \bra{j} \TT_{\on}^{(\gamma)}(\ketbra{i}{i})\ket{j} &= \frac{\widetilde{\alpha}^2}{2 t \norm{H_S}} \int_{-\delta_{\min} t /2}^{\delta_{\min} t / 2} \frac{e^{-\beta (\Delta_S(j,i) - 2 u /t)}}{1 + e^{-\beta (\Delta_S(j,i) - 2 u /t)}} \sinc^2(u) du,\label{eq:zero_knowledge_transition_2}
    \end{align}
    as we pick up a factor of $q(1)$ as opposed to $q(0)$. Note that we have also shown that $\EE_\gamma T_\gamma$ is ergodic, as there is a nonzero probability for any state $\ketbra{i}{i}$ to transition to any other state $\ketbra{j}{j}$ in one iteration on average over $\gamma$.

    For finite $\beta$ the condition for $\rho_S(\beta)$ being a fixed point is given in Eq. \eqref{eq:detailed_balance}, repeated here as
    \begin{equation}
        \sum_{i \neq j} \frac{e^{-\beta \lambda_S(i)}}{\partfun_S(\beta)} e_j^T \EE_\gamma T_\gamma e_i - \frac{e^{-\beta \lambda_S(j)}}{\partfun_S(\beta)}  e_i^T \EE_\gamma T_\gamma e_j = 0,
    \end{equation}
    for all $j$. We can plug in our calculation for the transition coefficients for summands with $i > j$ first
    \begin{align}
        &\frac{e^{-\beta \lambda_S(i)}}{\partfun_S(\beta)} e_j^T \EE_\gamma T_\gamma e_i - \frac{e^{-\beta \lambda_S(j)}}{\partfun_S(\beta)}  e_i^T \EE_\gamma T_\gamma e_j \nonumber \\ 
        &= \frac{e^{-\beta \lambda_S(j)}}{\partfun_S(\beta)} \left( e^{-\beta \Delta_S(i,j)} \EE_\gamma \bra{j} \TT_{\on}^{(\gamma)}(\ketbra{i}{i})\ket{j} - \bra{i} \TT_{\on}^{(\gamma)}(\ketbra{j}{j})\ket{i} \right) \\
        &= \frac{e^{-\beta \lambda_S(j)}}{\partfun_S(\beta)} \frac{\widetilde{\alpha}^2}{2 t \norm{H_S}} e^{-\beta \Delta_S(i,j)} \int_{-\delta_{\min} t /2 }^{\delta_{\min} t/ 2} \frac{1 - e^{\beta 2 u / t}}{1 + e^{-\beta(\Delta_S(i,j) - 2u/t)}} \sinc^2(u) du \\
        &= \frac{e^{-\beta \lambda_S(i)}}{\partfun_S(\beta)} \frac{\widetilde{\alpha}^2}{2 t \norm{H_S}} \int_{-\delta_{\min} t /2 }^{\delta_{\min} t/ 2} \frac{1 - e^{\beta 2 u / t}}{1 + e^{-\beta(\Delta_S(i,j) - 2u/t)}} \sinc^2(u) du. \label{eq:zero_knowledge_tmp_1}
    \end{align}
    For $i < j$ we have the very similar
    \begin{align}
        &\frac{e^{-\beta \lambda_S(i)}}{\partfun_S(\beta)} e_j^T \EE_\gamma T_\gamma e_i - \frac{e^{-\beta \lambda_S(j)}}{\partfun_S(\beta)}  e_i^T \EE_\gamma T_\gamma e_j \nonumber \\ 
        &= \frac{e^{-\beta \lambda_S(j)}}{\partfun_S(\beta)} \frac{\widetilde{\alpha}^2}{2 t \norm{H_S}} \int_{-\delta_{\min} t /2 }^{\delta_{\min} t/ 2} \frac{ e^{\beta 2 u / t} - 1}{1 + e^{-\beta(\Delta_S(j, i) - 2u/t)}} \sinc^2(u) du. \label{eq:zero_knowledge_tmp_2}
    \end{align}
    Unfortunately these integrals are not 0, which can be verified numerically, and it is unclear how to make the summation over $i \neq j$ equal to 0. 
    
    Our work around this is that instead of showing that the thermal state is exactly the fixed point we can use these results to show that it is an approximate fixed point. There are a few ways we could proceed. The first way could be to compute a Taylor series for the integrand and isolate the limits in which the remainder goes to 0. Unfortunately due to the $\sinc^2(u) = \sin(u)^2 / u^2$ term this means that the overall scaling will go like $1/t$, making the total expression independent of $t$. Instead the route we will take will be to upper bound the norm $\norm{\vec{p}_{\beta} - \EE_\gamma (I + T_\gamma)\vec{p}_\beta}_1 = \norm{\EE_\gamma T_\gamma \vec{p}_\beta}_1$, as this norm is only 0 if $\vec{p}_\beta$ is a fixed point. We reduce this to computations we have already performed as
    \begin{align}
        \norm{\EE_\gamma T_\gamma \vec{p}_\beta}_1 &= \sum_j \abs{e_j^T \EE_\gamma T_\gamma \vec{p}_\beta } \\
        &= \sum_j \abs{\sum_{i} \frac{e^{-\beta \lambda_S(i)}}{\partfun_S(\beta)} e_j^T \EE_\gamma T_\gamma e_i } \\
        &= \sum_j \abs{\sum_{i \neq j} \frac{e^{-\beta \lambda_S(i)}}{\partfun_S(\beta)} e_j^T \EE_\gamma T_\gamma e_i - \frac{e^{-\beta \lambda_S(j)}}{\partfun_S(\beta)} e_i^T \EE_\gamma T_\gamma e_j}.
    \end{align}
    This is essentially the derivation for the fixed point conditions described in Lemma \ref{lem:fixed_point}. We now plug in Eqs. \eqref{eq:zero_knowledge_tmp_1} and \eqref{eq:zero_knowledge_tmp_2} into the above and upper bound the integral as
    \begin{align}
        &\sum_j \abs{\sum_{i \neq j} \frac{e^{-\beta \lambda_S(i)}}{\partfun_S(\beta)} e_j^T \EE_\gamma T_\gamma e_i - \frac{e^{-\beta \lambda_S(j)}}{\partfun_S(\beta)} e_i^T \EE_\gamma T_\gamma e_j} \nonumber \\
        &\le \sum_j \abs{\sum_{i < j} \frac{e^{-\beta \lambda_S(i)}}{\partfun_S(\beta)} e_j^T \EE_\gamma T_\gamma e_i - \frac{e^{-\beta \lambda_S(j)}}{\partfun_S(\beta)} e_i^T \EE_\gamma T_\gamma e_j} + \sum_j  \abs{\sum_{i > j} \frac{e^{-\beta \lambda_S(i)}}{\partfun_S(\beta)} e_j^T \EE_\gamma T_\gamma e_i - \frac{e^{-\beta \lambda_S(j)}}{\partfun_S(\beta)} e_i^T \EE_\gamma T_\gamma e_j} \\
        &= \frac{\widetilde{\alpha}^2}{2 t \norm{H_S}} \sum_j \abs{\sum_{i < j} \frac{e^{-\beta \lambda_S(j)}}{\partfun_S(\beta)} \int_{-\delta_{\min} t /2 }^{\delta_{\min} t/ 2} \frac{ e^{\beta 2 u / t} - 1}{1 + e^{-\beta(\Delta_S(j, i) - 2u/t)}} \sinc^2(u) du} \nonumber \\
        &+ \frac{\widetilde{\alpha}^2}{2 t \norm{H_S}} \sum_j  \abs{\sum_{i > j} \frac{e^{-\beta \lambda_S(i)}}{\partfun_S(\beta)} \int_{-\delta_{\min} t /2 }^{\delta_{\min} t/ 2} \frac{1 - e^{\beta 2 u / t}}{1 + e^{-\beta(\Delta_S(i,j) - 2u/t)}} \sinc^2(u) du} \\
        &\le \frac{\widetilde{\alpha}^2}{2 t \norm{H_S}} \sum_j \sum_{i < j} \frac{e^{-\beta \lambda_S(j)}}{\partfun_S(\beta)} \int_{-\delta_{\min} t /2 }^{\delta_{\min} t/ 2}\abs{ \frac{ e^{\beta 2 u / t} - 1}{1 + e^{-\beta(\Delta_S(j, i) - 2u/t)}}} \sinc^2(u) du \nonumber \\
        &+ \frac{\widetilde{\alpha}^2}{2 t \norm{H_S}} \sum_j  \sum_{i > j} \frac{e^{-\beta \lambda_S(i)}}{\partfun_S(\beta)} \int_{-\delta_{\min} t /2 }^{\delta_{\min} t/ 2} \abs{ \frac{1 - e^{\beta 2 u / t}}{1 + e^{-\beta(\Delta_S(i,j) - 2u/t)}} } \sinc^2(u) du \\
        &\le \frac{\widetilde{\alpha}^2}{2 t \norm{H_S}} e^{\beta \delta_{\min}} \int_{-\delta_{\min}t/2}^{\delta_{\min}t /2} \sinc^2(u) du \left(\sum_j \frac{e^{-\beta \lambda_S(j)}}{\partfun_S(\beta)} \sum_{i < j} 1 + \sum_j \sum_{i > j} \frac{e^{-\beta \lambda_S(i)}}{\partfun_S(\beta)}  \right) \\
        &\le \frac{\widetilde{\alpha}^2 \dim_S}{t \norm{H_S}} e^{\beta \delta_{\min}} \pi \\
        &\le \alpha^2 t e^{\beta \delta_{\min}} \norm{H_S}^{-1} \pi.
    \end{align}
    For this we can have $\alpha t$, which represents the total simulation time multiplied by the strength of the random interaction $G$, be constant and still take $\alpha \to 0$ to achieve arbitrarily small error.

    Now we turn to bounding the total simulation time. We will let $\rho_{\rm fix}$ denote the fixed point of the dynamics. As before, we break the error into two pieces
    \begin{equation}
        \norm{\rho_{\rm fix} - \left(\EE_\gamma \Phi_\gamma \right)^{\circ L}}_1 \le \norm{\rho_{\rm fix} - \left(\EE_\gamma \identity + \TT_{\on}^{(\gamma)}\right)^{\circ L} (\rho)}_1 + L(\norm{\TT_{\off}}_1 + \norm{R_{\Phi}}_1).
    \end{equation}
    Let $\widetilde{\lambda}_\star(\beta)$ denote the spectral gap for the rescaled transition matrix $\EE_\gamma T_\gamma \cdot \left(\frac{2 \norm{H_S} (\dim + 1)}{\alpha^2 t}\right)$, as this is the dimensionful prefactor in front of the transitions derived in Eqs. \eqref{eq:zero_knowledge_transition_1} and \eqref{eq:zero_knowledge_transition_2}. Jerison's Markov Relaxation Theorem \ref{thm:markov_chain_bound} tells us that taking $L$ to satisfy
    \begin{equation}
        L \ge \frac{\dim_S}{\lambda_\star} J \in \bigotilde{\frac{\dim_S^2 \norm{H_S}}{\alpha^2 t \widetilde{\lambda}_\star(\beta)}}\label{eq:Lbd}
    \end{equation}
    is sufficient to guarantee $\norm{\rho_{\rm fix} - \left(\EE_\gamma \identity + \TT_{\on}^{(\gamma)}\right)^{\circ L} (\rho)}_1 \in \bigotilde{\epsilon}$. Now we balance the off-resonance and remainder errors
    \begin{equation}
        \norm{\TT_{\off}}_1 + \norm{R_{\Phi}}_1 \le \frac{8 \alpha^2}{\delta_{\min}^2} + 16 \sqrt{\frac{\pi}{2}} \dim_S (\alpha t)^3 = \frac{\alpha^2}{\delta_{\min}^2} \left( 8 + 16 \sqrt{\frac{\pi}{2}} \dim_S \alpha \delta_{\min}^2 t^3 \right),
    \end{equation}
    and we see setting $\alpha = \frac{1}{\dim_S \delta_{\min}^2 t^3}$ makes the parenthesis a constant.
    To bound the total off-resonance and remainder error we take the product
    \begin{equation}
        L(\norm{\TT_{\off}}_1 + \norm{R_{\Phi}}_1) \in \bigotilde{\frac{\dim_S^2 \norm{H_S}}{\alpha^2 t \widetilde{\lambda}_\star(\beta)} \frac{\alpha^2}{\delta_{\min}^2}} = \bigotilde{\frac{\dim_S^2 \norm{H_S}}{ t \delta_{\min}^2 \widetilde{\lambda}_\star(\beta)} }.
    \end{equation}
    Observe that setting 
    \begin{equation}
        t = \frac{\dim_S^2 \norm{H_S}}{\epsilon \delta_{\min}^2 \widetilde{\lambda}_\star(\beta)}
    \end{equation} 
    is sufficient to make the above product $L(\norm{\TT_{\off}}_1 + \norm{R_{\Phi}}_1) \in\bigotilde{\epsilon}$. 

    We now turn to the $\beta \to \infty$ limit. For this we note that Lemma \ref{lem:fixed_point} guarantees that the ground state is a fixed point and that $\EE_\gamma T_\gamma$ is upper triangular. We will show the ground state is unique by computing the spectrum of $\EE_\gamma T_\gamma$. For this we take the $\beta \to \infty$ limit of the transitions in Eqs. \eqref{eq:zero_knowledge_transition_1} and \eqref{eq:zero_knowledge_transition_2}, which will give us the diagonal elements and then the spectrum. Starting with $i > j$ given in Eq. \eqref{eq:zero_knowledge_transition_1} we get
    \begin{align}
        \lim_{\beta \to \infty} \EE_\gamma \bra{j} \TT_{\on}^{(\gamma)}(\ketbra{i}{i})\ket{j} &= \frac{\widetilde{\alpha}^2}{2 t \norm{H_S}} \int_{-\delta_{\min}t/2}^{\delta_{\min}t/2} \sinc^2(u) du
    \end{align}
    and for $i < j$ from Eq. \eqref{eq:zero_knowledge_transition_2} we have
    \begin{equation}
        \lim_{\beta \to \infty} \EE_\gamma \bra{j} \TT_{\on}^{(\gamma)}(\ketbra{i}{i})\ket{j} = 0.
    \end{equation}
    We denote the $\sinc$ integration above as
    \begin{equation}
        I_{\sinc}(t) \coloneqq \int_{-\delta_{\min}t/2}^{\delta_{\min} t/2} \sinc^2(u) du,
    \end{equation}
    and we will show later that this is constant for $\dim_S \ge 3$. Now these transitions allow us to compute the diagonal elements
    \begin{align}
        \lim_{\beta \to \infty} \EE_\gamma \bra{i} \TT_{\on}^{(\gamma)}(\ketbra{i}{i}) \ket{i} &= - \sum_{j \neq i} \lim_{\beta \to \infty} \EE_\gamma \bra{j} \TT_{\on}^{(\gamma)}(\ketbra{i}{i}) \ket{j} \\
        &= - \sum_{j < i} \lim_{\beta \to \infty} \EE_\gamma \bra{j} \TT_{\on}^{(\gamma)}(\ketbra{i}{i}) \ket{j} \\
        &= - \frac{\widetilde{\alpha}^2}{2 t \norm{H_S}} (i - 1) I_{\sinc}(t).
    \end{align}
    This gives a spectrum for $\EE_\gamma T_\gamma$ as 0 and $- \frac{\widetilde{\alpha}^2}{2 t \norm{H_S}} (i - 1) I_{\sinc}(t)$ for $i > 1$. This shows the ground state is the unique fixed point as 0 has multiplicity 1 in the spectrum. Further the spectral gap of the rescaled transition matrix $\widetilde{\lambda}_\star(\beta)$ is then given by
    \begin{equation}
        \lim_{\beta \to \infty} \widetilde{\lambda}_\star(\beta) = I_{\sinc}(t).
    \end{equation}
    We can repeat the analysis for finding suitable values for $\alpha, t, $ and $L$ to guarantee thermalization and we find that 
    \begin{equation}
        \alpha = \frac{1}{\dim_S \delta_{\min}^2 t^3}, ~ t = \frac{4 \dim_S^2 \norm{H_S}}{\epsilon \delta_{\min}^2}, \text{ and } L \in \bigotilde{\frac{\dim_S^2 \norm{H_S}}{\alpha^2 t}}
    \end{equation}
    are sufficient to guarantee $\norm{\ketbra{1}{1} - \left( \EE_\gamma \Phi_\gamma\right)^{\circ L} (\rho)}_1 \in \bigotilde{\epsilon}$.  Substituting this into~\eqref{eq:Lbd} yields
    \begin{equation}
        L\cdot t \in \bigotilde{\frac{\dim_S^{16} \norm{H_S}^7}{\delta_{\min}^8 \epsilon^6 \widetilde{\lambda}_\star(\beta)^7}}
    \end{equation}
    as stated in the second claim in the theorem.

    Our final task is to justify the third claim of the theorem, which involves showing that $I_{\sinc}(t)$ as constant is valid. Using the choice of $t$ directly above
    \begin{align}
        I_{\sinc}(t) = \int_{-\delta_{\min} t /2}^{\delta_{\min} t /2}  \sinc^2(u) du = 2 \int_{0}^{ \frac{\dim_S^2 4 \norm{H_S}} {\epsilon \delta_{\min}}} \sinc^2(u) du. \label{eq:zero_knowledge_sinc_integral}
    \end{align}
    Now we note that this integral is monotonic with respect to the upper limit of integration with a final value of $\lim_{t \to \infty} I_{\sinc}(t) = \pi$. We note that we can capture a significant amount of this integral by just requiring the upper limit to be greater than the first zero of sinc located at $\frac{\pi}{2}$, which is true if $\epsilon \le \frac{4 \dim_S^2 \norm{H_S}}{\pi \delta_{\min}}$. This value can be computed as $2 \int_0^{\pi / 2} \sinc^2(u) du \ge 2.43$. This can be guaranteed by noting that $\epsilon$ can be at most 2, so the upper limit in Eq. \eqref{eq:zero_knowledge_sinc_integral} is satisfied if
    \begin{align}
    \epsilon \le 2 \le \frac{3^2}{\pi} \le \frac{\dim_S^2}{\pi} \le \frac{\dim_S^2 4 \norm{H_S}}{\pi \delta_{\min}},
\end{align}
as $\delta_{\min} \le 4 \norm{H_S}$. This shows that for our choice of $t$ then $|I_{\sinc}(t) - \pi| \le 0.71$, rendering it asymptotically constant as claimed.
\end{proof}

There are a few points that need to be addressed with the above theorem. The first is that our proof of the approximate fixed point utilizes rather poor bounds, resulting in diverging behavior as $\beta \to \infty$. For finite $\beta$ our bounds on the change in the thermal state scales as $e^{\beta \delta_{\min}}$, which diverges as $\beta$ goes to $\infty$, but in this exact same limit we are able to show that the ground state is the \emph{exact} fixed point of the Markov chain. This clear divergence in approximation error is a result of loose bounds and could be a potential avenue for improvement. The second point we would like to address is the rather high asymptotic scaling. This is the byproduct of a few things, the most important of which is the introduction of a $1/t$ in the reduction of the $\sinc$ integral. This causes a downstream effect of increasing the degree of each asymptotic parameter. To improve this one would need some kind of knowledge of the eigenvalues to prevent a uniform integration of each $\sinc$ term. We study the limiting case of this by assuming sample access to the exact eigenvalue differences $\Delta_S(i,j)$ in Section \ref{sec:perfect_knowledge} and obtain much improved scaling. The second source of inflation in our asymptotic scaling could be our weak-coupling approach to studying the channel. As explored numerically in Section \ref{sec:specific_numerics} we find that using different $\alpha$ scalings with respect to $t$ can greatly effect the $\epsilon$ scaling of the total simulation time $L \cdot t$. A higher order analysis of this channel could lead to anytically better guarantees on the thermalization time required, even in this zero knowledge scenario.

\subsection{Perfect Knowledge} \label{sec:perfect_knowledge}

Oftentimes when studying a system some knowledge of the eigenvalue gaps may be present. Our goal in this section is to study the extreme case of this scenario where one has knowledge of the exact eigenvalue differences. This is unlikely to happen with realistic quantum materials but instead serves as an ideal scenario for our channel to benchmark the effects of eigenvalue knowledge. Further, we note for some computational tasks, such as amplitude amplification, the eigenvalues may be explicitly computable and the real task is to find the dominant eigenvectors. A more realistic model for studying the impacts of eigenvalue knowledge on the total simulation time might be to place Gaussians at each of the $\Delta_S(i,j)$ values with some width $\sigma$. This is the model we use for numeric investigations in Section \ref{sec:general_numerics}, but we were unable to compute the total simulation time required analytically. We find that our model of perfect knowledge allows us to show a reduced total simulation time budget, with the ratio of zero knowledge to perfect knowledge scaling as $\bigotilde{\frac{\norm{H_S}^7}{\delta_{\min}^7 \epsilon^{3.5} \widetilde{\lambda}_\star(\beta)^{3.5}}}$, 
     which gives an explicit worst-case simulation time bound for ground state preparation.

\begin{theorem}[Perfect Knowledge Thermal State Prep] \label{thm:perfect_knowledge}
    Let $H_S$ be a Hermitian matrix of dimension ${\rm dim}_S$ with no degenerate eigenvalues, $\rho$ any input state that commutes with $H_S$ , and let $\gamma$ be a random variable with distribution $\prob{\gamma = \Delta(i,j)} = \frac{\eta_\Delta(i,j)}{\binom{\dim_S}{2}}$ where $\eta_{i,j}$ is the number of times a particular eigenvalue difference appears. For any $\beta\in [0,\infty]$ the thermal state can be prepared with controllable error
    \begin{equation}
        \norm{\rho_S(\beta) - \left(\EE_\gamma \Phi_\gamma \right)^{\circ L}(\rho)}_1 \in \bigotilde{\epsilon}
    \end{equation}
     with the following parameter settings
     \begin{align}
         \alpha &= \frac{\delta_{\min} \epsilon^{1.5} \widetilde{\lambda}_\star(\beta)^{1.5}}{\dim_S^7}, t = \frac{\dim_S^2}{\delta_{\min} \epsilon^{0.5} \widetilde{\lambda}_\star(\beta)^{0.5}},\text{ and } L \in \bigotilde{\frac{\dim_S^{14}}{\epsilon^2 \widetilde{\lambda}_\star(\beta)^3}},
     \end{align}
     where $\widetilde{\lambda}_\star(\beta)$ is the spectral gap of the rescaled transition matrix $\EE_\gamma T_\gamma  \cdot \frac{\binom{\dim_S}{2}}{\widetilde{\alpha}^2}$. 
     This gives the total simulation time required as
     \begin{equation}
         L \cdot t \in \bigotilde{\frac{\dim_S^{16}}{\delta_{\min} \epsilon^{2.5} \widetilde{\lambda}_\star(\beta)^{3.5}}}.
     \end{equation}
     All of the above conditions hold in the ground state limit as $\beta \to \infty$ and further we can compute a lower bound on the spectral gap of the rescaled transition matrix as
     \begin{equation}
         \lim_{\beta \to \infty} \widetilde{\lambda}_\star(\beta) = \min_{i > 1} \sum_{j < i} \eta_\Delta(i,j) \ge 1.
     \end{equation}
\end{theorem}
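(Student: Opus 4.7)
The plan is to parallel the structure of the proof of Theorem~\ref{thm:zero_knowledge}, but exploit the discrete, eigenvalue-matched distribution on $\gamma$ to sharpen every step. First I would compute the averaged transition matrix $\EE_\gamma T_\gamma$ explicitly. Because $\gamma$ is supported on the set of positive differences $\{\Delta_S(k,l) : k > l\}$ with probability $\eta_\Delta(k,l)/\binom{\dim_S}{2}$, and because $\delta_{\min}$ by definition separates distinct differences, each indicator in Definition~\ref{def:transition} fires for exactly one value of $\gamma$ and the sinc argument is then zero. Thus for $i>j$,
\begin{equation}
    \EE_\gamma \bra{j}\TT_{\on}^{(\gamma)}(\ketbra{i}{i})\ket{j} = \frac{\widetilde{\alpha}^2\, \eta_\Delta(i,j)}{\binom{\dim_S}{2}} \, q(0)\big|_{\gamma=\Delta_S(i,j)},
\end{equation}
and for $i<j$ the analogous expression with $q(1)\big|_{\gamma=\Delta_S(j,i)}$. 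Ergodicity is then immediate since every off-diagonal entry is strictly positive for finite $\beta$.

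Next I would verify the detailed-balance condition of Lemma~\ref{lem:fixed_point} on a pairwise basis rather than only after summing in $i$. The key arithmetic observation is that when $\gamma=\Delta_S(i,j)=\lambda_S(i)-\lambda_S(j)$, we have
\begin{equation}
    e^{-\beta\lambda_S(i)}\,q(0) \;=\; \frac{e^{-\beta\lambda_S(i)}}{1+e^{-\beta\Delta_S(i,j)}} \;=\; e^{-\beta\lambda_S(j)}\,q(1),
\end{equation}
so that the $(i,j)$ summand of the detailed-balance expression vanishes identically. Applying this to every pair, together with the analogous calculation for $i<j$, establishes that $\rho_S(\beta)$ is an \emph{exact} fixed point of $\identity+\EE_\gamma\TT_{\on}^{(\gamma)}$ for every finite $\beta$, and by ergodicity it is the unique fixed point. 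This is the qualitative improvement over Theorem~\ref{thm:zero_knowledge}, where the same sum only vanished approximately.

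To bound the simulation time I would decompose $\norm{\rho_S(\beta) - (\EE_\gamma \Phi_\gamma)^{\circ L}(\rho)}_1$ into a Markov convergence piece and $L$ copies of the single-step errors $\norm{\TT_{\off}(\rho)}_1$ and $\norm{R_\Phi(\rho)}_1$, handled by Corollary~\ref{cor:t_off_norm} and Theorem~\ref{thm:remainder_bound} respectively. Factoring the overall prefactor $\widetilde{\alpha}^2/\binom{\dim_S}{2}$ out of $\EE_\gamma T_\gamma$, Jerison's Theorem~\ref{thm:markov_chain_bound} yields
\begin{equation}
    L \in \bigotilde{\tfrac{\dim_S\,\binom{\dim_S}{2}(\dim+1)}{\alpha^2 t^2\,\widetilde{\lambda}_\star(\beta)}}.
\end{equation}
Choosing $\alpha = 1/(\dim_S \delta_{\min}^2 t^3)$ equates the off-resonance and remainder contributions to $\bigo{\alpha^2/\delta_{\min}^2}$ per step, whereupon requiring $L \cdot \alpha^2/\delta_{\min}^2 \in \bigotilde{\epsilon}$ fixes $t = \bigotilde{\dim_S^2/(\delta_{\min}\sqrt{\epsilon\,\widetilde{\lambda}_\star(\beta)})}$. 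Substituting these reproduces the stated parameters and the total simulation time $L\cdot t \in \bigotilde{\dim_S^{16}/(\delta_{\min}\epsilon^{2.5}\widetilde{\lambda}_\star(\beta)^{3.5})}$.

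Finally, the $\beta\to\infty$ claim follows from Lemma~\ref{lem:fixed_point}: as $\beta\to\infty$ every $q(1)$ factor vanishes, so all $i<j$ transitions die and $\EE_\gamma T_\gamma$ becomes upper triangular with diagonal entries $-\widetilde{\alpha}^2\binom{\dim_S}{2}^{-1} \sum_{j<i}\eta_\Delta(i,j)$ for $i>1$ and $0$ for $i=1$. These are its eigenvalues, giving unique fixed point $\ketbra{1}{1}$ and rescaled gap $\widetilde{\lambda}_\star(\infty)=\min_{i>1}\sum_{j<i}\eta_\Delta(i,j)\ge 1$. I expect the main obstacle to be the spectral gap for \emph{finite} $\beta$: the matrix is no longer triangular and degeneracy structure in $\eta_\Delta$ couples rows in a way that resists a closed-form diagonalization, which is why the statement leaves $\widetilde{\lambda}_\star(\beta)$ implicit and only the ground-state limit admits the clean lower bound.
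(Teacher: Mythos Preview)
Your proposal is correct and follows essentially the same approach as the paper: you compute the averaged on-resonance transitions explicitly, verify pairwise detailed balance exactly (the paper carries out the identical arithmetic $e^{-\beta\lambda_S(i)}/(1+e^{-\beta\Delta_S(i,j)}) = e^{-\beta\lambda_S(j)}e^{-\beta\Delta_S(i,j)}/(1+e^{-\beta\Delta_S(i,j)})$), decompose the error into Markov plus off-resonance plus remainder pieces, and make the same parameter choices $\alpha = 1/(\dim_S\delta_{\min}^2 t^3)$ and $t = \dim_S^2/(\delta_{\min}\sqrt{\epsilon\,\widetilde{\lambda}_\star(\beta)})$. The $\beta\to\infty$ analysis via upper-triangularity and reading the spectrum off the diagonal is likewise identical.
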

\begin{proof}
This proof structure is structurally similar to the proof of Theorem \ref{thm:zero_knowledge}.
To show that the thermal state is the fixed point we will need to compute transition factors of the form $\EE_\gamma \bra{j}\TT_{\on}^{(\gamma)}(\ketbra{i}{i})\ket{j}$ for use in Lemma \ref{lem:fixed_point}. Using the on-resonance definition in Eq. \eqref{eq:on_resonance} we have for $i > j$
\begin{align}
    &\EE_\gamma \bra{j} \TT_{\on}^{(\gamma)}(\ketbra{i}{i})\ket{j} \nonumber \\
    &=  \widetilde{\alpha}^2 \EE_{\gamma} \frac{1}{1 + e^{-\beta \gamma}} \mathbf{I}[|\Delta_S(i,j) - \gamma| \le \delta_{\min}]  \sinc^2\left(\frac{(\Delta_S(i,j) - \gamma)t}{2}\right) \nonumber \\
    &~+ \widetilde{\alpha}^2 \EE_{\gamma} \frac{e^{-\beta \gamma}}{1 + e^{-\beta \gamma}} \mathbf{I}[|\Delta_S(i,j) + \gamma| \le \delta_{\min}]  \sinc^2\left(\frac{(\Delta_S(i,j) + \gamma)t}{2}\right) \\
    &= \widetilde{\alpha}^2 \sum_{\Delta_S(k,l)} \prob{\gamma = \Delta_S(k,l)} \frac{\mathbf{I}[|\Delta_S(i,j) - \Delta_S(k,l)| \le \delta_{\min}]}{1 + e^{-\beta \Delta_S(k,l)}}   \sinc^2\left(\frac{(\Delta_S(i,j) - \Delta_S(k,l))t}{2}\right) \\
    &= \widetilde{\alpha}^2 \frac{\eta_\Delta(i,j)}{\binom{\dim_S}{2}} \frac{1}{1 + e^{-\beta \Delta_S(i,j)}}.
\end{align}
$i < j$ can be computed similarly as
\begin{equation}
    \EE_\gamma \bra{j} \TT_{\on}^{(\gamma)}(\ketbra{i}{i})\ket{j} = \widetilde{\alpha}^2 \frac{\eta_\Delta(i,j)}{\binom{\dim_S}{2}} \frac{e^{-\beta \Delta_S(k,l)}}{1 + e^{-\beta \Delta_S(k,l)}}.
\end{equation}
This allows us to compute the detailed-balance like condition in Eq. \eqref{eq:detailed_balance} for $i > j$
\begin{align}
    &\frac{e^{-\beta \lambda_S(i)}}{\partfun_S(\beta)} \EE_\gamma \bra{j} \TT_{\on}^{(\gamma)}(\ketbra{i}{i}) \ket{j} - \frac{e^{-\beta \lambda_S(j)}}{\partfun_S(\beta)} \bra{i} \TT_{\on}^{(\gamma)}(\ketbra{j}{j}) \ket{i} \nonumber \\
    &= \frac{e^{-\beta \lambda_S(i)}}{\partfun_S(\beta)} \widetilde{\alpha}^2 \frac{\eta_\Delta(i,j)}{\binom{dim_S}{2}} \frac{1}{1 + e^{-\beta \Delta_S(i,j)}} - \frac{e^{-\beta \lambda_S(j)}}{\partfun_S(\beta)} \widetilde{\alpha}^2 \frac{\eta_\Delta(i,j)}{\binom{dim_S}{2}} \frac{e^{-\beta \Delta_S(i,j)}}{1 + e^{-\beta \Delta_S(i,j)}} \\
    &= \frac{\widetilde{\alpha}^2}{\partfun_S(\beta)} \frac{\eta_\Delta(i,j)}{\binom{\dim_S}{2}} \left(\frac{e^{-\beta \lambda_S(i)}}{1 + e^{-\beta \Delta_S(i,j)}} - e^{-\beta \lambda_S(j)} \frac{e^{-\beta \Delta_S(i,j)}}{1 + e^{-\beta \Delta_S(i,j)} } \right) \\
    &= 0.
\end{align}
For $i < j$ we can repeat the same steps to argue that detailed balance also holds in this case.
\begin{align}
    &\frac{e^{-\beta \lambda_S(i)}}{\partfun_S(\beta)} \EE_\gamma \bra{j} \TT_{\on}^{(\gamma)}(\ketbra{i}{i}) \ket{j} - \frac{e^{-\beta \lambda_S(j)}}{\partfun_S(\beta)} \bra{i} \TT_{\on}^{(\gamma)}(\ketbra{j}{j}) \ket{i} \nonumber \\
    &= \frac{e^{-\beta \lambda_S(i)}}{\partfun_S(\beta)} \widetilde{\alpha}^2 \frac{\eta_\Delta(i,j)}{\binom{dim_S}{2}} \frac{e^{-\beta \Delta_S(j, i)}}{1 + e^{-\beta \Delta_S(j, i)}} - \frac{e^{-\beta \lambda_S(j)}}{\partfun_S(\beta)} \widetilde{\alpha}^2 \frac{\eta_\Delta(i,j)}{\binom{dim_S}{2}} \frac{1}{1 + e^{-\beta \Delta_S(j, i)}} \\
    &= \frac{\widetilde{\alpha}^2}{\partfun_S(\beta)} \frac{\eta_\Delta(i,j)}{\binom{\dim_S}{2}} \left(\frac{e^{-\beta \lambda_S(j)}}{1 + e^{-\beta \Delta_S(j, i)}} - \frac{e^{-\beta \lambda_S(j)}}{1 + e^{-\beta \Delta_S(j, i)} } \right) \\
    &= 0.
\end{align}
This is sufficient to show that the thermal state $\rho_S(\beta)$ is a fixed point via Lemma \ref{lem:fixed_point}. As we have also shown that the probability of transitioning from any state $\ketbra{i}{i}$ to any other state $\ketbra{j}{j}$ is nonzero this gives a nonzero expected hitting time for any pair of states. This implies the Markov chain is ergodic and that $\rho_S(\beta)$ is the \emph{unique} fixed point.

Next we bound the total simulation time required. For reasons similar to the harmonic oscillator in Section \ref{sec:harmonic_oscillator} we are unable to compute the spectral gap of the Markov matrix. We start the analysis in a similar manner by using the decomposition
\begin{equation}
    \norm{\rho_S(\beta) - \left(\EE_\gamma \Phi_\gamma\right)^{\circ L}(\rho)}_1 \le \norm{\rho_S(\beta) - \left(\EE_\gamma \identity + \TT_{\on}^{(\gamma)}\right)^{\circ L}(\rho)}_1 + L(\norm{\TT_{\off}}_1 + \norm{R_{\Phi}}_1 ).
\end{equation}
We bound the Markov error via Theorem \ref{thm:markov_chain_bound}. This theorem guarantees that choosing $L$ to satisfy
\begin{align}
    L \ge \frac{\dim_S \binom{\dim_S}{2}}{\widetilde{\alpha}^2 \widetilde{\lambda}_\star(\beta)} J \in \bigotilde{\frac{\dim_S^4}{\alpha^2 t^2 \widetilde{\lambda}_\star(\beta)}},
\end{align}
where $\widetilde{\lambda}_\star(\beta)$ is the spectral gap of the rescaled transition matrix $\EE_\gamma T_\gamma \cdot \frac{\binom{\dim_S}{2}}{\widetilde{\alpha}^2}$, is sufficient for $\norm{\rho_S(\beta) - \left(\EE_\gamma \identity + \TT_{\on}^{(\gamma)}\right)^{\circ L}(\rho)}_1 \in \bigotilde{\epsilon}$. We now use this to bound the total off-resonance and remainder error after balancing the two contributions asymptotically
\begin{align}
    \norm{\TT_{\off}}_1 + \norm{R_{\Phi}}_1 \le \frac{8\alpha^2}{\delta_{\min}^2} + 16 \sqrt{\frac{\pi}{2}} \dim_S (\alpha t)^3 = \frac{\alpha^2}{\delta_{\min}^2} \left( 8+ 16 \sqrt{\frac{\pi}{2}}  \alpha \dim_S \delta_{\min}^2 t^3\right).
\end{align}
Setting $\alpha = \frac{1}{\dim_S \delta_{\min}^2 t^3}$ is sufficient to make the parenthesis a constant. Lastly to get the total error in $\bigotilde{\epsilon}$ we multiply the above by the $L$ chosen before
\begin{align}
    L (\norm{\TT_{\off}}_1 + \norm{R_{\Phi}}_1) \in \bigotilde{\frac{\dim_S^4}{\alpha^2 t^2 \widetilde{\lambda}_\star(\beta)} \frac{\alpha^2}{\delta_{\min}^2}} = \bigotilde{\frac{\dim_S^4}{ t^2 \delta_{\min}^2 \widetilde{\lambda}_\star(\beta)} }.
\end{align}
Choosing
\begin{equation}
    t = \frac{\dim_S^2}{\delta_{\min} \sqrt{\epsilon \widetilde{\lambda}_\star(\beta)}} 
\end{equation}
is sufficient to guarantee $L (\norm{\TT_{\off}}_1 + \norm{R_{\Phi}}_1) \in \bigotilde{\epsilon}$ and that the total error $\norm{\rho_S(\beta) - \left( \EE_\gamma \Phi_\gamma \right)^{\circ L}(\rho)}_1 \in \bigotilde{\epsilon}$. Combining the above results for $\alpha, L$ and $t$ yields the theorem statement for finite $\beta$. 

We now show how to calculate $\widetilde{\lambda}_\star(\beta)$ in the $\beta \to \infty$ limit. From Lemma \ref{lem:fixed_point} we know that $\EE_\gamma T_\gamma$ will be upper triangular, implying again that we can compute the spectrum if we can compute the diagonal elements of the matrix. Using our computation of the off-diagonal elements from the proof of Theorem~\ref{thm:zero_knowledge} we have for $i > 1$
\begin{align}
    \lim_{\beta \to \infty} \EE_\gamma \bra{i} \TT_{\on}^{(\gamma)}(\ketbra{i}{i}) \ket{i} &= - \lim_{\beta \to \infty} \sum_{j \neq i} \bra{j} \TT_{\on}^{(\gamma)}(\ketbra{i}{i}) \ket{j} \\
    &= - \lim_{\beta \to \infty} \sum_{j < i} \widetilde{\alpha}^2 \frac{\eta_\Delta(i,j)}{\binom{\dim_S}{2}} \frac{1}{1 + e^{-\beta \Delta_S(i,j)}} - \lim_{\beta \to \infty} \sum_{j > i} \widetilde{\alpha}^2 \frac{\eta_\Delta(i,j)}{\binom{\dim_S}{2}} \frac{e^{-\beta \Delta_S(j, i)}}{1 + e^{-\beta \Delta_S(j, i)}} \\
    &= - \frac{\widetilde{\alpha}^2}{\binom{\dim_S}{2}} \sum_{j < i} \eta_\Delta(i,j).
\end{align}
For $i = 1$ as we know the ground state is fixed we have $\lim_{\beta \to \infty} \EE_\gamma \bra{1} \TT_{\on}^{(\gamma)}(\ketbra{1}{1}) \ket{1} = 0$. This gives the spectrum of $\EE_\gamma T_\gamma$ as 0 and $- \frac{\widetilde{\alpha}^2}{\binom{\dim_S}{2}} \sum_{j < i} \eta_\Delta(i,j)$ for all $i > 1$. From this spectrum we can conclude that the ground state is the \emph{unique} fixed point as 0 has multiplicity 1 in the spectrum and further that the spectral gap can be bounded from below as
\begin{align}
    \lim_{\beta \to \infty} \widetilde{\lambda}_\star(\beta) = \min_{i > 1} \sum_{j < i} \eta_\Delta(i,j) \ge 1.
\end{align}
\end{proof}

The above theorem shows that if we sample our transitions strategically rather than randomly then we can achieve much faster convergence to the groundstates in our upper bounds.  Importantly, the scaling of the total simulation time is also independent of the norm of $H_S$ in this case, whereas the time required by the zero knowledge case does.  Unfortunately, the dimensional scaling of ${\rm dim}_S^16$ is prohibitive for all but the smallest dimensional systems.  This scaling is again likely loose because of a number of assumptions that we make above and also a result of our insistence that the channel always operate inside the regime of weak coupling.  In contrast, we will see below that equilibration can be much faster if strong coupling is assumed.  Finally, it is worth noting that although perfect knowledge is assumed, a cooling schedule is not used.  By changing the distribution depending on the temperature of the Gibbs state it is possible that even better scaling may be achievable.

\subsection{Hydrogen Chain Numerics} \label{sec:general_numerics}

The analytic results developed in the previous two sections provide strong guarantees on the correctness of our routine for most quantum systems, however the bounds on the total simulation are fairly high degree polynomials in the parameters of interest. One crucial interpretation of the two different results is that knowledge of the eigenvalue differences of $H_S$ can lead to significantly better simulation time bounds, but this knowledge is not \emph{crucial} for thermalization. Another important takeaway is that we cannot bound the simulation time or number of interactions required for finite $\beta$ as we cannot bound the spectral gap of the expected transition matrix $\EE_\gamma T_\gamma$. The purpose of this section is to investigate these two theoretic takeaways numerically with small Hydrogen chain systems. These systems are some of the smallest chemical systems that still display some real-world chemical behavior, and as a result are typically used in many numeric benchmarks for quantum routines. 

Our first experiment conducted is to study the effects of changing $\alpha$ and $t$ on the trace distance error as a function of $L$. The theory developed in prior sections is very prescriptive; to reach a specific trace distance of $\epsilon$ all of our theorems give a value of $\alpha$, $L$, and $t$ that guarantee a distance of at most $\bigotilde{\epsilon}$ but say nothing about what this convergence looks like. In Figure \ref{fig:h_chain_error} we study the effects of different choices of $\alpha$ and $t$ on this convergence rate. To generate the Hamiltonians used in these experiments we created a small chain of equally spaced hydrogen nuclei with an STO-3G active space for the electrons. Hamiltonian creation was done with OpenFermion \cite{mcclean2020openfermion} and PySCF \cite{pyscf}. Once the Hamiltonians were generated, the distance to the thermal state $\rho_S(\beta)$ for each was tracked over $L = 5000$ interactions. For both Hydrogen 2 and Hydrogen 3 we chose $\beta = 4$ for consistency, this gave a ground state overlap of around 0.56 for Hydrogen 2 and 0.26 for Hydrogen 3. 

\begin{figure}
\centering
    \centering
    \begin{subfigure}{0.49\textwidth}
        \includegraphics[width = \textwidth]{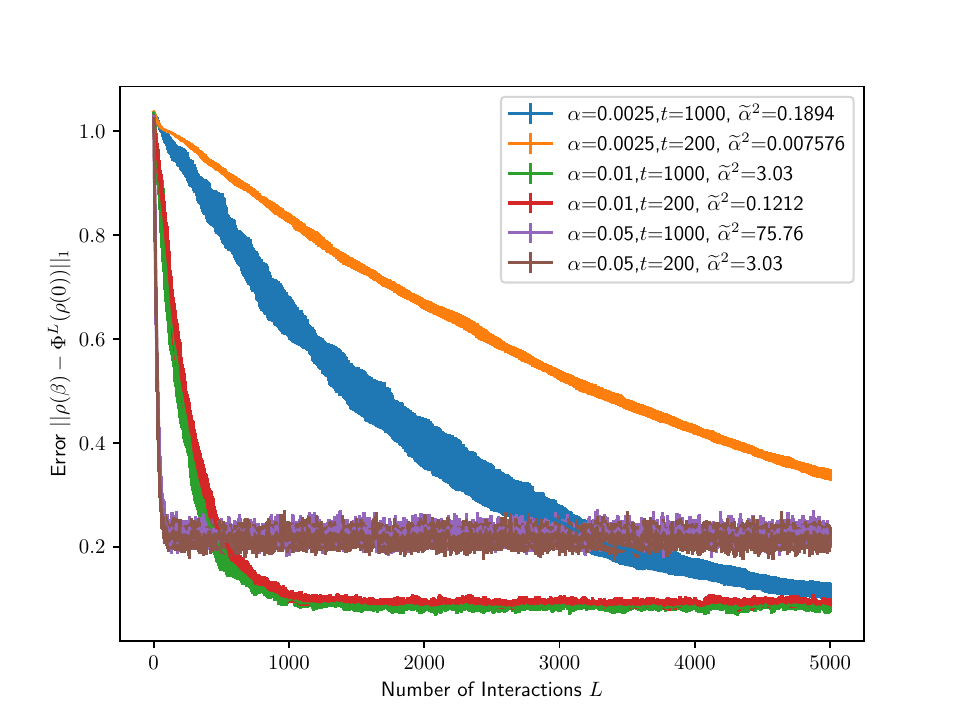}    
        \caption{Hydrogen 2 }\label{fig:h2_error}
    \end{subfigure}    
    \begin{subfigure}{0.49\textwidth}
        \includegraphics[width=\textwidth]{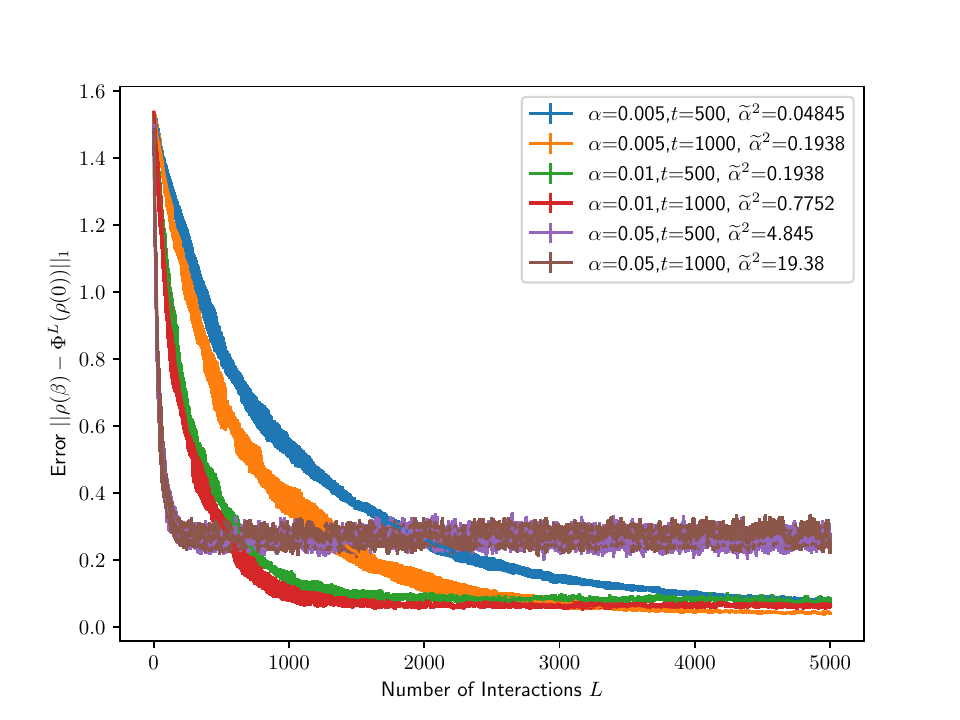}
        \caption{Hydrogen 3 }\label{fig:h3_error}
    \end{subfigure} 
    \caption{These plots show the distance to the target thermal state for Hydrogen 2 and Hydrogen 3 chains as the number of interactions $L$ increases. For both Hydrogen 2 and 3 we set $\beta = 4.0$, which gives a ground state overlap of greater than 0.5 for Hydrogen 2 and 0.25 for Hydrogen 3. $\gamma$ for both \ref{fig:h2_error} and \ref{fig:h3_error} was generated by placing a Gaussian at the average energy $\trace{H_S} / \dim_S$ with a width of $\norm{H_S} / 2$. We note that a variety of $\widetilde{\alpha}^2$ values were chosen to demonstrate the faster convergence, but higher error, of strong coupling.
    }
    \label{fig:h_chain_error}
\end{figure}

There are a few key takeaways from Figure \ref{fig:h_chain_error}. The first is that we observe increasing $\alpha$ and $t$ tend to increase the convergence rate, with $\alpha$ visuallly appearing more important. At higher values of $\alpha$ changes in $t$ appear to make less of an impact on the error. We also observe that our channel is seemingly robust beyond our weak-interaction analysis. For values of $\widetilde{\alpha}^2$, the weak-interaction expansion parameter, we observe that values as high as $\widetilde{\alpha}^2 \approx 3$ can have rapid convergence to fairly low error floors. We note that these coupling values that go beyond weak-interaction seem to lead to faster convergence of the dynamics at the cost of a larger error floor. It remains an open question if dynamic choices of $\alpha$ and $t$ could lead to better performance of the overall routine, one could use very large $\widetilde{\alpha}$ initially to quickly thermalize with large error and then decrease $\widetilde{\alpha}$ to fine-tune the final state.

The second observation we make is on the choice of the environment gaps $\gamma$. For both Hydrogen 2 and 3 we selected $\gamma $ randomly from a Gaussian with mean $\trace{H_S} / \dim_S$ and standard deviation of $\norm{H_S} / 2$. This choice of $\gamma$ is completely heuristic and was intended to have a large overlap with what the typical eigenvalue differences may look like with a large enough deviation to pick up potentially large differences. Although this heuristic works well enough to show convergence, it leads us to question if the error convergence or floors can be improved with better choices of $\gamma$. 

In Figure \ref{fig:h_chain_noise} we demonstrate that better choices of $\gamma$ do in fact reduce the total simulation time needed for thermalization. In Figures \ref{fig:h2_chain_with_noise} and \ref{fig:h3_chain_with_noise} we compute the number of interactions needed at a fixed coupling constant $\alpha$ and a fixed time $t$ as a function of the noise added to our samples for $\gamma$. We generate one sample of $\gamma$ by first computing the eigenvalue spectrum of H2 or H3 exactly, then by choosing two non-equal eigenvalues, and finally sampling a Gaussian centered at the absolute value of the difference. The width of this Gaussian then serves as a proxy for the amount of knowledge one may have about the system's eigenvalues. We plot the total simulation time with respect to this width as it varies from 0 to the spectral width $\max_i \lambda_S(i) - \min_j \lambda_S(j)$. The results align well with our theoretic analysis: having knowledge of the eigenvalues of the system can be used to speed up the thermalization routine but if one does not have any knowledge at all the thermal state can still be prepared. It is an open question if the dependence of the total simulation time $L \cdot t$ on the noise level $\sigma$ can be determined analytically.
 \begin{figure}
     \centering
        \begin{subfigure}{0.49\textwidth}
        \includegraphics[width = \textwidth]{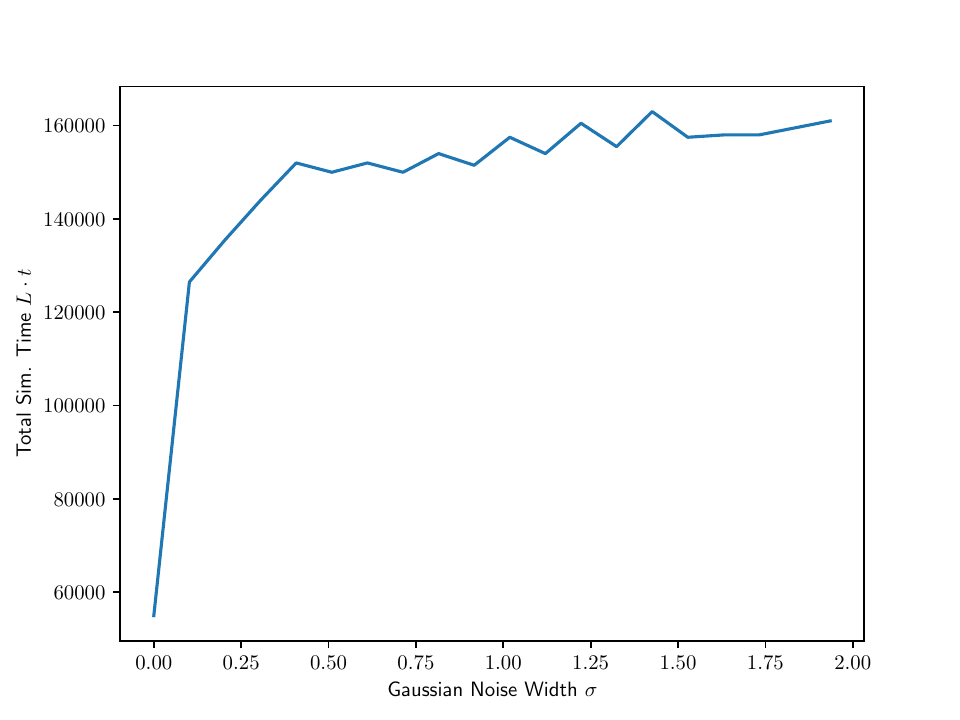}    
        \caption{Hydrogen 2 } \label{fig:h2_chain_with_noise}
    \end{subfigure}    
    \begin{subfigure}{0.49\textwidth}
        \includegraphics[width=\textwidth]{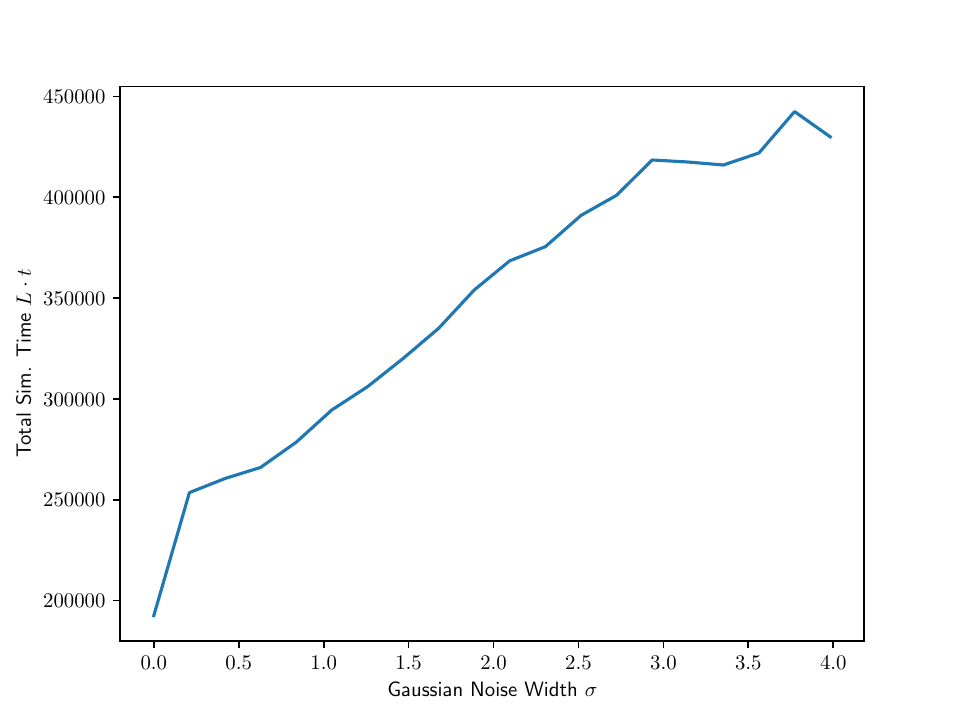}
        \caption{Hydrogen 3 }\label{fig:h3_chain_with_noise}
    \end{subfigure} 
     \caption{ In these plots the amount of total simulation time needed to prepare a $\beta = 2.0$ thermal state with $\alpha = 0.01$ and $t = 500$ is tracked as a function of the noise added to samples of $\gamma$. A sample for $\gamma$ is generated by choosing two non-equal eigenvalues from the system spectrum and adding a Gaussian random variable with standard deviation $\sigma$. For each value of $\sigma$ the resulting state needs to have an average trace distance of less than $0.05$ for 100 samples.}
     \label{fig:h_chain_noise}
 \end{figure}

\section{Conclusion} \label{sec:conclusion}

Thermal state preparation is likely to be a crucial preparation step for the simulation of quantum systems on digital quantum computers. We have presented a thermalization routine for this task that has an optimally minimal number of overhead ancilla qubits and compiles to remarkably simple circuits of time independent Hamiltonian evolution of the unprocessed system Hamiltonian, with no filtering or rejection steps and no Fourier weighted jump operators of Linbladians. Our routine is based on relatively recent classical Monte Carlo techniques, specifically Hamiltonian Monte Carlo \cite{hoffman2011nouturnsampleradaptivelysetting} and the end result bears striking resemblance to the Repeated Interactions framework in open quantum systems \cite{prositto2025equilibrium}. In the Hamiltonian Monte Carlo algorithm thermal states over a position coordinate $q$ is prepared by sampling momentum $p$ from the Boltzmann distribution for Gaussians $e^{-\beta p^2/2m}$ followed by time evolution. Classical Hamiltonian dynamics is enough to couple the position and momentum, leading to the Boltzmann distribution over $q$ with enough time and samples. In the repeated interactions framework a quantum system interacts with many small environments, typically a single photon, that is repeated until the system thermalizes. 

Our work extends these procedures to quantum algorithms. For Hamiltonian Monte Carlo, instead of adding in momentum variables we add in a single ancilla qubit to serve as our extra state space. We do not have the luxury of classical Hamiltonian dynamics that couples these two spaces or registers, so we add in a randomized interaction term to the Hamiltonian. After simulating the time dynamics of this system-ancilla pair and repeating multiple times we are able to thermalize the system to the same $\beta$. On the other hand, the Repeated Interactions framework typically is concerned with thermodynamic limits, such as infinite time or interactions, and specific system-interactions pairs. As our procedure is intended to be used as a subroutine for quantum computers our techniques work for arbitrary, non-degenerate, Hamiltonians and purposefully use randomized interactions as opposed to a fixed interaction model.

One benefit of our thermalization procedure is that it can be compiled all the way down to the circuit level with minimal overhead in complexity. The elements of the channel that need to be compiled are: the ancilla qubit state preparation, the initial state preparation for the system, the time evolution of $H + \alpha G$, and the partial trace. The ancilla state preparation can be done starting from the ground state with a Pauli-$X$ rotation. The initial system state can be any state that commutes with $H_S$, the two leading choices are the maximally mixed state or a Haar random pure state. The maximally mixed state can be prepared using just CNOT gates at a cost of higher qubit count and a Haar random pure state can be prepared with no additional qubit overhead by applying a deep enough random circuit \cite{choi2023preparing}. 

The time evolution of $H + \alpha G$ can be broken down into simulation of $H_S$ and then $\alpha G$ by one application of Trotter. The simulation of $H_S$ can be implemented in two different ways depending on the access model for the system Hamiltonian $H_S$. If $H_S$ is provided as a block-encoding then there exist optimal simulation techniques that add only a single extra ancilla qubit \cite{low2019hamiltonian}. If $H_S$ is provided as a sum of $k$-local Pauli strings or as a sparse matrix then product formula techniques can be used \cite{childs2021theory} at zero extra overhead in ancilla qubits. To simulate the time evolution of $\alpha G$ we can use the following breakdown $e^{i \alpha G t} = e^{i \alpha U_{\haar} D U_{\haar}^\dagger t} = U_{\haar} e^{i \alpha  D  t}U_{\haar}^\dagger$. This unitary can be implemented with a 2-design to approximate the $U_{\haar}$, this is due to the fact that we only expand the channel to second order in $\alpha$. Random Clifford circuits \cite{webb2015clifford} are sufficient for this purpose. To simulate $e^{i \alpha D t}$ random $Z$ rotations and controlled-$Z$ rotations should be sufficient, as we only ever rely on the eigenvalues being pairwise independent in our analysis. In the worst case the number of random gates in the $Z$ basis that would need to be applied would scale with the dimension of the system, adding an overall factor of $\dim_S$ to the preparation. In the product formula case we further remark that $H + \alpha G$ could be simulated in total with a composite technique of using Trotter for $H$ and randomized compilation for $\alpha G$ \cite{hagan2023composite}. 

In classical Hamiltonian Monte Carlo it is well known that sharp gradients in the Hamiltonian require longer simulation time and more samples to converge. Our quantum routine has a much more subtle dependence on the structure of the Hamiltonian. As our single ancilla qubit only has one energy difference $\gamma$, we have to tune this energy difference to allow for energy to be siphoned off from the system into the ancilla. This would present a conundrum, as knowing spectral gaps is as difficult or harder than preparing ground states of arbitrary quantum Hamiltonians, but we are able to prove that our routine is robust to complete ignorance of these differences. We show that this ignorance comes at an asymptotic cost in the amount of resources needed to prepare the thermal state. We numerically verify that knowledge of the eigenvalue differences can be used to speed up the total simulation time, as demonstrated in Figure \ref{fig:h_chain_error}. We posit that this behavior serves as a crucial entry point for heuristics about Hamiltonian spectra into thermal state preparation algorithms. No prior thermal state preparation routines have had such an explicit demonstration of the utility of such knowledge. It was our hope to analytically quantify the speedups gained as a function of the relative entropy between a heuristic guess for the eigenvalue differences and the true spectra, but our numeric evidence will have to suffice until future work can clarify this dependence.

We would like to make a few remarks on potential improvements for the analysis of this channel. As we have demonstrated numerically, our guaranteed analytic values of $\alpha$ and $t$ that lead to thermalization are drastically overestimated. We conjecture that this is due to our truncation of the weak-coupling expansion and in Figure \ref{fig:epsilon_scaling} we demonstrate that taking $\alpha \propto 1/t$ and $t \propto 1/ \sqrt{\epsilon}$ drastically outperforms our analytically derived bounds of $\alpha \propto 1/t^3$, by almost 4 orders of magnitude at $\epsilon \approx 0.005$. It is an open question of how to analyze this channel in the strong-coupling regime, and our numeric results suggest that such an analysis may indicate better performance of our protocol than a weak-coupling expansion can show. It is also an open question of whether dynamically chosen values of $\alpha$ and $t$, such as having strong coupling and low time at the beginning and gradually decreasing $\alpha$ and increasing $t$, can outperform static $\alpha$ and $t$. We also suspect that the Markov relaxation theorem we used greatly overestimates the number of interactions needed. It remains to be seen if better Markov theory is needed or if the convergence time could be characterized based on the overlap of the initial state with the thermal state, which is a property that a few ground state preparation algorithms demonstrate. Another potential avenue for improving the analysis of this channel is whether different randomized interactions or even eigenvector heuristics can be beneficial. For example, in the harmonic oscillator if one has knowledge of the creation and annihilation operators $a^\dagger$ and $a$, could one simply use the interaction $a^\dagger \otimes (X + i Y) + a \otimes (X - i Y)$ instead of involving a randomized $G$ that relies on a Haar average? The last potential improvement is to extend our spectral gap computations using perturbation theory. We are only able to compute the spectral gap $\widetilde{\lambda}_\star(\beta)$ in the limit of $\beta \to \infty$, but it should be possible to compute a perturbation on the order of $1/\beta$. This would give the simulation time needed to prepare low-temperature thermal states as opposed to zero-temperature states.

Lastly, we would like to speculate on possible applications of this routine to other quantum information processing tasks. The first question that arises is if these techniques could be used in the training of quantum Boltzmann machines, which are essentially thermal states. It is an open question if our thermalizing techniques could be used to either train models or to generate output samples from an already trained model. Through the process of demonstrating that this channel prepares the system in the thermal state we have calculated the output of our channel for both the system and the environment registers, and for much larger environments than single qubits. We can turn this protocol on it's head and ask how much information about the system are these ancilla qubits carrying away with them? Preliminary explorations suggest that given knowledge about eigenvalue gaps one can use transition statistics in the ancilla qubits to infer what the inverse temperature $\beta$ is of the system, assuming the system is in a thermal state. Could this thermalizing channel instead be used to develop a Bayesian model to update beliefs about Hamiltonian spectra and system temperatures? This would represent an interaction agnostic model for performing quantum thermometry or spectroscopy, which to the best of our knowledge has not been developed yet. 

\begin{acknowledgments}
We thank Matthew Pocrnic, Alessandro Prositto, and Dvira Segal for useful discussions about repeated interaction models and Ahmet Burak Çatlı and Sophia Simon for helpful discussions on quantum thermodynamics. We thank the University of Toronto Computer Science Department for compute resources. This material is primarily based upon work supported by the U.S. Department of Energy, Office of Science, National Quantum Information Science Research Centers, Co-design Center for Quantum Advantage (C2QA) under contract number DE- SC0012704 (PNNL FWP 76274).  NW also acknowledges support from Google Inc. and Boehringer Ingelheim Inc.
\end{acknowledgments}
\bibliographystyle{unsrt}
\bibliography{bib}

\begin{thebibliography}{10}

\bibitem{aspuru2005simulated}
Al{\'a}n Aspuru-Guzik, Anthony~D Dutoi, Peter~J Love, and Martin Head-Gordon.
\newblock Simulated quantum computation of molecular energies.
\newblock {\em Science}, 309(5741):1704--1707, 2005.

\bibitem{reiher2017elucidating}
Markus Reiher, Nathan Wiebe, Krysta~M Svore, Dave Wecker, and Matthias Troyer.
\newblock Elucidating reaction mechanisms on quantum computers.
\newblock {\em Proceedings of the national academy of sciences},
  114(29):7555--7560, 2017.

\bibitem{tensorHypercontraction}
Joonho Lee, Dominic~W. Berry, Craig Gidney, William~J. Huggins, Jarrod~R.
  McClean, Nathan Wiebe, and Ryan Babbush.
\newblock Even more efficient quantum computations of chemistry through tensor
  hypercontraction.
\newblock {\em PRX Quantum}, 2:030305, Jul 2021.

\bibitem{brandao2019sdp}
Fernando G. S.~L. Brandão, Amir Kalev, Tongyang Li, Cedric Yen-Yu Lin,
  Krysta~M. Svore, and Xiaodi Wu.
\newblock Quantum sdp solvers: Large speed-ups, optimality, and applications to
  quantum learning, 2019.

\bibitem{anshu_sample-efficient_2021}
Anurag Anshu, Srinivasan Arunachalam, Tomotaka Kuwahara, and Mehdi
  Soleimanifar.
\newblock Sample-efficient learning of quantum many-body systems.
\newblock {\em Nature Physics}, 17(8):931--935, August 2021.
\newblock arXiv:2004.07266 [cond-mat, physics:quant-ph].

\bibitem{kempe2005complexitylocalhamiltonianproblem}
Julia Kempe, Alexei Kitaev, and Oded Regev.
\newblock The complexity of the local hamiltonian problem, 2005.

\bibitem{metropolis1953equation}
Nicholas Metropolis, Arianna~W Rosenbluth, Marshall~N Rosenbluth, Augusta~H
  Teller, and Edward Teller.
\newblock Equation of state calculations by fast computing machines.
\newblock {\em The journal of chemical physics}, 21(6):1087--1092, 1953.

\bibitem{roth1996hardness}
Dan Roth.
\newblock On the hardness of approximate reasoning.
\newblock {\em Artificial Intelligence}, 82(1-2):273--302, 1996.

\bibitem{signProblemOG}
E.~Y. {Loh}, Jr., J.~E. {Gubernatis}, R.~T. {Scalettar}, S.~R. {White}, D.~J.
  {Scalapino}, and R.~L. {Sugar}.
\newblock {Sign problem in the numerical simulation of many-electron systems}.
\newblock {\em prb}, 41(13):9301--9307, May 1990.

\bibitem{troyer2005sign}
Matthias Troyer and Uwe-Jens Wiese.
\newblock Computational complexity and fundamental limitations to fermionic
  quantum monte carlo simulations.
\newblock {\em Physical Review Letters}, 94(17), May 2005.

\bibitem{beskos2010optimaltuninghybridmontecarlo}
Alexandros Beskos, Natesh~S. Pillai, Gareth~O. Roberts, Jesus~M. Sanz-Serna,
  and Andrew~M. Stuart.
\newblock Optimal tuning of the hybrid monte-carlo algorithm, 2010.

\bibitem{betancourt2018conceptualintroductionhamiltonianmonte}
Michael Betancourt.
\newblock A conceptual introduction to hamiltonian monte carlo, 2018.

\bibitem{temme2011}
K.~Temme, T.~J. Osborne, K.~G. Vollbrecht, D.~Poulin, and F.~Verstraete.
\newblock Quantum metropolis sampling.
\newblock {\em Nature}, 471(7336):87–90, March 2011.

\bibitem{marriott2005quantum}
Chris Marriott and John Watrous.
\newblock Quantum arthur--merlin games.
\newblock {\em computational complexity}, 14(2):122--152, 2005.

\bibitem{chen2023quantumthermalstatepreparation}
Chi-Fang Chen, Michael~J. Kastoryano, Fernando G. S.~L. Brandão, and András
  Gilyén.
\newblock Quantum thermal state preparation, 2023.

\bibitem{gilyen2024quantumgeneralizationsglaubermetropolis}
András Gilyén, Chi-Fang Chen, Joao~F. Doriguello, and Michael~J. Kastoryano.
\newblock Quantum generalizations of glauber and metropolis dynamics, 2024.

\bibitem{motlagh2024ground}
Danial Motlagh, Modjtaba~Shokrian Zini, Juan~Miguel Arrazola, and Nathan Wiebe.
\newblock Ground state preparation via dynamical cooling.
\newblock {\em arXiv preprint arXiv:2404.05810}, 2024.

\bibitem{motta2019}
Mario Motta, Chong Sun, Adrian T.~K. Tan, Matthew~J. O’Rourke, Erika Ye,
  Austin~J. Minnich, Fernando G. S.~L. Brandão, and Garnet Kin-Lic Chan.
\newblock Determining eigenstates and thermal states on a quantum computer
  using quantum imaginary time evolution.
\newblock {\em Nature Physics}, 16(2):205–210, November 2019.

\bibitem{ding2024single}
Zhiyan Ding, Chi-Fang Chen, and Lin Lin.
\newblock Single-ancilla ground state preparation via lindbladians.
\newblock {\em Physical Review Research}, 6(3):033147, 2024.

\bibitem{davies1974markovian}
E~Brian Davies.
\newblock Markovian master equations.
\newblock {\em Communications in mathematical Physics}, 39:91--110, 1974.

\bibitem{ding2024efficientquantumgibbssamplers}
Zhiyan Ding, Bowen Li, and Lin Lin.
\newblock Efficient quantum gibbs samplers with kubo--martin--schwinger
  detailed balance condition, 2024.

\bibitem{kms2}
Ryogo {Kubo}.
\newblock {Statistical-Mechanical Theory of Irreversible Processes. I}.
\newblock {\em Journal of the Physical Society of Japan}, 12(6):570--586, June
  1957.

\bibitem{kms1}
Paul~C. {Martin} and Julian {Schwinger}.
\newblock {Theory of Many-Particle Systems. I}.
\newblock {\em Physical Review}, 115(6):1342--1373, September 1959.

\bibitem{dalzell2023quantumalgorithmssurveyapplications}
Alexander~M. Dalzell, Sam McArdle, Mario Berta, Przemyslaw Bienias, Chi-Fang
  Chen, András Gilyén, Connor~T. Hann, Michael~J. Kastoryano, Emil~T.
  Khabiboulline, Aleksander Kubica, Grant Salton, Samson Wang, and Fernando G.
  S.~L. Brandão.
\newblock Quantum algorithms: A survey of applications and end-to-end
  complexities, 2023.

\bibitem{rouze2024efficientthermalizationuniversalquantum}
Cambyse Rouzé, Daniel~Stilck França, and Álvaro M.~Alhambra.
\newblock Efficient thermalization and universal quantum computing with quantum
  gibbs samplers, 2024.

\bibitem{shtanko2023preparingthermalstatesnoiseless}
Oles Shtanko and Ramis Movassagh.
\newblock Preparing thermal states on noiseless and noisy programmable quantum
  processors, 2023.

\bibitem{neal1993probabilistic}
Radford~M Neal.
\newblock Probabilistic inference using markov chain monte carlo methods.
\newblock 1993.

\bibitem{hoffman2011nouturnsampleradaptivelysetting}
Matthew~D. Hoffman and Andrew Gelman.
\newblock The no-u-turn sampler: Adaptively setting path lengths in hamiltonian
  monte carlo, 2011.

\bibitem{ciccarello2022repeated}
Francesco Ciccarello, Salvatore Lorenzo, Vittorio Giovannetti, and G~Massimo
  Palma.
\newblock Quantum collision models: Open system dynamics from repeated
  interactions.
\newblock {\em Physics Reports}, 954:1--70, 2022.

\bibitem{polla2021quantum}
Stefano Polla, Yaroslav Herasymenko, and Thomas~E O'Brien.
\newblock Quantum digital cooling.
\newblock {\em Physical Review A}, 104(1):012414, 2021.

\bibitem{prositto2025equilibrium}
Alessandro Prositto, Madeline Forbes, and Dvira Segal.
\newblock Equilibrium and nonequilibrium steady states with the repeated
  interaction protocol: Relaxation dynamics and energetic cost.
\newblock {\em arXiv preprint arXiv:2501.05392}, 2025.

\bibitem{pocrnic2024quantumsimulationlindbladiandynamics}
Matthew Pocrnic, Dvira Segal, and Nathan Wiebe.
\newblock Quantum simulation of lindbladian dynamics via repeated interactions,
  2024.

\bibitem{levin2017markov}
David~A Levin and Yuval Peres.
\newblock {\em Markov chains and mixing times}, volume 107.
\newblock American Mathematical Soc., 2017.

\bibitem{jerison2013general}
Daniel Jerison.
\newblock General mixing time bounds for finite markov chains via the absolute
  spectral gap.
\newblock {\em arXiv preprint arXiv:1310.8021}, 2013.

\bibitem{auerbach1995supercooling}
David Auerbach.
\newblock Supercooling and the mpemba effect: When hot water freezes quicker
  than cold.
\newblock {\em American Journal of Physics}, 63(10):882--885, 1995.

\bibitem{mpemba}
E~B Mpemba and D~G Osborne.
\newblock Cool?
\newblock {\em Physics Education}, 4(3):172, May 1969.

\bibitem{nickMpemba}
Felix Ivander, Nicholas Anto-Sztrikacs, and Dvira Segal.
\newblock Hyperacceleration of quantum thermalization dynamics by bypassing
  long-lived coherences: An analytical treatment.
\newblock {\em Phys. Rev. E}, 108:014130, Jul 2023.

\bibitem{mpembaExplanation}
Zhiyue Lu and Oren Raz.
\newblock Nonequilibrium thermodynamics of the markovian mpemba effect and its
  inverse.
\newblock {\em Proceedings of the National Academy of Sciences},
  114(20):5083--5088, 2017.

\bibitem{zhang2025mpembaObservation}
Jie Zhang, Gang Xia, Chun-Wang Wu, Ting Chen, Qian Zhang, Yi~Xie, Wen-Bo Su,
  Wei Wu, Cheng-Wei Qiu, Ping-Xing Chen, et~al.
\newblock Observation of quantum strong mpemba effect.
\newblock {\em Nature Communications}, 16(1):301, 2025.

\bibitem{mcclean2020openfermion}
Jarrod~R McClean, Nicholas~C Rubin, Kevin~J Sung, Ian~D Kivlichan, Xavier
  Bonet-Monroig, Yudong Cao, Chengyu Dai, E~Schuyler Fried, Craig Gidney,
  Brendan Gimby, et~al.
\newblock Openfermion: the electronic structure package for quantum computers.
\newblock {\em Quantum Science and Technology}, 5(3):034014, 2020.

\bibitem{pyscf}
Qiming Sun, Timothy~C. Berkelbach, Nick~S. Blunt, George~H. Booth, Sheng Guo,
  Zhendong Li, Junzi Liu, James~D. McClain, Elvira~R. Sayfutyarova, Sandeep
  Sharma, Sebastian Wouters, and Garnet Kin-Lic Chan.
\newblock Pyscf: the python-based simulations of chemistry framework.
\newblock {\em WIREs Computational Molecular Science}, 8(1):e1340, 2018.

\bibitem{choi2023preparing}
Joonhee Choi, Adam~L Shaw, Ivaylo~S Madjarov, Xin Xie, Ran Finkelstein, Jacob~P
  Covey, Jordan~S Cotler, Daniel~K Mark, Hsin-Yuan Huang, Anant Kale, et~al.
\newblock Preparing random states and benchmarking with many-body quantum
  chaos.
\newblock {\em Nature}, 613(7944):468--473, 2023.

\bibitem{low2019hamiltonian}
Guang~Hao Low and Isaac~L Chuang.
\newblock Hamiltonian simulation by qubitization.
\newblock {\em Quantum}, 3:163, 2019.

\bibitem{childs2021theory}
Andrew~M Childs, Yuan Su, Minh~C Tran, Nathan Wiebe, and Shuchen Zhu.
\newblock Theory of trotter error with commutator scaling.
\newblock {\em Physical Review X}, 11(1):011020, 2021.

\bibitem{webb2015clifford}
Zak Webb.
\newblock The clifford group forms a unitary 3-design.
\newblock {\em arXiv preprint arXiv:1510.02769}, 2015.

\bibitem{hagan2023composite}
Matthew Hagan and Nathan Wiebe.
\newblock Composite quantum simulations.
\newblock {\em Quantum}, 7:1181, 2023.

\bibitem{brandao2021complexity}
Fernando~G.S.L. Brand\~ao, Wissam Chemissany, Nicholas Hunter-Jones, Richard
  Kueng, and John Preskill.
\newblock Models of quantum complexity growth.
\newblock {\em PRX Quantum}, 2:030316, Jul 2021.

\bibitem{rastegin2012relations}
Alexey~E Rastegin.
\newblock Relations for certain symmetric norms and anti-norms before and after
  partial trace.
\newblock {\em Journal of Statistical Physics}, 148:1040--1053, 2012.

\end{thebibliography}

\appendix

\section{Technical Proofs} \label{sec:appendix}
\subsection{Sinc Bounds} \label{sec:appendix_sinc}

\begin{lemma}[Sinc Function Bounds] \label{lem:sinc_poly_approx}
    For $\sinc^2\left( \frac{x t}{2} \right)$ and $\delta_{\min}$ as defined in Eq. \eqref{eq:delta_min_def}, we will make significant use of the following bounds:
    \begin{align}
        |x| \ge \delta_{\min} \implies \sinc^2 \left( \frac{x t}{2} \right) &\le \frac{4}{\delta_{\min}^2 t^2} \label{eq:sinc_upper_bound} \\
        |x| \le \frac{\sqrt{2}}{t} \implies \sinc^2\left(\frac{x t}{2} \right) &\ge 1 - \frac{|x|^2 t^2}{2}. \label{eq:sinc_lower_bound}
\end{align}

\end{lemma}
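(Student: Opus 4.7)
The plan is to treat the two bounds independently, as they need quite different facts about $\sinc$. Throughout I would set $y \coloneqq xt/2$, so that the claimed bounds become $\sinc^2(y)\le 4/(\delta_{\min}^2 t^2)$ when $|y|\ge \delta_{\min} t/2$, and $\sinc^2(y)\ge 1-2y^2$ when $|y|\le 1/\sqrt{2}$.

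For the upper bound in Eq.~\eqref{eq:sinc_upper_bound}, I would use only the trivial inequality $|\sin y|\le 1$. Writing $\sinc^2(y)=\sin^2(y)/y^2$, this immediately gives $\sinc^2(y)\le 1/y^2$, and substituting $y=xt/2$ produces $\sinc^2(xt/2)\le 4/(x^2 t^2)$. The hypothesis $|x|\ge\delta_{\min}$ then supplies $x^2\ge\delta_{\min}^2$, which finishes the claim. This step is essentially one line and requires no further analysis.

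For the lower bound in Eq.~\eqref{eq:sinc_lower_bound}, the natural route is a Taylor-with-remainder argument for $\sin$. Specifically, I would invoke the standard inequality $\sin y \ge y - y^3/6$ for $y\ge 0$ (obtained by noting $f(y)=\sin y - y + y^3/6$ satisfies $f(0)=0$ and $f'(y)=\cos y - 1 + y^2/2 \ge 0$). By oddness this extends to $|\sin y|\ge |y| - |y|^3/6$. The key check is that the right-hand side is nonnegative on the interval of interest: the hypothesis $|x|\le \sqrt{2}/t$ gives $|y|\le 1/\sqrt{2}$, and in particular $|y|^2/6 \le 1$, so $|y|-|y|^3/6\ge 0$. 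I can therefore safely square both sides and divide by $y^2$ to obtain
\begin{equation}
\sinc^2(y) \;\ge\; \left(1-\tfrac{y^2}{6}\right)^{\!2} \;=\; 1 - \tfrac{y^2}{3} + \tfrac{y^4}{36} \;\ge\; 1 - \tfrac{y^2}{3}.
\end{equation}
Finally, the elementary inequality $1-y^2/3 \ge 1-2y^2$, which holds trivially since $5y^2/3\ge 0$, yields $\sinc^2(y)\ge 1-2y^2$. Re-substituting $y=xt/2$ recovers the stated bound $\sinc^2(xt/2)\ge 1-|x|^2t^2/2$.

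There is no real obstacle in either half; the only subtlety worth flagging is the domain-of-validity bookkeeping for the squaring step, which is exactly why the hypothesis $|x|\le\sqrt{2}/t$ appears. The bound $1-2y^2$ in Eq.~\eqref{eq:sinc_lower_bound} is notably looser than the $1-y^2/3$ that falls out of the argument, but the weaker form is what is used elsewhere in the paper (for instance when converting $\sigma t\le\sqrt{2}$ into a usable spectral-gap bound in the proof of Theorem~\ref{thm:single_qubit}), so I would state it in the form given rather than the sharper intermediate step.
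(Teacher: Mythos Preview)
Your proof is correct. The upper bound argument is identical to the paper's. For the lower bound, however, you take a genuinely different and more elementary route: you bound $\sin y$ directly via the cubic Taylor inequality $\sin y \ge y - y^3/6$, then square and divide. The paper instead works with $\sinc^2$ itself, using the integral representation $\sinc(xt/2)=\int_0^1\cos(sxt/2)\,ds$ to compute the first two derivatives of $\sinc^2$ explicitly and then applies the mean-value remainder, ultimately bounding the second derivative by $t^2$ to get $|\sinc^2(xt/2)-1|\le |x|^2t^2/2$. Your approach is shorter and, as you note, yields the sharper intermediate bound $\sinc^2(y)\ge 1-y^2/3$ before you relax it; the paper's method hits the constant $1/2$ in the $|x|^2t^2$ coefficient directly but at the cost of more machinery.
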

\begin{proof}
    The first inequality is rather trivial
    \begin{align}
        \sinc^2 \left( \frac{x t}{2} \right) &= \frac{\sin^2 (x t /2)}{(x t / 2)^2} \le \frac{4}{x^2 t^2} \le \frac{4}{\delta_{\min}^2 t^2}.
    \end{align}
    The second involves a Taylor Series for $\sinc^2$, which we compute using the expression of $\sinc$ as $\sinc(x t/ 2) = \frac{\sin xt /2}{xt/2} = \int_0^1 \cos(sxt/2) ds$.  The first two derivatives can then be computed easily
    \begin{align}
        \frac{d \sinc^2(x t /2)}{dx} &= - t \int_0^1 \sin(sx) s ds \int_0^1 \cos(sx) ds \\
        \frac{d^2 \sinc^2(x t /2)}{dx^2} &= -t^2 / 2 \int_0^1 \cos(sx)s^2 ds \int_0^1 \cos(sx) ds + t^2 / 2 \int_0^1 \sin(sx) s ~ds \int_0^1 \sin(sx) s ~ds.
    \end{align}
    We can evaluate each of these derivatives about the origin using continuity of the derivatives along with the limits $\lim_{x \to 0} \cos(sx) = 1$ and $\lim_{x \to 0} \sin(sx) = 0$. We can now compute the mean-value version Taylor series as
    \begin{align}
        \sinc^2 \left(\frac{x t}{2} \right) &= \sinc^2(0) + x \frac{d}{dx} \sinc^2 \left(\frac{x t}{2} \right) \bigg|_{x = 0} + \frac{x^2}{2!} \frac{d^2}{dx^2} \sinc^2 \left(\frac{x t}{2} \right) \bigg|_{x = x_{\star}},
    \end{align}
    where $x_{\star} \in [0,1]$. 
    Plugging in $\sinc^2(0) = 1$ and $\frac{d\sinc^2(x t /2)}{dx}\big|_{x = 0} = 0$ then yields $|\sinc^2(xt/2) - 1| = \frac{|x|^2}{2} \abs{\frac{d^2\sinc^2(x t / 2)}{dx^2}\big|_{x = x_{\star}}}$. We make use of the rather simplistic bound
    \begin{align}
        \abs{\frac{d^2\sinc^2(sxt/2)}{dx^2}\bigg|_{x = x_{\star}} } &\leq t^2 / 2 \abs{\int_0^1 \cos(sx_{\star} t/ 2) s^2 ds \int_0^1 \cos(sx_{\star} t/ 2) ds} + t^2 /2 \abs{\int_0^1 \sin(sx_{\star} t/ 2) s ds \int_0^1 \sin(sx_{\star} t/ 2) s ds} \\
        &\leq t^2 / 2 \int_0^1 \abs{\cos(sx_{\star} t/2)} s^2 ds \int_0^1 \abs{\cos(sx_{\star} t /2 )} ds + t^2 / 2 \parens{\int_0^1 \abs{\sin(sx_{\star} t /2)} |s| ds}^2 \\
        &\leq t^2 / 2 \int_0^1 s^2 ds + t^2 / 2 \parens{\int_0^1 s ds}^2 \\
        &\leq t^2.
    \end{align}
    This yields the final inequality $|\sinc^2(x t /2 ) - 1| \leq \frac{|x|^2 t^2}{2}$ which yields Eq. \eqref{eq:sinc_lower_bound}.
\end{proof}

\subsection{Haar Integral Proofs} \label{sec:haar_integral_appendix}

In this section we present the more technical work needed to state our results in Section \ref{sec:weak_coupling}. Lemmas \ref{lem:two_heisenberg_interactions} and \ref{lem:sandwiched_interaction} are used to compute the effects of the randomized interactions in a form that are usable in the main result of Lemma \ref{lem:big_one}. Lemma \ref{lem:haar_two_moment} can be derived from Appendix C in \cite{brandao2021complexity}.
\begin{restatable}{lemma}{haar_two_moment} \label{lem:haar_two_moment}
    Let $\int (\cdot) dU$ denote the average distributed according to the Haar measure over $\dim$-dimensional unitary matrices $U$. Then for $\ket{i_1},\ket{i_2},\ldots,\ket{k_2}$ drawn from an orthonormal basis
    \begin{align}
        &\int \bra{i_1} U \ket{j_1} \bra{i_2} U \ket{j_2} \bra{k_1} U^\dagger \ket{l_1} ~ \bra{k_2} U^\dagger \ket{l_2} dU \nonumber \\
        &= ~\frac{1}{\dim^2 - 1} \parens{\delta_{i_1, l_1} \delta_{j_1, k_1} \delta_{i_2, l_2} \delta_{j_2, k_2} + \delta_{i_1, l_2} \delta_{j_1, k_2} \delta_{i_2, l_1} \delta_{j_2, k_1}} \nonumber \\
        &\quad - \frac{1}{\dim(\dim^2 - 1)} \parens{\delta_{i_1, l_2} \delta_{j_1, k_1} \delta_{i_2, l_1} \delta_{j_2, k_2} + \delta_{i_1, l_1} \delta_{j_1, k_2} \delta_{i_2, l_2} \delta_{j_2, k_1}}. \label{eq:haar_two_moment_integral}
    \end{align}
\end{restatable}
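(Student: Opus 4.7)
The plan is to recognize this as a standard second-moment Weingarten integral and reduce it to computing the coefficients of two permutation operators acting on $(\mathbb{C}^{\dim})^{\otimes 2}$. First I would rewrite the integrand in the conjugate form $U_{i_1, j_1} U_{i_2, j_2} \overline{U_{l_1, k_1}} \overline{U_{l_2, k_2}}$ using $\bra{k} U^\dagger \ket{l} = \overline{U_{l,k}}$, so that the Haar integral becomes the standard $\int U^{\otimes 2} \otimes (U^*)^{\otimes 2} \, dU$ evaluated at a specified tensor index. Equivalently, by folding the two $U^*$ factors into a conjugation, the object of interest is the superoperator $X \mapsto \int U^{\otimes 2} X (U^\dagger)^{\otimes 2} \, dU$ on $\mathrm{End}((\mathbb{C}^{\dim})^{\otimes 2})$.

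Next I would invoke Schur--Weyl duality: the output of this superoperator is invariant under conjugation by $U^{\otimes 2}$, so by Schur's lemma it must lie in the commutant of $U^{\otimes 2}$, which is two-dimensional and spanned by the identity $I$ and the swap operator $F$ on $(\mathbb{C}^{\dim})^{\otimes 2}$. Hence
\begin{equation}
    \int U^{\otimes 2} X (U^\dagger)^{\otimes 2}\, dU = a_1(X) I + a_2(X) F,
\end{equation}
where $a_1, a_2$ are linear functionals of $X$. Taking traces of both sides against $I$ and $F$, and using $\Tr(I) = \dim^2$, $\Tr(F) = \dim$, and $F^2 = I$, yields the $2\times 2$ linear system
\begin{equation}
    \begin{pmatrix} \dim^2 & \dim \\ \dim & \dim^2 \end{pmatrix} \begin{pmatrix} a_1 \\ a_2 \end{pmatrix} = \begin{pmatrix} \Tr(X) \\ \Tr(FX) \end{pmatrix},
\end{equation}
whose inversion gives the two Weingarten weights $\mathrm{Wg}(e,\dim) = \frac{1}{\dim^2 - 1}$ and $\mathrm{Wg}((12), \dim) = -\frac{1}{\dim(\dim^2 - 1)}$.

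Finally, I would substitute $X = \ketbra{j_1,j_2}{k_1,k_2}$ and take the $\bra{i_1, i_2} \cdot \ket{l_1, l_2}$ matrix element. The two terms $I$ and $F$ produce, respectively, the Kronecker products $\delta_{i_1,l_1}\delta_{i_2,l_2}$ and $\delta_{i_1,l_2}\delta_{i_2,l_1}$, while the trace conditions pair $j$'s with $k$'s in either the identity or the swapped order. Combining the four sign/weight combinations reproduces the right-hand side of \eqref{eq:haar_two_moment_integral} line by line.

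The only delicate step is bookkeeping the indices: the two $U$'s contribute a permutation of the $(i,l)$ pairs while the two $U^\dagger$'s contribute an independent permutation of the $(j,k)$ pairs, and the Weingarten weight depends only on the product of these two permutations. The main obstacle is therefore purely notational---making sure the $(\sigma,\tau)$ pairing $\sigma\tau^{-1}$ produces the identity Weingarten weight $\frac{1}{\dim^2-1}$ in the two ``aligned'' cases and the cross weight $-\frac{1}{\dim(\dim^2-1)}$ in the two ``crossed'' cases. Once this matching is done correctly the stated formula follows immediately.
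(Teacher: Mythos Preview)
Your proposal is correct: the Schur--Weyl reduction to the span of $\{I,F\}$ on $(\mathbb{C}^{\dim})^{\otimes 2}$, followed by inverting the $2\times 2$ Gram matrix $\begin{pmatrix}\dim^2 & \dim\\ \dim & \dim^2\end{pmatrix}$, produces exactly the stated four Kronecker-delta terms with the right Weingarten weights. The index bookkeeping you flag is indeed the only subtle point, and your description of how the $(i,l)$ and $(j,k)$ pairings arise is accurate.

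For comparison, the paper does not supply its own proof of this lemma at all---it simply states that the formula can be derived from Appendix~C of~\cite{brandao2021complexity}. Your argument is therefore strictly more self-contained than what the paper provides, and is the standard textbook derivation of the $k=2$ Weingarten formula.
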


\begin{lemma} \label{lem:two_heisenberg_interactions}
    Let $G(t)$ denote the Heisenberg evolved random interaction $G(t) = e^{iHt} G e^{-iHt}$ for a total Hamiltonian $H$. After averaging over the interaction measure the product $G(x) G(y)$ can be computed as
    \begin{equation}
        \int G(x) G(y) dG = \frac{1}{\dim + 1} \parens{\sum_{(i,j),(k,l)} e^{i \Delta(i,j|k,l) (x-y)} \ketbra{i,j}{i,j} + \identity}.
    \end{equation}
\end{lemma}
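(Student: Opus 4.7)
The plan is to reduce the claim to an application of the fourth-moment Haar integral in Lemma \ref{lem:haar_two_moment} after separating the Gaussian eigenvalue average from the Haar eigenvector average. First I would expand each Heisenberg-evolved interaction in the eigenbasis of $H$, writing
\[
G(x) = \sum_{(a,b),(c,d)} e^{i\Delta(a,b|c,d)x}\,G_{(a,b),(c,d)} \ketbra{a,b}{c,d},
\]
with $G_{(a,b),(c,d)} \coloneqq \bra{a,b}G\ket{c,d}$. Multiplying $G(x)G(y)$ collapses one pair of indices via $\braket{c,d}{a',b'}$, leaving
\[
G(x)G(y) = \sum_{(a,b),(c,d),(c',d')} e^{i\Delta(a,b|c,d)x}e^{i\Delta(c,d|c',d')y}\,G_{(a,b),(c,d)}G_{(c,d),(c',d')}\,\ketbra{a,b}{c',d'}.
\]
All the $G$-dependence is now in the two-point product of matrix elements.

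Next I would compute $\EE_G[G_{(a,b),(c,d)}G_{(c,d),(c',d')}]$ by substituting $G = U_{\haar}DU_{\haar}^\dagger$ and using the factorization $\EE_G = \EE_D\EE_{\haar}$. Writing $G_{i,j} = \sum_k U_{i,k}D_k \overline{U_{j,k}}$ and invoking $\EE_D[D_k D_{k'}] = \delta_{k,k'}$ reduces the product to a single sum over $k$ of four Haar matrix elements of the form $\bra{a,b}U\ket{k}\bra{c,d}U\ket{k}\bra{k}U^\dagger\ket{c,d}\bra{k}U^\dagger\ket{c',d'}$. Feeding the identifications $i_1 = (a,b),\ i_2=(c,d),\ l_1=(c,d),\ l_2=(c',d'),\ j_1=j_2=k_1=k_2=k$ into Lemma \ref{lem:haar_two_moment} and then summing $k$ from $1$ to $\dim$, the four Weingarten terms simplify: two give prefactor $\dim/(\dim^2-1)$ with delta patterns $\delta_{(a,b),(c,d)}\delta_{(c,d),(c',d')}$ and $\delta_{(a,b),(c',d')}$ respectively, while the two cross terms contribute $-1/(\dim^2-1)$ with the same two delta patterns. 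Combining,
\[
\EE_G\!\big[G_{(a,b),(c,d)}G_{(c,d),(c',d')}\big] = \tfrac{1}{\dim+1}\!\left(\delta_{(a,b),(c,d)}\delta_{(c,d),(c',d')} + \delta_{(a,b),(c',d')}\right).
\]

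Finally I would substitute this back into the expansion of $G(x)G(y)$. The first delta pattern forces $(a,b)=(c,d)=(c',d')$, killing both exponential phases and producing $\tfrac{1}{\dim+1}\sum_{(a,b)}\ketbra{a,b}{a,b} = \tfrac{1}{\dim+1}\identity$. The second delta pattern forces $(a,b)=(c',d')$, which makes the combined phase reduce to $e^{i\Delta(a,b|c,d)(x-y)}$ and leaves the projector $\ketbra{a,b}{a,b}$, reproducing the sum in the lemma statement. Adding the two contributions gives the desired identity.

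The main obstacle I expect is purely bookkeeping: correctly pairing the four index labels of Lemma \ref{lem:haar_two_moment} with the bras, kets, $U$'s and $U^\dagger$'s appearing in $|U_{(c,d),k}|^2$ and the surrounding factors, so that each of the four Weingarten terms contributes to the right delta pattern after summing out $k$. No analytic difficulty arises beyond this combinatorial step, since the Gaussian average is trivial and the rest of the argument is algebraic rearrangement.
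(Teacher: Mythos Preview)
Your proposal is correct and follows essentially the same approach as the paper: expand in the $H$-eigenbasis, separate the Gaussian eigenvalue average (yielding the Kronecker delta on the $D$-indices) from the Haar average, apply Lemma~\ref{lem:haar_two_moment}, and collapse the resulting Weingarten terms. The only difference is organizational---you isolate the two-point correlator $\EE_G[G_{(a,b),(c,d)}G_{(c,d),(c',d')}]$ as an intermediate result before substituting back, whereas the paper carries all indices through in one expansion---but the key steps and the final simplification $(\dim-1)/(\dim^2-1)=1/(\dim+1)$ are identical.
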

\begin{proof}
The overall structure of this proof is to evaluate the product in the Hamiltonian eigenbasis and split the product into three factors: a phase contribution from the time evolution, a Haar integral from the eigenvalues of the random interaction, and the eigenvalue contribution of the random interaction. Since this involves the use of multiple indices, it will greatly simplify the proof to use a single index over the total Hilbert space $\hilb$ as opposed to two indices over $\hilb_S \otimes \hilb_E$. For example, the index $a$ should be thought of as a pair $(a_s, a_e)$, and functions $\lambda(a)$ should be thought of as $\lambda(a_s, a_e)$. Once the final form of the expression is reached we will substitute in pairs of indices for easier use of the lemma in other places.
    \begin{align}
        \int G(x) G(y) dG &= \int e^{+i H x} U_G D U_G^\dagger e^{-i H x} e^{+i H y} U_G D U_G^\dagger e^{-i H y} dU_G ~dD \\
        &= \int \bigg[\sum_a e^{+i \lambda(a)x}\ketbra{a}{a}  U_G \sum_b D(b)\ketbra{b}{b} U_G^\dagger \nonumber \\
        &\quad \sum_c e^{-i \lambda(c) (x - y)} \ketbra{c}{c} U_G \sum_d D(d)\ketbra{d}{d} U_G^\dagger \sum_e e^{-i \lambda(e) y} \ketbra{e}{e} \bigg] dU_G ~dD\\
        &=\sum_{a,b,c,d,e} \ketbra{a}{e} e^{-i (\lambda(c) - \lambda(a))x} e^{-i (\lambda(e) - \lambda(c))y} \nonumber \\
        &\quad \times \int \bra{a} U_G \ket{b} \bra{c} U_G \ket{d} \bra{b} U_G^{\dagger} \ket{c} \bra{d} U_G^\dagger \ket{e} dU_G \int D(b) D(d) dD \\
        &=  \sum_{a, b, c, d, e} \delta_{bd} \ketbra{a}{e} e^{-i (\lambda(c) - \lambda(a))x} e^{-i (\lambda(e) - \lambda(c))y} \nonumber \\
        &\quad \times \int \bra{a} U_G \ket{b} \bra{c} U_G \ket{d} \bra{b} U_G^{\dagger} \ket{c} \bra{d} U_G^\dagger \ket{e} dU_G. \\
    \end{align}
    Now the summation over $d$ fixes $d=b$ and we use Lemma \ref{lem:haar_two_moment} to compute the Haar integral, which simplifies greatly due to the repeated $b$ index. Plugging the result into the above yields the following
    \begin{align}
        &= \frac{1}{\dim^2 - 1} \sum_{a, b, c, e} \ketbra{a}{e} e^{-i (\lambda(c) - \lambda(a))x} e^{-i (\lambda(e) - \lambda(c))y} \parens{\delta_{ac} \delta_{ce} + \delta_{ae} - \frac{1}{\dim} \parens{\delta_{ac} \delta_{ce} + \delta_{ae}}}  \\
        &= \frac{1}{\dim^2 - 1} \parens{1 - \frac{1}{\dim}} \sum_{a, b, c, e} \ketbra{a}{e} e^{-i (\lambda(c) - \lambda(a))x} e^{-i (\lambda(e) - \lambda(c))y} \delta_{ae} (1 + \delta_{ac}) \\
        &= \frac{1}{\dim^2 - 1} \parens{1 - \frac{1}{\dim}} \sum_{a, b, c} \ketbra{a}{a} e^{i (\lambda(a) - \lambda(c))(x-y)} (1 + \delta_{ac}) \\
        &= \frac{1 \parens{\dim - 1}}{\dim^2 - 1} \sum_{a,c} \ketbra{a}{a} e^{i (\lambda(a) - \lambda(c))(x - y)} (1 + \delta_{ac}) \\
        &= \frac{1}{\dim + 1} \parens{\sum_{a,c} e^{i (\lambda(a) - \lambda(c))(x-y)} \ketbra{a}{a} + \identity}.
    \end{align}
    Reindexing by $a \mapsto i,j$, $c \mapsto k,l$, and plugging in the definition of $\Delta$ yields the statement of the lemma.
\end{proof}

\begin{lemma} \label{lem:sandwiched_interaction}
    Given two Heisenberg evolved random interactions, $G(x)$ and $G(y)$, we compute their action on an outer-product $\ketbra{i,j}{k,l}$ as
    \begin{equation}
        \int G(x) \ketbra{i,j}{k,l} G(y) ~dG = \frac{1}{\dim + 1} \parens{\ketbra{i,j}{k,l} + \braket{i,j}{k,l} \sum_{m,n} e^{i \Delta(m,n | i,j) (x-y)} \ketbra{m,n}{m,n}}
    \end{equation}
\end{lemma}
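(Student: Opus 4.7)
The plan is to mirror the structure of the proof of Lemma~\ref{lem:two_heisenberg_interactions}, inserting the outer product $\ketbra{i,j}{k,l}$ between the two Heisenberg-evolved interactions before performing the randomness integrals. As before, I will work in a single-index notation for the joint Hilbert space eigenbasis of $H$, and reindex to pairs at the very end. First I would substitute $G = U_G D U_G^\dagger$ into both copies of $G(x)$ and $G(y)$ and resolve each Hamiltonian exponential in the $H$-eigenbasis, producing a sextuple sum of the schematic form
\begin{equation}
\sum_{a,b,c,a',b',c'} e^{i(\lambda(a)-\lambda(c))x} e^{i(\lambda(a')-\lambda(c'))y} D(b) D(b') \bra{a} U_G \ket{b}\bra{b} U_G^\dagger \ket{c} \bra{a'} U_G \ket{b'}\bra{b'} U_G^\dagger \ket{c'} \ketbra{a}{c'}
\end{equation}
with the inner factor $\braket{c}{i,j}\braket{k,l}{a'}$ collapsing $c = (i,j)$ and $a' = (k,l)$ immediately.

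Next I would integrate over the Gaussian spectrum $D$. Because the eigenvalues are i.i.d.\ standard normal, $\int D(b) D(b') dD = \delta_{b,b'}$, which collapses the two $b$-sums into one. I would then apply Lemma~\ref{lem:haar_two_moment} to the remaining Haar four-point integral with the assignments $i_1 = a$, $j_1 = b$, $i_2 = (k,l)$, $j_2 = b$, $k_1 = b$, $l_1 = (i,j)$, $k_2 = b$, $l_2 = c'$. All four Kronecker factors involving the repeated label $b$ are identically $1$, so the Haar integral reduces to
\begin{equation}
\frac{1}{\dim(\dim+1)}\bigl(\delta_{a,(i,j)} \delta_{(k,l), c'} + \delta_{a, c'} \delta_{(k,l),(i,j)}\bigr),
\end{equation}
after combining the two $1/(\dim^2-1)$ terms with the two $-1/(\dim(\dim^2-1))$ terms exactly as in Lemma~\ref{lem:two_heisenberg_interactions}. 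The sum over the free index $b$ then supplies a factor of $\dim$ that cancels the $\dim$ in the denominator, leaving the desired overall prefactor $1/(\dim+1)$.

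Finally I would read off the two surviving contributions. The first delta pair forces $a=(i,j)$ and $c'=(k,l)$, producing $\ketbra{i,j}{k,l}$ with a trivial phase since $\lambda(a)-\lambda(i,j)=0$ and $\lambda(k,l)-\lambda(c')=0$. The second delta pair enforces $(k,l)=(i,j)$, which I can rewrite as the prefactor $\braket{i,j}{k,l}$, together with a free diagonal sum $\sum_{a} e^{i(\lambda(a)-\lambda(i,j))(x-y)}\ketbra{a}{a}$; reindexing $a \mapsto (m,n)$ and recalling $\Delta(m,n|i,j) = \lambda(m,n)-\lambda(i,j)$ yields the stated expression.

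I do not expect any real obstacle here beyond careful index bookkeeping: the key simplification that only two of the four Haar terms survive, and that the $b$-sum combines cleanly with the Haar prefactor to give $1/(\dim+1)$, both rely on the same mechanism used to prove Lemma~\ref{lem:two_heisenberg_interactions}. The only new feature compared to that lemma is the inner insertion $\ketbra{i,j}{k,l}$, which merely pins two of the eigenbasis indices before the Haar integration is performed, so the main step is verifying that the pinned assignments are consistent with the delta structure coming from Lemma~\ref{lem:haar_two_moment}.
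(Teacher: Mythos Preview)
Your proposal is correct and follows essentially the same approach as the paper's own proof: expand both $G$'s in the Hamiltonian eigenbasis, use the i.i.d.\ Gaussian eigenvalue integral to collapse $b=b'$, apply Lemma~\ref{lem:haar_two_moment} with the repeated $b$ index forcing all $\delta_{j,k}$ factors to $1$, combine the surviving Kronecker structure with the free $b$-sum to produce the $1/(\dim+1)$ prefactor, and read off the two contributions. The only differences from the paper are cosmetic choices of index labels.
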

\begin{proof}
This proof is structured the same as Lemma \ref{lem:two_heisenberg_interactions} and similarly we will use a single index of the total Hilbert space $\hilb$ and switch to two indices to match the rest of the exposition.
    \begin{align}
        \int G(x) \ketbra{a}{b} G(y) dG &=  \int e^{i H x} U_G D U_G^{\dagger} e^{-i H x} \ketbra{a}{b} e^{i H y} U_G D U_G^\dagger e^{-i H y} ~dG \\
        &= \sum_{c, d, e, f} e^{i (\lambda(c) - \lambda(a))x} e^{i (\lambda(b) - \lambda(f))y} \nonumber \\
        &\quad \times \int \ketbra{c}{c} U_G D(d) \ketbra{d}{d} U_G^\dagger \ketbra{a}{b} U_G D(e) \ketbra{e}{e} U_G^\dagger \ketbra{f}{f} dG \\
        &= \sum_{c, d, e, f}  e^{i (\lambda(c) - \lambda(a))x} e^{i (\lambda(b) - \lambda(f))y} \ketbra{c}{f} \nonumber \\
        &\quad \times \int D(d) D(e) dD \int \bra{c} U_G \ket{d} \bra{b} U_G \ket{e} \bra{d} U_G^\dagger \ket{a} \bra{e} U_G^\dagger \ket{f} dU_G \\
        &=  \sum_{c,d,f} e^{i (\lambda(c) - \lambda(a))x} e^{i (\lambda(b) - \lambda(f))y} \ketbra{c}{f} \nonumber \\ 
        &\quad \times \int \bra{c} U_G \ket{d} \bra{b} U_G \ket{d} \bra{a} \overline{U_G} \ket{d} \bra{f} \overline{U_G} \ket{d} dU_G \\
        &= \frac{1}{\dim^2 - 1} \sum_{c,d,f} e^{i (\lambda(c) - \lambda(a))x} e^{i (\lambda(b) - \lambda(f))y} \ketbra{c}{f} (\delta_{ca} \delta_{bf} + \delta_{cf}\delta_{ab})\parens{1 - \frac{1}{\dim}} \\
        &= \frac{1}{\dim + 1} \sum_{c,f} e^{i (\lambda(c) - \lambda(a))x} e^{i (\lambda(b) - \lambda(f))y} \ketbra{c}{f} (\delta_{ca} \delta_{bf} + \delta_{cf}\delta_{ab}) \\
        &= \frac{1}{\dim + 1} \parens{\ketbra{a}{b} + \delta_{ab} \sum_{c} e^{i(\lambda(c) - \lambda(a))(x-y)} \ketbra{c}{c} }.
    \end{align}
    Now re-indexing by $a \mapsto (i,j)$, $b \mapsto (k,l)$ and $c \mapsto (m,n)$ results in the expression given in the statement of the lemma.
\end{proof}

\secondOrderChannelHaar*
\begin{proof}
To start we would like to note that we will use a single index notation to refer to the joint system-environment eigenbasis during this proof to help shorten the already lengthy expressions. We will convert back to a double index notation to match the statement of the theorem. We start from the expression for the first derivative of the channel $\frac{\partial}{\partial \alpha} \Phi_G(\rho_S)$ given by Eq. \eqref{eq:first_order_alpha_derivative}. To take the second derivative there are six factors involving $\alpha$, so we will end up with six terms. We repeat Eq. \eqref{eq:first_order_alpha_derivative} below, add a derivative, and label each factor containing an $\alpha$ for easier computation
\begin{align}
    \frac{\partial^2}{\partial \alpha^2} \Phi_G(\rho_S) =& \frac{\partial}{\partial \alpha} \parens{\int_{0}^{1} \underset{\substack{\downarrow \\ (A)}}{e^{i s (H+\alpha G)t}} (i t G) \underset{\substack{\downarrow \\ (B)}}{e^{i (1-s) (H+\alpha G)t}} ds ~ \rho \underset{\substack{\downarrow \\ (C)}}{e^{-i(H+\alpha G)t}} } \nonumber \\
    &~ ~+\frac{\partial}{\partial \alpha} \parens{ \underset{\substack{\downarrow \\ (D)} }{e^{i(H+\alpha G)t}} \rho \int_{0}^1 \underset{\substack{\downarrow \\ (E)} }{e^{-i s (H+\alpha G) t} } (- i t G) \underset{\substack{\downarrow \\ (F)}}{e^{-i (1-s) (H+\alpha G)t}} ds }. \label{eq:second_derivative_labels}
\end{align}
Our goal is to get each of these terms in a form in which we can use either Lemma \ref{lem:two_heisenberg_interactions} or \ref{lem:sandwiched_interaction}. 
\begin{align}
    (A) &=i t\int_0^1 \parens{\frac{\partial}{\partial \alpha} e^{i s_1 (H+ \alpha G)t}} G e^{i(1-s_1)(H+\alpha G)t} ds_1 \rho e^{-i (H+\alpha G)t} \bigg|_{\alpha=0} \\
    &= (it)^2 \int_0^1 \parens{\int_0^1 e^{i s_1 s_2 (H+\alpha G)t} s_1 G e^{i s_1 (1-s_2) (H+\alpha G)t} ds_2} G e^{i(1-s_1) (H+\alpha G)t} ds_1 \rho e^{-i(H+\alpha G) t} \bigg|_{\alpha=0} \label{eq:second_order_deriv_intermediate_a}\\
    &= -t^2 \int_0^1 \int_0^1 e^{i s_1 s_2 H t} G e^{-i s_1 s_2 H t} e^{i s_1 H t} G e^{-i s_1 H t} s_1 ds_1 ds_2 e^{i H t} \rho e^{-i H t} \\
    &= -t^2 \int_0^1 \int_0^1 G(s_1 s_2 t) G(s_1 t) s_1 ds_1 ds_2 \rho(t). \label{eq:second_deriv_alpha_first_term}
\end{align}

\begin{align}
    (B) &= it \int_0^1 e^{i s_1 (H + \alpha G)t} G \frac{\partial}{\partial \alpha}\parens{e^{i(1-s_1)(H + \alpha G)t}} ds_1 \rho e^{-i(H + \alpha G) t} \bigg|_{\alpha = 0} \\
    &= (it)^2 \int_0^1 e^{i s_1 (H + \alpha G)t} G \parens{\int_0^1 e^{i(1-s_1)s_2 (H + \alpha G)t} (1-s_1) G e^{i(1 - s_1)(1 - s_2)(H + \alpha G)t} ds_2} ds_1 ~ \rho e^{-i ( H + \alpha G)t} \bigg|_{\alpha = 0} \\
    &= -t^2 \int_0^1 \int_0^1 e^{i s_1 H t} G e^{i(1-s_1)s_2 H t} G e^{i(1-s_1)(1-s_2) H t} (1-s_1) ds_1 ds_2 ~ \rho e^{-i H t}\\ 
    &= -t^2 \int_0^1 \int_0^1 e^{i s_1 H t} G e^{-i s_1 H t} e^{i(s_1 + s_2 - s_1 s_2) H t} G e^{-i (s_1 + s_2 - s_1 s_2) H t} (1-s_1) ds_1 ds_2 ~ \rho(t) \\
    &= -t^2 \int_0^1 \int_0^1 G(s_1 t) G((s_1 + s_2 - s_1 s_2)t) (1-s_1) ds_1 ds_2 ~ \rho(t)
\end{align}

\begin{align}
    (C) &= it \int_0^1 e^{i s (H + \alpha G)t} G e^{i(1-s) (H + \alpha G) t} ds ~\rho ~ \frac{\partial}{\partial \alpha} \parens{ e^{-i (H + \alpha G) t} } \bigg|_{\alpha = 0} \\
    &= (i t) (-it) \int_0^1 e^{i s (H + \alpha G)t} G e^{i (1 - s) (H + \alpha G)t} ds ~ \rho ~ \parens{ \int_0^1 e^{-i s (H + \alpha G)t} G e^{-i (1- s) ( H + \alpha G)t } ds}\bigg|_{\alpha = 0} \\
    &= + t^2 \parens{\int_0^1 e^{i s H t} G e^{-i s H t} ds} e^{i H t} \rho e^{-i H t} \parens{\int_0^1 e^{i (1-s) H t} G e^{-i (1-s) H t} ds} \\
    &= + t^2 \int_0^1 G(st) ds ~ \rho(t) \int_0^1 G((1-s)t) ds
\end{align}

\begin{align}
    (D) &= (-it) \frac{\partial}{\partial \alpha} \parens{e^{i(H + \alpha G)t}} \rho \int_0^1 e^{-i s (H + \alpha G)t} G e^{-i (1-s)(H + \alpha G)t} ds \bigg|_{\alpha = 0} \\
    &= t^2 \parens{\int_0^1 e^{i s (H+ \alpha G)t} G e^{i (1-s) (H + \alpha G)t}ds} \rho \int_0^1 e^{-i s (H + \alpha G)t} G e^{-i (1-s)(H + \alpha G)t} ds \bigg|_{\alpha = 0} \\
    &=  t^2 \int_0^1 e^{i s H t} G e^{-i s H t} ds ~\rho(t) \int_0^1 e^{i (1-s) H t} G e^{-i (1-s) H t} ds \\
    &= t^2 \int_0^1 G(st) ds ~ \rho(t) ~ \int_0^1 G((1-s)t) ds
\end{align}

\begin{align}
    (E) &= (-it) e^{i (H+ \alpha G) t} ~ \rho ~\int_0^1 \frac{\partial}{\partial \alpha} \parens{e^{-i s_1 (H + \alpha G)t}} G e^{-i (1-s_1)(H + \alpha G)t} ds_1 \bigg|_{\alpha = 0} \\
    &= - t^2 e^{i(H + \alpha G)t} ~ \rho ~\int_0^1 \parens{\int_0^1 e^{-i s_1 s_2 (H + \alpha G) t} (s_1 G) e^{-i s_1 (1-s_2) (H + \alpha G)t} ds_2} G e^{-i(1-s_1)(H + \alpha G)t} ds_1 \bigg|_{\alpha = 0} \\
    &= -t^2 e^{i H t} \rho e^{-i H t} \int_0^1 \int_0^1 e^{i (1 - s_1 s_2) H t} G e^{-i (s_1 - s_1 s_2)H t} G e^{-i (1-s_1)H t} s_1 ds_1 ds_2 \\
    &= -t^2 \rho(t) \int_0^1 \int_0^1 G((1- s_1 s_2) t) G((1-s_1)t) s_1 ds_1 ds_2
\end{align}

\begin{align}
    (F) &= (-it) e^{i(H + \alpha G) t} \rho \int_0^1 e^{-i s_1 ( H + \alpha G) t} G \frac{\partial}{\partial \alpha} \parens{ e^{-i (1-s_1) ( H +\alpha G)t}} ds_1 \bigg|_{\alpha = 0} \\
    &= (-it)^2 e^{i (H + \alpha G)t} \rho \int_0^1 e^{-i s_1 (H + \alpha G)t} G \parens{\int_0^1 e^{-i(1-s_1) s_2 (H + \alpha G)t} (1-s_1) G e^{-i(1-s_1) (1-s_2) (H + \alpha G) t} ds_2} ds_1 \bigg|_{\alpha = 0} \\
    &= -t^2 e^{-i H t} \rho e^{-i H t} \int_0^1 \int_0^1 e^{i (1- s_1) H t} G e^{-i (1-s_1) H t} e^{i (1-s_1)(1-s_2) H t} G e^{-i(1-s_1)(1-s_2) H t} (1-s_1) ds_1 ds_2 \\
    &= -t^2 \rho(t) \int_0^1 \int_0^1 G((1-s_1)t) G((1-s_1)(1 - s_2) t) (1-s_1)ds_1 ds_2
\end{align}

Now our goal is to compute the effects of averaging over the interaction $G$ on the above terms, starting with $(A)$. As this involves a lot of index manipulations, similarly to the proofs of Lemmas \ref{lem:two_heisenberg_interactions} and \ref{lem:sandwiched_interaction} we will use a single index for the total system-environment Hilbert space and switch back to a double index to state the results. We will make heavy use of Lemma \ref{lem:two_heisenberg_interactions}.
\begin{align}
    \int (A) dG &= -t^2 \int_0^1 \int_0^1 \int G(s_1 s_2 t) G(s_1 t) dG s_1 ds_1 ds_2 \rho(t) \\
    &= \frac{-t^2 }{\dim + 1} \int_0^1 \int_0^1 \parens{\sum_{i,j} e^{i (\lambda(i) - \lambda(j)) (s_1 s_2 t - s_1 t)} \ketbra{i}{i} + \identity} s_1 ds_1 ds_2 \rho(t) \\
    &= \frac{- t^2 }{\dim + 1} \parens{\sum_{i} \sum_{j : \lambda(i) \neq \lambda(j)} \int_0^1 \int_0^1 e^{i(\lambda(i) - \lambda(j))t (s_1 s_2 - s_1)} s_1 ds_1 ds_2 \ketbra{i}{i} + \sum_{i} \sum_{j : \lambda(i) = \lambda(j)}\frac{1}{2} \ketbra{i}{i} + \frac{1}{2} \identity} \rho(t) \\
    &= \frac{- t^2 }{\dim + 1} \parens{\sum_i \sum_{j : \lambda(i) \neq \lambda(j)} \frac{1 - i (\lambda(i) - \lambda(j))t - e^{-i (\lambda(i) - \lambda(j))t}}{t^2 (\lambda(i) - \lambda(j))^2} \ketbra{i}{i} + \frac{1}{2} \sum_{i} (\eta(i) + 1) \ketbra{i}{i} } \rho(t) \\
    &= \frac{- 1}{\dim + 1}\parens{\sum_{i} \sum_{j: \Delta_{ij} \neq 0} \frac{1 - i \Delta_{ij}t - e^{-i \Delta_{ij} t}}{\Delta_{ij}^2} \ketbra{i}{i} + \frac{t^2}{2} \sum_{i} (\eta(i) + 1)\ketbra{i}{i} } \rho(t)
\end{align}

We can similarly compute the averaged $(B)$ term:
\begin{align}
    \int (B) dG &= -t^2 \int_0^1 \int_0^1 \int G(s_1 t) G((s_1 + s_2 - s_1 s_2) t) dG (1-s_1) ds_1 ds_2 ~ \rho(t) \\
    &= \frac{- t^2 }{\dim + 1} \int_0^1 \int_0^1 \parens{\sum_{i,j} e^{i (\lambda(i) - \lambda(j))(s_1 s_2 - s_2) t} \ketbra{i}{i} + \identity} (1 -s_1) ds_1 ds_2 \rho \\
    &= \frac{- t^2 }{\dim + 1} \parens{\sum_{i} \sum_{j : \lambda(i) \neq \lambda(j)} \int_0^1 \int_0^1 e^{i(\lambda(i) - \lambda(j))t (s_1 s_2 - s_2)} (1 - s_1) ds_1 ds_2 \ketbra{i}{i} + \sum_{i} \sum_{j : \lambda(i) = \lambda(j)}\frac{1}{2} \ketbra{i}{i} + \frac{1}{2} \identity} \rho(t) \\
    &= \frac{- t^2 }{\dim + 1} \parens{\sum_i \sum_{j : \lambda(i) \neq \lambda(j)} \frac{1 - i (\lambda(i) - \lambda(j))t - e^{-i (\lambda(i) - \lambda(j))t}}{t^2 (\lambda(i) - \lambda(j))^2} \ketbra{i}{i} + \frac{1}{2} \sum_{i} (\eta(i) + 1) \ketbra{i}{i} } \rho(t) \\
    &= \frac{-1}{\dim + 1}\parens{\sum_{i} \sum_{j: \Delta_{ij} \neq 0} \frac{1 - i \Delta_{ij}t - e^{-i \Delta_{ij} t}}{\Delta_{ij}^2} \ketbra{i}{i} + \frac{t^2}{2} \sum_{i} (\eta(i) + 1)\ketbra{i}{i} } \rho(t),
\end{align}
which we note is identical to $\int (A) dG$. As terms $(C)$ and $(D)$ involve a different method of computation we skip them for now and compute $(E)$ and $(F)$. 
\begin{align}
    \int (E) dG &= -t^2 \rho(t) \int_0^1 \int_0^1 \int G((1- s_1 s_2) t) G((1-s_1)t) dG s_1 ds_1 ds_2 \\
    &= \frac{- t^2}{\dim + 1} \rho(t) \int_0^1 \int_0^1 \parens{\sum_{i,j} e^{i(\lambda(i) - \lambda(j)) t (s_1 - s_1 s_2)} \ketbra{i}{i} + \identity } s_1 ds_1 ds_2 \\
    &= \frac{- t^2}{\dim + 1} \rho(t) \parens{\sum_i \sum_{j : \lambda(i) \neq \lambda(j)} \frac{1 + i (\lambda(i) - \lambda(j))t - e^{i(\lambda(i) - \lambda(j))t}}{t^2 (\lambda(i) - \lambda(j))^2}\ketbra{i}{i} + \frac{1}{2} \sum_{i} (\eta(i) + 1 )\ketbra{i}{i}} \\
    &= \frac{- 1}{\dim + 1} \rho(t) \parens{\sum_i \sum_{j: (\Delta_{ij} \neq 0)} \frac{1 + i \Delta_{ij}t - e^{i\Delta_{ij}t}}{\Delta_{ij}^2} \ketbra{i}{i} + \frac{t^2}{2}\sum_i (\eta(i) + 1) \ketbra{i}{i}}.
\end{align}
Computing $(F)$ yields
\begin{align}
    \int (F) dG &= -t^2 \rho(t) \int_0^1 \int_0^1 \int G((1-s_1)t) G((1-s_1)(1 - s_2) t) dG (1-s_1)ds_1 ds_2 \\
    &= \frac{- t^2 \sigma ^2}{\dim + 1} \rho(t) \int_0^1 \int_0^1 \parens{\sum_{i,j} e^{i(\lambda(i) - \lambda(j))t (s_2 - s_1 s_2)}\ketbra{i}{i} + \identity} (1-s_1) ds_1 ds_2 \\
    &= \frac{- t^2 }{\dim + 1} \rho(t) \parens{\sum_{i} \sum_{j : \lambda(i) \neq \lambda(j)} \frac{1 + i (\lambda(i) - \lambda(j))t - e^{i (\lambda(i) - \lambda(j))t}}{t^2 (\lambda(i) - \lambda(j))^2} \ketbra{i}{i} +\frac{1}{2} \sum_{i} (\eta(i) + 1) \ketbra{i}{i}} \\
    &= \frac{- 1}{\dim + 1} \rho(t) \parens{\sum_i \sum_{j: (\Delta_{ij} \neq 0)} \frac{1 + i \Delta_{ij}t - e^{i\Delta_{ij}t}}{\Delta_{ij}^2} \ketbra{i}{i} + \frac{t^2}{2}\sum_i (\eta(i) + 1) \ketbra{i}{i}}
\end{align}
 which is identical to $\int (E) dG$.

 The last two terms $(C) = (D)$ are computed as follows:
 \begin{align}
     \int (C) dG &= t^2 \int_0^1 \int_0^1 \int G(s_1 t) \rho(t) G((1-s_2)t) ~dG ~ ds_1 ds_2 \\
     &= t^2 \sum_{i,j} \rho_{ij} e^{i(\lambda(i) - \lambda(j))t} \int_0^1 \int_0^1 \int G(s_1 t) \ketbra{i}{j} G((1-s_2)t) ~ dG ~ ds_1 ds_2 \\
     &= \frac{ t^2}{\dim + 1} \sum_{i,j} \rho_{ij} e^{i(\lambda(i) - \lambda(j))t} \parens{ \ketbra{i}{j} + \delta_{ij} \sum_{a} \int_0^1 \int_0^1 e^{i(\lambda(a) - \lambda(i))(s_1 + s_2 - 1)t} ds_1 ds_2 \ketbra{a}{a}} \\
     &= \frac{ t^2}{\dim + 1} \sum_{i,j} \rho_{ij} e^{i \Delta_{ij} t} \parens{\ketbra{i}{j} + \delta_{ij} \sum_{a : \Delta_{ai} \neq 0} \frac{2( 1- \cos (\Delta_{ai} t))}{\Delta_{ai}^2 t^2} \ketbra{a}{a} + \delta_{ij} \sum_{a : \Delta_{ai} = 0} \ketbra{a}{a}}
 \end{align}

 We can now combine each of these terms to offer the full picture of the output of the channel to second order. We make two modifications to the results from each sum: first, we will switch to double index notation to make for easier use in other areas, and secondly we let $\rho = \ketbra{i,j}{k,l}$. We note that the first term in the following equation is provided by $(A) + (B)$, the second through $(E) + (F)$, and the last two through $(C) + (D)$. 
 \begin{align}
     &\int \frac{\partial^2}{\partial \alpha^2} \Phi_G(\ketbra{i,j}{k,l})\bigg|_{\alpha = 0} dG \\
     &= -\frac{2  e^{i \Delta(i,j|k,l) t}}{\dim + 1} \bigg(\sum_{(a,b): \Delta(i,j|a,b) \neq 0} \frac{1 - i \Delta(i,j|a,b)t - e^{-i \Delta(i,j|a,b) t}}{\Delta(i,j|a,b)^2} \nonumber \\
     &~+ \sum_{(a,b): \Delta(k,l|a,b) \neq 0} \frac{1 + i \Delta(k,l|a,b) t - e^{i \Delta(k,l|a,b) t}}{\Delta(k,l|a,b)^2} + \frac{t^2}{2}(\eta(i,j) + \eta(k,l)) \bigg) \ketbra{i,j}{k,l} \nonumber \\
    &~ +\delta_{i,k} \delta_{j,l} \frac{2 e^{i \Delta(i,j|k,l)t}}{\dim+1} \parens{ \sum_{(a,b): \Delta(i,j|a,b) \neq 0 } \frac{2(1- \cos (\Delta(i,j|a,b)t))}{\Delta(i,j|a,b)^2} \ketbra{a,b}{a,b} + t^2 \sum_{(a,b) : \Delta(i,j|a,b) = 0} \ketbra{a,b}{a,b}} \label{eq:second_order_output}
 \end{align}
The last step we need is to use the half angle formula to change the cosine to a sine
\begin{align}
    \frac{2(1 - \cos(\Delta(i,j| a,b)t)}{\Delta(i,j|a,b)^2} &= \frac{2\left( 1 - \left(1 - 2 \sin^2\left(\frac{\Delta(i,j|a,b)t}{2} \right) \right) \right)}{\Delta(i,j|a,b)^2} \label{eq:trig_start} \\
    &= t^2 \frac{\sin^2 \left(\frac{\Delta(i,j|a,b) t}{2} \right)}{\frac{\Delta(i,j|a,b)^2 t^2}{4}} \\
    &= t^2 \sinc^2 \left(\frac{\Delta(i,j|a,b) t}{2 } \right), \label{eq:trig_end}
\end{align}
which yields the statement.

We can compute these by plugging in to Eq. \eqref{eq:el_gigante} again, which yields
\begin{align}s
&\int \bra{i', j'} \mathcal{T} \left( \ketbra{i, j}{i, j} \right) \ket{i', j'} ~dG = \begin{cases}        
\widetilde{\alpha}^2 \sinc^2(\Delta(i,j | i', j') t /2) & (i, j) \neq (i', j') \\
            - \widetilde{\alpha}^2 \sum_{(a,b) \neq (i, j)} \sinc^2(\Delta(a,b|i,j) t / 2) & (i,j) = (i', j')
        \end{cases}. \label{eq:system_environment_transitions}
    \end{align}
    The $(i, j) \neq (i', j')$ case should be apparent, the first term with the coherence factors $\chi$ are zero and the second term is what remains. The $(i,j) = (i', j')$ case can be seen as follows. For the first term we have
    \begin{align}
        - \frac{\alpha^2 e^{i \Delta(i,j| i,j) t}}{\dim + 1}\left(\chi(i,j) + \chi(i,j)^* + \frac{t^2}{2}(\eta(i,j) + \eta(i,j) \right) \ketbra{i,j}{i,j}.
    \end{align}
    We first compute the sum $\chi(i,j) + \chi(i,j)^*$ as
    \begin{align}
        \chi(i,j) + \chi(i,j)^* &= \sum_{a,b: \Delta(i,j,|a,b) \neq 0} \frac{1 - i \Delta(i,j|a,b)t - e^{-i \Delta(i,j|a,b) t}}{\Delta(i,j|a,b)^2} \nonumber\\
&\quad+ \sum_{a,b: \Delta(i,j,|a,b) \neq 0} \frac{1 + i \Delta(i,j|a,b)t - e^{+i \Delta(i,j|a,b) t}}{\Delta(i,j|a,b)^2} \\
    &= \sum_{a,b: \Delta(i,j| a,b) \neq 0} \frac{2 - e^{-i \Delta(i,j| a,b) t} - e^{+i \Delta(i,j| a,b) t}}{\Delta(i,j|a,b)^2} \\
    &= \sum_{a,b: \Delta(i,j| a,b) \neq 0} t^2 \sinc^2 \left( \frac{\Delta(i,j| a,b) t}{2} \right),
    \end{align}
    where the last step follows from a trigonometric identity (see Eqs. \eqref{eq:trig_start} - \eqref{eq:trig_end} in Appendix \ref{sec:haar_integral_appendix} for details). Since $\sinc(0) = 1$ the $\eta(i,j)$ term can be expressed as $\eta(i,j) = \sum_{a,b : \Delta(i,j|a,b) = 0} \sinc^2 \left( \frac{\Delta(i,j| a,b) t}{2} \right)$. Plugging this into Eq. \eqref{eq:el_gigante} gives
    \begin{align}
        &\int \bra{i,j} \mathcal{T} (\ketbra{i,j}{i,j}) \ket{i,j} dG \\
        &= \bra{i,j} \left(-\frac{\alpha^2 t^2}{\dim + 1} \sum_{a,b} \sinc^2 \left( \frac{\Delta(i,j| a,b) t}{2} \right) \ketbra{i,j}{i,j} + \sum_{a,b} \sinc^2\left( \frac{\Delta(i,j | a,b)t}{2} \right) \ketbra{a,b}{a,b} \right) \ket{i,j} \\
        &= -\frac{\alpha^2 t^2}{\dim + 1} \sum_{(a,b) \neq (i,j)} \sinc^2 \left( \frac{\Delta(i,j| a,b) t}{2} \right).
    \end{align}
    As a by-product of this computation we have just shown that $\trace{\mathcal{T}(\rho)} = 0$ and that our mapping is trace preserving to $\bigo{\alpha^2}$.
\end{proof}

\subsection{Weak-Coupling Remainder Bound} \label{sec:weak_coupling_remainder_bound}

\begin{proof}[Proof of Theorem~\ref{thm:remainder_bound}]
First we note that although $R_{\Phi}(\rho) = \frac{\alpha^3}{6} \partial_{\alpha}^3 \Phi(\rho)\big|_{\alpha = \alpha_{\star}}$ for a specific $ \alpha_{\star} > 0$ our proof will hold for any $\alpha_{\star}$. To compute the trace norm we will use the triangle inequality, unitary invariance of the Sch\"{a}tten norms, and submultiplicativity. To start,
\begin{align}
    \|\partial_\alpha^3 \Phi(\rho) \|_1 &= \left\| \frac{\partial^3}{\partial \alpha^3} {\rm Tr}_{H_E} \int e^{i(H+\alpha G)t} \rho_S \otimes \rho_E e^{-i(H+\alpha G)t} dG \bigg|_{\alpha = \alpha_{\star}} \right\|_1 \nonumber\\
    &= \left\| \frac{\partial^3}{\partial \alpha^3} {\rm Tr}_{H_E} \int e^{i(H+\alpha G)t} \rho_S \otimes \rho_E e^{-i(H+\alpha G)t} dG \bigg|_{\alpha = \alpha_{\star}} \right\|_1 \nonumber\\
    &\le    \int \left\|{\rm Tr}_{H_E}\frac{\partial^3}{\partial \alpha^3}\left( e^{i(H+\alpha G)t} \rho_S \otimes \rho_E e^{-i(H+\alpha G)t}\right) \bigg|_{\alpha = \alpha_{\star}} \right\|_1 dG.\label{eq:3derivBd}
\end{align}
To proceed, Proposition 1 of \cite{rastegin2012relations} allows us to eliminate the partial trace using the relation
$\norm{\partrace{\hilb_E}{X}}_{1} \le \norm{X}_{\dim_E} \le \norm{X}_1$. Further we use the decomposition of the second derivatives from the proof of Lemma \ref{lem:big_one}, specifically  Eq. \eqref{eq:second_derivative_labels} for the definition of each term, as $\partial_{\alpha}^2 \Phi_G = (A) + (B) + (C) + (D) +(E) + (F)$. This gives the following 
\begin{align}
    \norm{R_{\Phi}}_1 \le \frac{\alpha^3}{6} &\int \norm{\partial_{\alpha}((A) + (B) + (C) + (D) +(E) + (F)) \big|_{\alpha = \alpha_{\star}} }_1 dG \\
    &\le \frac{\alpha^3}{6} \int \norm{\partial_{\alpha}(A)\big|_{\alpha = \alpha_{\star}} }_1 + \norm{\partial_{\alpha}(B) \big|_{\alpha = \alpha_{\star}} }_1 + \ldots + \norm{\partial_{\alpha}(F) \big|_{\alpha = \alpha_{\star}} }_1 dG
\end{align}
We will demonstrate how this can be computed for the first term $\partial_{\alpha}(A)$. Using Eq. \eqref{eq:second_deriv_alpha_first_term}
\begin{align}
\partial_{\alpha} (A) &= -t^2 \partial_{\alpha} \int_0^1 \int_0^1 e^{i s_1 s_2 (H+\alpha G)t} G e^{i s_1 (1-s_2) (H+\alpha G)t} G e^{i(1-s_1) (H+\alpha G)t} \rho e^{-i(H+\alpha G) t}   s_1 ~ds_1 ds_2.
\end{align}
Due to the multiplication rule the resulting derivative will have 4 terms that each need one application of Duhamel's formula. We will show one of these terms, starting with the leftmost one
\begin{align}
    -t^2  &\int_0^1 \int_0^1 \partial_{\alpha} \left( e^{i s_1 s_2 (H+\alpha G)t} \right) G e^{i s_1 (1-s_2) (H+\alpha G)t} G e^{i(1-s_1) (H+\alpha G)t} \rho e^{-i(H+\alpha G) t}   s_1 ~ds_1 ds_2 \\
    = - i t^3  &\int_0^1 \int_0^1 \int_0^1 \left( e^{i s_1 s_2 s_3 (H+\alpha G)t} G e^{i s_1 s_2 (1 - s_3) (H+\alpha G)t} \right)\nonumber\\
    &\qquad\times G e^{i s_1 (1-s_2) (H+\alpha G)t} G e^{i(1-s_1) (H+\alpha G)t} \rho e^{-i(H+\alpha G) t}   s_1^2 s_2 ~ds_1 ds_2 ds_3.
\end{align}
Our goal is to compute the 1-norm of the above expression at $\alpha = \alpha_{\star}$. We can do so using the triangle inequality to move the norm of the matrices in the integrand. To proceed we will set $\alpha = \alpha_{\star}$ and reduce the norm of the matrices in the integrand using submultiplicativity and unitary invariance of the Sch\"{a}tten 1-norm as 
\begin{align}
    &\norm{\left( e^{i s_1 s_2 s_3 (H+\alpha_{\star} G)t} G e^{i s_1 s_2 (1 - s_3) (H+\alpha_{\star} G)t} \right) G e^{i s_1 (1-s_2) (H+\alpha_{\star} G)t} G e^{i(1-s_1) (H+\alpha_{\star} G)t} \rho e^{-i(H+\alpha_{\star} G) t}}_1 \\
    \le &\norm{e^{i s_1 s_2 s_3 (H+\alpha_{\star} G)t} G}_1 \norm{e^{i s_1 s_2 (1 - s_3) (H+\alpha_{\star} G)t} G}_1 \norm{e^{i s_1 (1-s_2) (H+\alpha_{\star} G)t} G}_1 \norm{e^{i(1-s_1) (H+\alpha_{\star} G)t} \rho e^{-i(H+\alpha_{\star} G) t}}_1 \\
    \le &\norm{G}_1^3 \norm{\rho}_1 = \norm{G}_1^3 .
\end{align}
Similar computations can be carried out for the other three terms for the derivative acting on $(A)$ which yields the inequality
\begin{align}
    \frac{\alpha^3}{6} \int \norm{\partial_{\alpha} (A)}_1 dG &\le \frac{\alpha^3 t^3}{6} \int \int_0^1 \int_0^1 \int_0^1 \norm{G}_1^3 (s_1^2 s_2 + s_1^2 (1 - s_2) + s_1(1-s_1) + s_1) ~ds_1 ds_2 ds_3 dG \\
    &\le \frac{4}{6} (\alpha t)^3 \int \norm{G}_1^3 dG . \label{eq:remainder_bound_on_A}
\end{align}

Now that we have computed the norm of the derivative acting on term $(A)$ we only have terms $(B)$ through $(F)$ to compute. These can all be checked to satisfy the same bound on $(A)$ from Eq. \eqref{eq:remainder_bound_on_A}, and as there are six terms in total we have the inequality
\begin{align}
    \norm{R_{\Phi}(\rho)}_1 &\le 4 (\alpha t)^3 \int \norm{G}_1^3 dG,
\end{align}
which we note holds for all inputs $\rho$. Therefore our last problem is to compute the expected norm of our interaction to the third power. We will decompose $G = U_{\text{Haar}} D U_{\text{Haar}}^\dagger $ to get
\begin{align}
    \int \norm{G}_1^3 dG &= \int \norm{D}_1^3 ~ dU_{\text{Haar}} dD = \sum_{i = 1}^{2 \dim_S} \int \abs{d_i}^3 dd_i = 2 \dim_S \mathbb{E}(\abs{y}^3),
\end{align}

where $y$ is a normal Gaussian random variable. This straightforwardly evaluates to $\mathbb{E}( |y|^3) = 2\sqrt{ \frac{2}{\pi}}$, yielding the final inequality
\begin{align}
    \norm{R_{\Phi}(\rho)}_1 \le 16 \sqrt{\frac{2}{\pi}} \dim_S (\alpha t)^3 ,
\end{align}
thus completing the proof.
\end{proof}

\end{document}